\newcommand{\arxiv}[1]{\href{http://arxiv.org/abs/#1}{\texttt{arXiv:#1}}}
\newcommand*{\hatN}{\widehat{N}}
\newtheorem{theorem}{Theorem}[section]
\newtheorem{lemma}[theorem]{Lemma}
\def\nfrac#1#2{{\textstyle\frac{#1}{#2}}}
\def\dfrac#1#2{\lower0.15ex\hbox{\large$\frac{#1}{#2}$}}
\newcommand{\dvec}{\boldsymbol{d}}
\newcommand{\vecdvec}{\vec{\dvec}}
\newcommand{\Omegau}{\Omega(\dvec)}
\newcommand{\Omegad}{\Omega(\vecdvec)}
\newcommand{\Mu}{\mathcal{M}(\dvec)}
\newcommand{\Md}{\mathcal{M}(\vecdvec)}
\newcommand{\tauu}{\tau(\Mu,\varepsilon)}
\newcommand{\taud}{\tau(\Md,\varepsilon)}
\newcommand{\sdmax}{r_{\max}}
\newcommand{\sdmin}{r_{\min}}
\newcommand{\dmax}{d_{\max}}
\newcommand{\dmin}{d_{\min}}
\renewcommand{\p@enumii}{}
\renewcommand{\p@enumiii}{}
\title{The switch Markov chain\\ for sampling irregular graphs and digraphs\footnote{An earlier version of this work, for undirected graphs only, appeared in SODA 2015~\cite{SODA}.}}
\author{Catherine Greenhill\thanks{
Research supported by the Australian Research Council, Discovery Project DP140101519.}\\
\small School of Mathematics and Statistics\\[-0.8ex]
\small UNSW Sydney\\[-0.8ex]
\small NSW 2052, Australia\\
\small \tt c.greenhill@unsw.edu.au\\
\and
Matteo Sfragara\thanks{Research supported by NWO Gravitation Grant 024.002.003--NETWORKS. }\\
\small Mathematical Institute\\[-0.8ex]
\small Leiden University\\[-0.8ex]
\small P.O. Box 9512, 2300 RA Leiden, The Netherlands\\
\small \tt m.sfragara@math.leidenuniv.nl
}
\date{12 September 2017}
\begin{document}

\maketitle

\begin{abstract}
The problem of efficiently sampling from a set of (undirected, or directed) graphs 
with a given degree sequence has many applications. One approach to this problem uses
a simple Markov chain, which we call the switch chain, to perform the sampling.  
The switch chain is known to be rapidly mixing for regular degree sequences,
both in the undirected and directed setting.

We prove that the switch chain for undirected graphs is rapidly mixing for any
degree sequence with minimum degree at least 1 and with maximum degree $\dmax$
which satisfies $3\leq \dmax\leq \nfrac{1}{3}\, \sqrt{M}$, where $M$ is the sum of the degrees. 
The mixing time bound obtained is only a factor $n$ larger than that
established in the regular case, where $n$ is the number of vertices.
Our result covers a wide range of degree sequences, including power-law
density-bounded graphs with parameter $\gamma > 5/2$ and sufficiently many edges.

For directed degree sequences such that
the switch chain is irreducible, we prove that the switch chain is 
rapidly mixing when all in-degrees and out-degrees are positive
and bounded above by 
$\nfrac{1}{4}\, \sqrt{m}$, where $m$ is the number of arcs,
and not all in-degrees and out-degrees equal 1.
The mixing time bound obtained in the directed case is an order of $m^2$
larger than that established in the regular case.

\bigskip
\noindent \emph{Keywords}: Markov chain; graph; directed graph; degree sequence
\end{abstract}

\section{Introduction}\label{s:intro}

There are several approaches to the problem of sampling from a set of graphs
(or directed graphs) with a given degree sequence. In this paper we focus
on the Markov chain approach. Here the running time of the sampling
algorithm must be (deterministically) polynomially bounded but the output need not
be exactly uniform: however, the user can specify how far from
the uniform distribution the samples may be.
Other approaches to the problem of sampling graphs (or directed graphs) 
are discussed in Section~\ref{ss:history}.

The switch chain is a natural and well-studied Markov chain for sampling from a 
set of graphs with a given degree sequence.
Each move of the switch chain selects two distinct 
edges uniformly at random and attempts to replace these edges by  
a perfect matching of the four endvertices, chosen
uniformly at random.  The proposed move is rejected if the four
endvertices are not distinct or if a multiple edge
would be formed. 
We call each such move a \emph{switch}.  
The precise definitions of the transitions for the switch chain
for undirected and directed graphs are given at the start
of Sections~\ref{s:undirected} and~\ref{s:directed}, respectively.

Ryser~\cite{ryser} used switches to study 
the structure of 0-1 matrices. Markov chains based on switches have been introduced 
by Besag and Clifford~\cite{BC89} for 0-1 matrices (bipartite graphs), Diaconis and 
Sturmfels~\cite{DS98} for contingency tables and Rao, Jana and Bandyopadhyay~\cite{RJB} 
for directed graphs.

The switch chain is aperiodic and its transition matrix is symmetric. 
It is well-known that the switch chain is irreducible for any (undirected)
degree sequence: see~\cite{petersen,taylor}. 
Irreducibility for the directed chain is not guaranteed, see Rao et al.~\cite{RJB}. 
However, Berger and M{\" u}ller-Hanneman~\cite{BMH}
and LaMar~\cite{lamar,lamar2011} gave characterisations of directed degree 
sequences for which the
switch chain is irreducible.  In particular, the switch chain is irreducible for
regular directed graphs (see for example Greenhill~\cite[Lemma 2.2]{directed}).

In order for the switch chain to be useful for sampling, it must converge quickly
to its stationary distribution.  The rate of convergence of a Markov chain
$\mathcal{M}$ is captured by its \emph{mixing time} $\tau(\mathcal{M},\varepsilon)$,
which is the minimum number of steps that the Markov chain $\mathcal{M}$ must run before
its distribution is less than $\varepsilon$ from stationarity, in total
variation distance, from a worst-case starting state.
A Markov chain with state space $\Omega$ is said to be \emph{rapidly mixing}
if its mixing time can be bounded above by some polynomial in
$\log(|\Omega|)$ and $\log(\varepsilon^{-1})$.
See Section~\ref{ss:flow} for more details.  

Cooper, Dyer and Greenhill~\cite{CDG,CDG-corrigendum}
showed that the switch chain is rapidly mixing for regular undirected graphs.
Here the degree $d=d(n)$ may depend on $n$, the number of vertices.  The mixing time
bound is given as a polynomial in $d$ and $n$. 
Earlier, Kannan, Tetali and Vempala~\cite{KTV} investigated the mixing time 
of the switch
chain for regular bipartite graphs.  
Greenhill~\cite{directed} proved that the
switch chain for regular directed graphs (that is, $d$-in, $d$-out directed graphs)
is rapidly mixing, again for any $d=d(n)$.
Mikl{\' o}s, Erd{\H o}s and Soukup~\cite{MES} proved that the switch chain is
rapidly mixing on half-regular bipartite graphs; that is, bipartite degree sequences
which are regular for vertices on one side of the bipartition, but need not be
regular for the other.

A multicommodity flow argument~\cite{sinclair}
was used in each of~\cite{CDG,directed,KTV,MES} 
to prove an upper bound on the mixing time of the switch chain. 
In each case, regularity (or half-regularity) was only required for one 
lemma, which we will call the \emph{critical lemma}.  This is a counting
lemma which is used to bound
the maximum load of the flow 
(see~\cite[Lemma 4]{CDG},~\cite[Lemma 5.6]{directed} and~\cite[Lemma 6.15]{MES}).  

In Section~\ref{s:undirected} we consider the undirected switch chain
and prove the following theorem. This extends the
rapid mixing result from~\cite{CDG} to irregular degree sequences
which are not too dense.

Given a degree sequence $\dvec = (d_1,\ldots, d_n)$, write
$\Omegau$ for the set of all (simple, undirected) graphs with vertex
set $[n] = \{ 1,2,\ldots, n\}$ and degree sequence $\dvec$.
Recall that $\dvec$ is called \emph{graphical} when $\Omegau$
is nonempty. We restrict our attention to graphical sequences.
Write $\dmin$ and $\dmax$ for the minimum and maximum degree in $\dvec$,
respectively, and let $M = \sum_{j=1}^n d_j$ be the sum of the degrees.

\begin{theorem}
Let $\dvec = (d_1,\ldots, d_n)$ be a graphical degree sequence 
such that $\dmin\geq 1$ and $3\leq \dmax\leq \frac{1}{3}\, \sqrt{M}$.
The mixing time $\tauu)$ of the switch chain $\Mu$ with
state space $\Omegau$ satisfies
\[ \tauu \leq  \dmax^{14}\, M^9 \left( \nfrac{1}{2} M\log(M) +
 \log(\varepsilon^{-1})\right).\]
\label{main}
\end{theorem}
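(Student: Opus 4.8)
The plan is to bound the mixing time via a multicommodity flow argument in the style of Sinclair~\cite{sinclair}, following the overall strategy of~\cite{CDG}. First I would define, for each ordered pair $(G,G')$ of graphs in $\Omegau$, a canonical way to route a unit of flow from $G$ to $G'$ along paths in the switch-chain graph. The standard device is to look at the symmetric difference $G\triangle G'$, decompose it into a canonical collection of alternating circuits and paths (using a fixed ordering of vertices to make the decomposition deterministic), and then process these components one at a time, within each component switching edges of $G$ for edges of $G'$ so that after finitely many switches we have transformed $G$ into $G'$. The number of switches along such a path is $O(M)$, so path lengths are controlled. The flow on the pair $(G,G')$ is split into $1/|\Omegau|$ units, and we must choose how to break ``bad'' intermediate states (where a switch would create a multiple edge, or endpoints coincide) by inserting short detours; this is where the argument for irregular sequences genuinely differs from the regular case, because the extra ``defect'' edges used to repair bad switches can have endpoints of widely varying degree.

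The heart of the proof is to bound the \emph{load} of any transition $e = (Z,Z')$ of the chain, namely $\rho(e) = \frac{1}{\pi(Z)Q(Z,Z')}\sum_{(G,G')\ni e} f_{GG'}(e)\,|\gamma_{GG'}|$, where $\pi$ is uniform on $\Omegau$ and $Q(Z,Z') = \pi(Z)P(Z,Z')$. Since the stationary distribution is uniform and each transition probability is $\Theta(1/M^2)$ (there are $\binom{M/2}{2}$ choices of two edges and a bounded number of rematchings), this reduces to bounding, for each transition $e$, the number of pairs $(G,G')$ whose canonical path uses $e$. The key step is the \textbf{critical lemma}: one encodes each such pair by the transition $e=(Z,Z')$ together with a small amount of auxiliary information $L$ (the current circuit/path being processed, together with the $O(1)$ ``defect'' edges needed to describe the repair), and shows that the map $(G,G')\mapsto (Z,Z',L)$ is close to injective, in the sense that $G\oplus G'$ can be recovered from $(Z,Z',e,L)$ up to a bounded number of choices. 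Because $G\oplus G' = (Z\oplus \text{something})\oplus(\text{something involving }L)$ and $Z$ already lies in a set of size $|\Omegau|$, this bounds the number of pairs through $e$ by roughly $|\Omegau|$ times a polynomial factor in $\dmax$ and $M$ — the auxiliary data $L$ ranges over at most $\mathrm{poly}(\dmax,M)$ possibilities. Feeding this into Sinclair's bound $\tau(\varepsilon)\le \rho\,(\ln|\Omegau|^{-1}\pi_{\min}^{-1}+\ln\varepsilon^{-1})$ with $\rho = \max_e \rho(e)$, $\ell = O(M)$ the maximum path length, and $\ln\pi_{\min}^{-1}=\ln|\Omegau|\le \frac12 M\log M$, gives exactly the claimed form $\dmax^{14} M^9(\frac12 M\log M+\log\varepsilon^{-1})$ once the polynomial bookkeeping is done.

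The main obstacle — and the reason for the hypothesis $\dmax\le\frac13\sqrt M$ — is controlling the repair operations when a proposed switch is blocked. In the regular case, the number of ways to choose a repair edge is uniformly bounded; in the irregular case one must, when a switch $G\to Z$ would create a multiple edge or a vertex collision, route around it using one or two additional edges, and each such edge must be accounted for in the encoding $L$. Counting these repair configurations introduces factors that grow with $\dmax$ relative to the average degree $M/n$; the condition $\dmax\le\frac13\sqrt M$ (equivalently $\dmax^2\le M/9$) is precisely what is needed to ensure that, even in the worst case, the extra combinatorial freedom introduced by the repairs is dominated by the slack in $|\Omegau|$, so that the load stays polynomially bounded. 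I would therefore isolate a sequence of technical lemmas — a decomposition lemma for $G\triangle G'$, an analysis of the ``switch sequences'' that process one component, a classification of the bad cases and their repairs, and finally the critical counting lemma — and then assemble them exactly as in~\cite{CDG}, with the critical lemma being the only place where irregularity and the bound on $\dmax$ are used in an essential way. The condition $\dmax\ge 3$ is a mild nondegeneracy hypothesis ensuring the chain is not trivial (matchings, etc.) and that the relevant switch moves exist.
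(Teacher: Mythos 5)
Your outline correctly reproduces the architecture of the argument -- the canonical paths via a circuit decomposition of $G\triangle G'$, the encoding $L$ with near-injective recovery of $(G,G')$ from $(Z,Z',L,\psi)$, path length $O(M)$, $1/Q(e)=O(M^2)$ up to the $|\Omegau|$ normalisation, and $\log(1/\pi^*)\le\frac12 M\log M$ -- and this is indeed the same skeleton the paper uses. But there is a genuine gap exactly at the point you label the ``critical lemma''. You assert that ``the auxiliary data $L$ ranges over at most $\mathrm{poly}(\dmax,M)$ possibilities'' and that the hypothesis $\dmax\le\frac13\sqrt M$ ``is precisely what is needed to ensure that the extra combinatorial freedom introduced by the repairs is dominated by the slack in $|\Omegau|$'', but you give no argument for either claim, and the first is not literally true: the set of valid encodings consistent with $Z$ has size comparable to $|\Omegau|$ itself, and what must be proved is the ratio bound $|\mathcal{L}^*(Z)|/|\Omegau|\le 2M^6$. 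In the regular case this ratio is bounded by showing that any valid encoding can be converted to a defect-free graph by at most three ordinary switches, but that proof uses regularity to guarantee the existence of the edges needed for those switches, and it genuinely fails for irregular sequences. Everything else in the proof carries over from the regular case essentially verbatim; this counting lemma is the one place where a new idea is required, and your proposal does not supply one.

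The paper's resolution is a switching (double-counting) argument on the encodings themselves: one introduces a less constrained operation, a $3$-switch on a $6$-tuple of vertices (a circular $C_6$-swap), partitions $\mathcal{L}^*(Z)$ into classes $C(p,q)$ according to the numbers $p$ of $2$-defect and $q$ of $(-1)$-defect edges, and in three phases bounds each ratio $|C(p,q)|/|C(p',q')|$ by comparing a lower bound on the number of forward $3$-switches available from an encoding in $C(p,q)$ with an upper bound on the number of reverse operations landing there. The hypotheses enter concretely and quantitatively here: the forward counts take the form $M-O(\dmax^2)$, and $\dmax\le\frac13\sqrt M$ (i.e.\ $M\ge 9\dmax^2$) keeps these bounded below by a positive constant fraction of $M$, while $\dmax\ge 3$ is used repeatedly to absorb lower-order terms (it is not merely a nondegeneracy condition as you suggest). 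A secondary correction: the flow definition itself, including the handling of ``bad'' intermediate configurations, is taken over from the regular case without change, since it depends only on the symmetric difference; the place where irregularity bites is not the construction of the paths but the enumeration of encodings just described. Without some substitute for that enumeration argument, the proposal does not yield the stated bound.
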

The proof of this result given in an earlier version of this
paper~\cite{SODA} had a small gap in the proof. We have fixed
the gap here, while also improving the upper bound on $\dmax$
by a small constant factor. However, we have not made a serious
attempt to optimise the constants.

Theorem~\ref{main} covers many different degree sequences, for example:
\begin{itemize}
\item sparse
graphs with constant average degree and maximum degree a 
sufficiently small constant times $\sqrt{n}$, 
\item dense graphs with linear average degree and 
maximum degree a sufficiently small constant times $n$.
\item power-law density-bounded graphs with parameter $\gamma > 5/2$, 
when $M$ is sufficiently large. 
Such graphs were considered by Gao and Wormald~\cite{GW-power}: see in 
particular~\cite[Section 5]{GW-power},
where they prove that $\dmax = O(M^{2/5})$ for such graphs
(or in their notation, $\Delta = O(M_1^{2/5})$).
\end{itemize}
Since $M\leq \dmax n$, the mixing time bound given above is 
at most a factor of $n$ 
larger than that obtained in~\cite{CDG,CDG-corrigendum} in the regular case.

The directed case is similar, and is considered in Section~\ref{s:directed}.
To state our main result for the directed switch chain, we must
introduce some notation. 
For definitions about directed graphs not given here, see~\cite{digraphs}.

A \emph{directed degree sequence} is a sequence $\vecdvec$ of ordered pairs
of nonnegative integers
$\vecdvec = ((d_1^-,d_1^+),\ldots, (d_n^-,d_n^+))$, such that
$d_j^-$ is the in-degree and $d_j^+$ is the out-degree of vertex $j$,
for all $j\in [n]$. (We use the arrow over the symbol $\vecdvec$ so that our
notation distinguishes directed and undirected degree sequences.)
The directed degree sequence is \emph{digraphical} if there
exists a directed graph with these in-degrees and out-degrees.
Write $\Omegad$ for the set of all directed graphs
with vertex set $[n]$ such that the in-degree (respectively, out-degree)
of vertex $j$ is $d_j^-$ (respectively, $d_j^+$).
Let
\[ m = \sum_{j=1}^n d_j^- = \sum_{j=1}^n d_j^+\]
be the number of arcs in a directed graph with directed degree sequence $\vecdvec$.
We say that the directed degree sequence $\vecdvec$ is 
\emph{switch-irreducible} if the directed switch chain on $\Omegad$
is irreducible.

Finally, let $\sdmin$ and $\sdmax$ denote the minimum and
maximum \emph{semi-degree} of $\vecdvec$, defined by
\[ \sdmin = \min\{ d_1^-,\, d_1^+,\ldots, d_n^-,\, d_n^+\},\qquad 
   \sdmax = \max\{ d_1^-,\, d_1^+,\ldots, d_n^-,\, d_n^+\}.
\]
In~\cite{digraphs} these are denoted by $\dmin^0$ and $\dmax^0$,
respectively. However, we prefer the above notation 
as we must take powers of the maximum semi-degree. 

\begin{theorem}
Let $\vecdvec = ((d_1^-,d_1^+),\ldots, (d_n^-,d_n^+))$ be a digraphical
directed degree sequence which is switch-irreducible,
such that $\sdmin\geq 1$ and $2\leq \sdmax\leq \frac{1}{4}\, \sqrt{m}$.
Let $\taud$ denote the mixing time of the directed
switch chain $\Md$ with state space $\Omegad$. Then 
\[ \taud \leq  \dfrac{1}{4}\, \sdmax^{16}\, m^{11} \, 
             \left( m\log(m) + \log(\varepsilon^{-1})\right).\]
\label{main-directed}
\end{theorem}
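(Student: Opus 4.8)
The plan is to mirror the multicommodity flow strategy used for undirected graphs in Theorem~\ref{main}, adapting it to the directed setting exactly as~\cite{directed} adapts~\cite{CDG} in the regular case. First I would recall the standard reduction: by Sinclair's multicommodity flow bound~\cite{sinclair}, to bound $\taud$ it suffices to construct, for every ordered pair of directed graphs $G, G' \in \Omegad$, a flow along switch-chain transitions carrying one unit of demand from $G$ to $G'$, and to bound the maximum \emph{load} on any transition (edge of the chain) by a polynomial in $m$ and $\sdmax$. The overall mixing time bound then follows by combining this load bound with $\log|\Omegad| \le m \log m$ and the standard $\frac{1}{2}$-lazy/aperiodicity considerations, which is where the $m\log(m) + \log(\varepsilon^{-1})$ factor comes from.

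Next I would set up the flow itself. Following the established template, given $G$ and $G'$ one forms the symmetric difference $G \triangle G'$, which decomposes into a set of closed alternating walks (circuits alternating between arcs of $G \setminus G'$ and arcs of $G' \setminus G$, respecting orientations). One fixes a canonical pairing/ordering of these circuits (chosen uniformly at random from a suitable set, to symmetrise the argument), and then processes the circuits one at a time, within each circuit performing a sequence of switches that ``rotates'' the arcs of $G$ into those of $G'$ along that circuit. This produces a canonical path from $G$ to $G'$ through a sequence of intermediate directed graphs; spreading over the random choices gives the flow. The directed case requires care because a directed circuit may traverse a vertex multiple times and because switches must preserve both in- and out-degrees, so the elementary moves used to unwind a circuit are slightly more delicate than in the undirected case — but this machinery is already present in~\cite{directed} for the regular case and in the proof of Theorem~\ref{main} for irregular undirected sequences, and the combination is routine.

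The heart of the proof, and the place where the hypotheses $\sdmin \ge 1$, $\sdmax \le \frac{1}{4}\sqrt{m}$, and ``not all semi-degrees equal $1$'' enter, is the analogue of the \emph{critical lemma}. One must show that the number of pairs $(G, G')$ whose canonical path passes through a given transition $(Z, Z')$, weighted appropriately, is small. The standard device is to define an \emph{encoding}: from the transition $(Z,Z')$ and a small amount of auxiliary information (of size polynomial in $m$ and bounded powers of $\sdmax$), one reconstructs the pair $(G,G')$ by ``filling in'' the partially-completed circuit. The key estimate is that $G \in \Omegad$ lies within a bounded edit-distance of the recovered structure, so the number of completions is controlled by a product of local degree factors, each $O(\sdmax)$ or $O(m)$; here the condition $\sdmax \le \frac14\sqrt m$ guarantees that the error terms arising when a switch is blocked (because endpoints coincide or an arc already exists) are lower-order, and $\sdmin \ge 1$ together with the non-$1$-regularity condition ensures enough ``room'' to perform the switches and to handle the boundary circuits. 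I expect this counting/encoding lemma to be the main obstacle: one has to track exactly which bad events (repeated vertices along a circuit, pre-existing arcs, the two special half-edges at the ends of the current circuit) can occur, bound each by the appropriate power of $\sdmax$ or $m$, and verify that the total exponent matches the $\frac14 \sdmax^{16} m^{11}$ claimed. The remaining steps — assembling the load bound from the encoding, and plugging into Sinclair's inequality — are bookkeeping.
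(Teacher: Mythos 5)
Your outline of the architecture (Sinclair's flow bound, canonical paths via circuit decomposition of $G\triangle G'$, encodings, a critical counting lemma) matches the paper's, but the two genuinely non-routine ingredients are missing, and in both places you declare the difficulty away rather than resolving it. First, defining the flow at all in the irregular directed case is not "routine": processing a 2-circuit that is a directed triangle requires the existence of a \emph{useful neighbour or useful arc} for that triangle, and the proof of this in the regular case (\cite[Lemma 2.3]{directed}) uses regularity. The paper must re-prove this (Lemma~\ref{triangle}) by a direct appeal to LaMar's characterisation of non-switch-irreducible sequences (Lemma~\ref{characterisation}); without some such argument your flow is not even well defined for all switch-irreducible irregular sequences.

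Second, and more importantly, your description of the critical lemma is not the right estimate and would not close. What must be bounded is the ratio $|\mathcal{L}(Z)|/|\Omegad|$, where $\mathcal{L}(Z)$ is the set of valid encodings consistent with $Z$; the "product of local degree factors" reconstruction you describe is essentially the content of Lemma~\ref{effee} (which gives the $\sdmax^{16}$ factor and carries over from the regular case unchanged), not the hard part. In the regular case the ratio $|\mathcal{L}(Z)|/|\Omegad|$ is controlled by showing every encoding can be converted to a genuine digraph by at most three ordinary switches, and that proof relies heavily on regularity to guarantee the required arcs and non-arcs exist. The paper's replacement is a switching (double-counting) argument based on a new operation, the \emph{directed 3-switch}, applied in two phases to the stratification $C(p,q)$ of encodings by numbers of $2$-defect and $(-1)$-defect arcs, yielding ratio bounds such as $|C(p,q)|/|C(p-1,q)|\le \frac{16}{3}\sdmax^2$ and $|C(0,q)|/|C(0,q-1)|\le m^2$, which combine to $|\mathcal{L}(Z)|\le\frac18 m^8|\Omegad|$. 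The hypotheses $\sdmin\ge 1$ and $\sdmax\le\frac14\sqrt m$ enter precisely in verifying that enough directed 3-switches are available in the worst-case defect configurations (which must themselves be classified, via the five-defect-arc structure of Lemma~\ref{configuration}(ii)). Your proposal contains no mechanism playing this role, so the exponent $m^{11}$ in the theorem cannot be reached from what you have written.
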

Here, since $m\leq \sdmax n$, the upper bound on $\taud$ given in
Theorem~\ref{main-directed} is an order of $m^2$ larger 
than the bound of $50\, d^{25} n^9$ obtained in the directed 
$d$-regular case~\cite{directed}, where $\sdmin = \sdmax = d$.

The characterisations of 
Berger and M{\" u}ller-Hanneman~\cite{BMH} or LaMar~\cite{lamar} can be applied
to test whether a given directed degree sequence $\vecdvec$
is switch-irreducible. 
Note that $\sdmin \geq 1$ if and only if there are no sources and no
sinks (in any directed graph with degree sequence $\vecdvec$).
Note that the set of 1-regular directed graphs (with $\sdmin=\sdmax =1$)
corresponds to the set of all perfect matchings of $K_{n,n} - M$,
where $M$ is a specified perfect matching (forbidding loops in
the directed graph).  These can be sampled easily, so we assume
that $\sdmax\geq 2$.

We do not believe that the upper bounds given in Theorem~\ref{main} and
Theorem~\ref{main-directed} are tight.  It is likely that the true
mixing time in each case is much lower, perhaps $O(d n \log(dn))$
where $d$ is $\dmax$ or $\sdmax$, respectively. 
Establishing
this appears to be far beyond the reach of known proof techniques.

It is not known whether the corresponding counting problems
(exact evaluation of $|\Omegau|$ or $|\Omegad|$) are $\#P$-complete.
There are several results giving asymptotic enumeration formulae
for $|\Omegau|$, and some for $|\Omegad|$, under various conditions on the 
degree sequence:
see for example~\cite{BH,GMforbidden,ranX,McKW91} and references therein.

The rest of the paper is structured as follows. 
In Section~\ref{ss:history} we review some related results. 
The necessary Markov chain definitions are given in Section~\ref{ss:flow}.  
Then we consider the undirected switch chain
in Section~\ref{s:undirected}, where Theorem~\ref{main} is proved.
Finally, the directed switch chain is studied in 
Section~\ref{s:directed}, where Theorem~\ref{main-directed} is proved. 

\subsection{History and related work}\label{ss:history}

There are several approaches to the problem of sampling graphs 
(or directed graphs) with a given
degree sequence, though none is known to be efficient for all degree sequences.  
First we consider undirected graphs.
The configuration model of Bollob{\' a}s~\cite{bollobas}
gives expected polynomial time uniform sampling if $d_{\max} = O(\sqrt{\log n})$.
McKay and Wormald~\cite{McKW90} adapted the configuration model to give an algorithm which 
performs uniform sampling from
$\Omegau$ in expected polynomial time when $\dmax = O(M^{1/4})$.

Jerrum and Sinclair~\cite{JS90} used a construction of Tutte's~\cite{tutte}
 to reduce the problem
of approximately sampling from $\Omegau$ to the problem of 
approximately sampling perfect matchings
from an auxilliary graph. The resulting Markov chain algorithm is rapidly mixing
if the degree sequence $\dvec$ is \emph{stable}: see~\cite{JMS}.
Stable sequences are those in which small local changes to the degree sequences
do not greatly affect the size of $|\Omegau|$.
Specifically, a graphical degree sequence $\dvec$ is stable if
\[ (\dmax - \dmin + 1)^2 \leq 4 \dmin \left( n - \dmax + 1\right).\]
Many degree sequences which satisfy the conditions of Theorem~\ref{main} are
stable;  however, not all stable sequences satisfy the conditions
of Theorem~\ref{main}. (For example,
if $\dmin = n/9$ and $\dmax =4n/9$ then $\dvec$
is stable~\cite{JMS} but $\sqrt{M}\leq 2n/3$, which is
not large enough for Theorem~\ref{main}.)

We note that Barvinok and Hartigan~\cite{BH} showed that the 
adjacency matrix of a random element of $\Omegau$ is ``close'' 
to a certain ``maximum entropy matrix'', when the degree sequence is \emph{tame}.
The definition of tame depends on the maximum entropy matrix, but a sufficient
condition is that $d_{\min}\geq \alpha (n-1)$ and $d_{\max} \leq \beta (n-1)$
for some constants $\alpha,\beta > 0$.  Some degree sequences satisfying this
latter condition are stable sequences, 
and many of these degree sequences also satisfy the condition of Theorem~\ref{main}.  
It would be interesting to
explore further the connections between stable degree sequences,  tame degree 
sequences and the mixing rate of the switch Markov chain.

Steger and Wormald~\cite{SW} gave an easily-implementable algorithm
for sampling regular graphs, and proved that their algorithm performs
asymptotically uniform sampling in polynomial time when $d=o(n^{1/28})$
(where $d$ denotes the degree).
Kim and Vu~\cite{KV} gave a sharper analysis and established that $d=o(n^{1/3})$
suffices for efficient asymptotically uniform sampling. Bayati, Kim and Saberi~\cite{BKS}
extended Steger and Wormald's algorithm to irregular degree sequences, giving
polynomial-time asymptotically uniform sampling
when $d_{\max} = o(M^{1/4})$.  From this they constructed
a sequential importance sampling algorithm for $\Omegau$. 
A similar approach to that of~\cite{McKW90} was described and analysed
by Zhao~\cite{zhao} in a general combinatorial setting. 
Zhao showed that for sampling from $\Omegau$, 
when $d_{\max} = o(M^{1/4})$, his algorithm performs 
asymptotically uniform sampling in time $O(M)$.

There has been less work on the problem of sampling directed
graphs with a given degree sequence. However, by characterising
directed graphs as bipartite graphs which avoid a certain
perfect matching, it is enough to be able to efficiently sample bipartite
graphs with given degrees. The configuration model can be easily
adapted for bipartite graphs, and gives expected polynomial time 
sampling when
the product of the maximum in-degree and maximum out-degree is
$O(\log n)$, see~\cite{mckay-line}.
Jerrum, Sinclair and Vigoda~\cite{JSV} gave a polynomial-time algorithm
for sampling perfect matchings from a given bipartite graph.
Combining this with Tutte's construction~\cite{tutte} gives a polynomial-time
algorithm for sampling directed graphs with a given degree sequence.
Since the auxilliary graph produced by Tutte's construction has a quadratic
number of vertices,
this method of sampling directed graphs with given degrees
has running time bound $O^\ast(n^{22})$,
where $O^\ast(\cdot)$ ignores logarithmic factors
(by~\cite[Lemma 3.2]{JSV}).
Bayati, Kim and Saberi~\cite{BKS} showed that their sequential
importance sampler could
be adapted for bipartite graphs with given degrees.

For any directed degree sequence $\vecdvec$,
it follows from Rao et al.~\cite{RJB} that the state space
 $\Omegad$ is connected if the set of transitions of the directed
switch chain is expanded to also allow the reversal of directed 3-cycles.
In the bipartite setting, this corresponds to an adaptation of the
chain (for undirected graphs) which sometimes replaces 3 edges per step, rather than 2.
Erd{\H o}s et al.~\cite{EKMS} proved that this chain
is rapidly mixing for half-regular bipartite graphs with a forbidden matching,
where a bipartite graph is half-regular if one vertex bipartition is regular.
This gives an alternative Markov chain for sampling regular directed graphs,
for any degree, including dense regular directed graphs.

We conclude this section with two recent papers.
Erd{\H o}s, Mikl{\' o}s and Toroczkai~\cite{EMT} showed how to build on
the results of~\cite{CDG,SODA,MES}
using several ingredients including 
a Markov chain factorisation theorem by the same authors~\cite{EMT2015} and
a certain canonical decomposition of degree sequences due to 
Tyshkevich~\cite{tysh,tysh2}.
Their approach works by taking degree sequences for which rapid mixing of the switch
chain is known, and combining them in order to construct new degree sequences
for which they prove that the switch chain is also rapidly mixing.
Erd{\H o}s et al.\ also considered the directed setting, where
Theorem~\ref{main-directed} now provides a wider range of directed degree sequences
for which rapid mixing is known, extending the foundation of the method used
in~\cite{EMT}. This should further enlarge the set of more directed degree sequences 
for which the directed switch chain can be shown to be rapidly mixing.

Gao and Wormald~\cite{GW} have recently described an extremely efficient
expected polynomial time algorithm 
for exactly uniform sampling $d$-regular undirected graphs, where $d=o(\sqrt{n})$.
The expected running time of their algorithm is $O(d^3 n)$. 
They also describe a variant of their algorithm with expected running time
$O(dn)$ such that the total variation distance of the output distribution
from uniform is $o(1)$, again when $d=o(\sqrt{n})$.

\subsection{Markov chains and multicommodity flow}\label{ss:flow}

For Markov chain definitions not given here, see for example~\cite{Jerrum}.

Let $\mathcal{M}$ be a Markov chain with finite state space $\Omega$,
transition  matrix $P$ and stationary distribution $\pi$.  The
\emph{total variation distance} between two probability distributions
$\sigma$, $\sigma'$ on $\Omega$ is given by
\[ d_{\mathrm{TV}}(\sigma,\sigma') = \nfrac{1}{2} \sum_{x\in\Omega}
  |\sigma(x) - \sigma'(x)|.
\]
The \emph{mixing time} $\tau(\mathcal{M},\varepsilon)$ is defined by
\[ \tau(\mathcal{M},\varepsilon) = \max_{x\in\Omega}\,
           \min\{ T \geq 0 \mid   d_{\mathrm{TV}}(P^t_x,\pi)\leq \varepsilon
  \,\, \text{ for all $t \geq T$}\}
\]
where $P^t_x$ is the distribution of the state $X_t$ of $\mathcal{M}$ after
$t$ steps from the initial state $X_0=x$.

To bound the mixing time of the switch chain, we apply a multicommodity flow argument.
Suppose that $\mathcal{G}$ is the graph underlying a Markov
chain $\mathcal{M}$, so that $xy$ is an edge of $\mathcal{G}$ if and only
if $P(x,y)>0$. A \emph{flow} in $\mathcal{G}$ is a function 
$f:\mathcal{P}\rightarrow [0,\infty)$ such that
\[ \sum_{p\in\mathcal{P}_{xy}} f(p) = \pi(x)\pi(y) \quad \text{ for all }
 \,\, x,y\in\Omega,\,\, x\neq y.\]
Here $\mathcal{P}_{xy}$ is the set of all simple directed paths from $x$ to
$y$ in $\mathcal{G}$ and $\mathcal{P} = \cup_{x\neq y} \mathcal{P}_{xy}$.
Extend $f$ to a function on oriented edges by setting
$f(e) = \sum_{p\ni e} f(p)$,
so that $f(e)$ is the total flow routed through $e$.  Write $Q(e) = \pi(x) P(x,y)$
for the edge $e=xy$.  Let $\ell(f)$ be the \textsl{length} of the longest path with
$f(p) > 0$, and let $\rho(e) = f(e)/Q(e)$ be the \emph{load} of the edge $e$.
The \emph{maximum load} of the flow is
$\rho(f) = \max_e \rho(e)$.
Using Sinclair~\cite[Proposition 1 and Corollary 6']{sinclair},  the mixing time of
$\mathcal{M}$ can be bounded above by
\begin{equation}
\label{flowbound} 
\tau(\mathcal{M}, \varepsilon) \leq \rho(f)\ell(f)\left(\log(1/\pi^*) + \log(\varepsilon^{-1})\right)
\end{equation}
where $\pi^* = \min\{ \pi(x) \mid x\in\Omega\}$.

\section{The undirected switch chain}\label{s:undirected}

The (undirected) switch Markov chain $\Mu$ has state space
$\Omegau$ and transitions given by the following procedure:
from the current state $G\in\Omegau$, choose an unordered pair of 
two distinct non-adjacent edges uniformly at random, 
say $F=\{\{ x,y\},\{ z,w\}\}$, and choose a perfect matching $F'$
from the set of three perfect matchings of (the complete graph on)
$\{ x,y,z,w\}$, chosen uniformly at random.  If 
$F'\cap \left( E(G) \setminus F\right) = \emptyset$ then the next state
is the graph $G'$ with edge set $\left(E(G) \setminus F\right)\cup F'$,
otherwise the next state is $G'=G$.

Define $M_2 = \sum_{j=1}^n d_j(d_j-1)$. 
If $P(G,G')\neq 0$ and $G\neq G'$ then $P(G,G') = 1/\big(3 a(\dvec)\big)$, where
\begin{equation}
\label{ad}
 a(\dvec) = \binom{M/2}{2} - \dfrac{1}{2}\, M_2 
\end{equation}
is the number of unordered pairs of distinct nonadjacent edges in $G$.
This shows that the switch chain $\Mu$ is symmetric,
and it is aperiodic since $P(G,G)\geq 1/3$ for all
$G\in\Omegau$.

In this section we prove Theorem~\ref{main} by extending the multicommodity
flow argument given in~\cite{CDG} in the regular case.
The definition of the multicommodity flow given in~\cite[Section 2.1]{CDG} 
carries across
to irregular degree sequences without change.  This is because the flow from $G$ to $G'$
depends only on the symmetric difference $G\triangle G'$ of $G$ and $G'$, 
treated as a 2-edge-coloured
graph (with edges from $G\setminus G'$ coloured blue and edges from $G'\setminus G$ 
coloured red, say). 
The blue degree at a given vertex equals the red degree at that vertex, but 
in general the blue degree sequence will not be regular.
Hence the multicommodity flow definition given in~\cite{CDG}
is already general enough to handle irregular degree sequences.

The multicommodity flow is defined using a process which we now sketch.
Given $G, G'\in\Omegau$:
\begin{itemize}
\item Define a bijection from the set of blue edges incident at $v$ to the
set of red edges incident at $v$, for each vertex $v\in [n]$.
The vector of these bijections is called a \emph{pairing} $\psi$,
and the set of all possible pairings is denoted $\Psi(G,G')$.
\item  The pairing gives a canonical way to decompose the symmetric difference
$G\triangle G'$ into a sequence of \emph{circuits}, where each circuit is a
blue/red-alternating closed walk.
\item Each circuit is decomposed in a canonical way
into a sequence of simpler circuits
of two types: 1-\emph{circuits} and 2-\emph{circuits}.  A 1-circuit is
an alternating cycle in $G\triangle G'$, while a 2-circuit is
an alternating walk with one vertex of degree 4, the rest of degree 2,
consisting of two odd cycles which share a common vertex.
Each 1-circuit or 2-circuit has a designated \emph{start vertex}.
(The start vertex of a 2-circuit is the unique vertex of degree 4.)
An important fact is that the 1-circuits and 2-circuits are \emph{pairwise 
edge-disjoint}.
\item  Each 1-circuit or 2-circuit is processed in a canonical way to
give a segment of the canonical path $\gamma_\psi(G,G')$.
\end{itemize}
Thus, for each $(G,G')\in\Omegau^2$ and each $\psi\in\Psi(G,G')$,
we define a (canonical) path $\gamma_{\psi}(G,G')$ from $G$ to $G'$.
For full details see~\cite[Section 2.1]{CDG}.

Next, the value of the flow along this path is defined as follows:
\begin{equation} f(\gamma_{\psi(G,G')}) = \frac{1}{|\Omegau|^2\, |\Psi(G,G')|}
\label{f-def}
\end{equation}
and setting $f(p)=0$ for any other directed path from $G$ to $G'$.
Recall that $\mathcal{P}_{G,G'}$ is defined to be the set of all
directed paths from $G$ to $G'$, in the underlying digraph of $\Mu$.
Summing $f(p)$ over all $p\in\mathcal{P}_{G,G'}$ gives
$1/|\Omegau|^2 = \pi(G)\pi(G')$, as required for a valid flow.
This flow from $G$ to $G'$ has been equally shared among all paths in
$\{\gamma_{\psi}(G,G')\mid \psi\in\Psi(G,G')\}$.

\subsection{Analysing the flow}\label{s:analysis}

Now we show how to bound the load of the flow by adapting the analysis 
from~\cite{CDG}. 
Note that some proofs in~\cite{CDG} used the assumption $d=d(n)\leq n/2$,
since the general result follows by complementation.  This trick does
not work for irregular degree sequences, so we cannot make a similar
assumption here.

Given matrices $G$, $G'$, $Z\in\Omegau$, define the
\emph{encoding} $L$ of $Z$ (with respect to $G,G'$) by
\[ L + Z = G + G'\]
by identifying each of $Z$, $G$ and $G'$ with their symmetric 0-1 adjacency matrices.
Then $L$ is a symmetric $n\times n$ matrix with entries in $\{ -1, 0, 1,2\}$
and with zero diagonal.
Entries which equal $-1$ or 2 are called \emph{defect entries}.  Treating $L$ as an
edge-labelled graph with edges labelled $-1, 1, 2$ (and omitting edges corresponding to
zero entries), a \emph{defect edge} is an edge labelled $-1$ or $2$.
(In~\cite{CDG} these were called ``bad edges''.)
Specifically, we will refer to $(-1)$-\emph{defect edges} and to 
$2$-\emph{defect edges}.  A 2-defect edge is present in both $G$ and
$G'$ but is absent in $Z$, while a $(-1)$-defect edge is absent in
both $G$ and $G'$ but is present in $Z$.

We say that the \emph{degree} of vertex $v$ in $L$ is the sum of the
labels of the edges incident with $v$ (equivalently, the sum of the
entries in the row of $L$ corresponding to $v$).  By definition,
the degree sequence of $L$ equals $\dvec$.

Some proofs from~\cite{CDG,CDG-corrigendum} also apply in the
irregular case without any substantial change (after replacing
$d$ by $\dmax$). These proofs refer only to the symmetric difference 
and the process used to construct the multicommodity
flow (and none of them use the assumption $d\leq n/2$).  
 We state two of these results now. 

\begin{lemma}
Suppose that $G,G',Z,Z'\in\Omegau$ are such that $(Z,Z')$ is
a transition of the switch chain which lies on the canonical path 
$\gamma_\psi(G,G')$ for some $\psi\in\Psi(G,G')$.
Let $L$ be the encoding of $Z$ with respect to $(G,G')$.  Then the following statements
hold:
\begin{enumerate}
\item[\emph{(i)}]
\emph{(\cite[Lemma 1]{CDG})}\ From $(Z,Z')$, $L$ and $\psi$ it is possible to uniquely recover $G$ and $G'$.
\item[\emph{(ii)}] 
\emph{(\cite[Lemma 2]{CDG})}\
There are at most four defect edges in $L$.  The labelled
graph consisting of the defect edges in $L$ must form a subgraph of one
of the five possible labelled graphs shown in Figure~\ref{f:possible},
where \emph{``?''} represents a label which may be either $-1$ or $2$.
\end{enumerate}
\label{oldstuff}
\end{lemma}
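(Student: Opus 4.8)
The plan is to lean essentially entirely on the fact that the statement is quoted verbatim from~\cite[Lemmas 1 and 2]{CDG}, together with the observation already made in Section~\ref{s:undirected} that the construction of the multicommodity flow (the pairing $\psi$, the decomposition into circuits, 1-circuits and 2-circuits, and the processing of each 1- or 2-circuit into a path segment) is identical in the irregular case. So the task is not to reprove these results from scratch but to verify that the proofs in~\cite{CDG} never invoked regularity. First I would recall the definition of the encoding $L$ via $L+Z=G+G'$ and note that it depends only on the three adjacency matrices, not on whether the degree sequence is regular, so its entries lie in $\{-1,0,1,2\}$ and its degree sequence is $\dvec$ regardless. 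For part~(i), the recovery argument is purely local: given the transition $(Z,Z')$, one knows which switch was performed, hence which four vertices and which edges changed; combined with $L$ (which records $G+G'-Z$) and the pairing $\psi$ (which tells us how the blue and red edges at each vertex are matched up, and hence which ``half'' of $G\triangle G'$ belongs to $G$ versus $G'$), one reconstructs $G$ and $G'$ uniquely. None of this counts degrees, so it transfers directly.

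For part~(ii), the bound of at most four defect edges comes from analysing how the canonical path segment for a single 1-circuit or 2-circuit modifies the current graph $Z$: at each step along such a segment, only a bounded number of edges differ from the ``target'' configuration, and one tracks exactly where entries $-1$ or $2$ can be created. The key structural inputs are that the 1-circuits and 2-circuits are pairwise edge-disjoint (stated in the excerpt) and that each such sub-circuit is processed independently; the case analysis producing the five possible labelled defect-graphs in Figure~\ref{f:possible} is a finite check on the local picture near the active switch, again with no reference to the global degree sequence. So I would simply state that the proof of~\cite[Lemma 2]{CDG} applies mutatis mutandis, replacing any occurrence of the constant degree $d$ by $\dmax$ where an upper bound on a vertex degree was used (and noting that nowhere was a lower bound $d\geq$ something, or the complementation trick $d\leq n/2$, invoked in these particular two lemmas).

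The main obstacle — really the only thing requiring care — is the bookkeeping of the previous sentence: one must go through the proofs of Lemmas~1 and~2 of~\cite{CDG} line by line and confirm that every use of the degree is either (a) an appeal to the local combinatorics of a single switch or circuit, which is degree-agnostic, or (b) an upper bound on some $d_v$, which is safely replaced by $\dmax$. I do not expect any genuine mathematical difficulty here, since these two lemmas are exactly the ones the authors flagged as not using regularity; the work is verification rather than discovery. I would therefore present the proof as a short remark citing~\cite{CDG} and indicating the one substitution $d \mapsto \dmax$, rather than reproducing the figure-by-figure case analysis.
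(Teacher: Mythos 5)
Your proposal is correct and takes essentially the same route as the paper: the paper's proof of Lemma~\ref{oldstuff} is likewise just the observation that the proofs of \cite[Lemmas 1 and 2]{CDG} depend only on the symmetric difference $G\triangle G'$ and the (unchanged) flow construction, never on regularity or on the assumption $d\leq n/2$, so they carry over verbatim. Your additional sketches of the recovery argument for (i) and the local case analysis for (ii) are consistent with this and add nothing that conflicts with the paper's treatment.
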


\begin{proof}
The proof of these results for the regular case,~\cite[Lemma 1, Lemma 2]{CDG}, also 
applies here since we are using the same multicommodity flow definition, and the proof only 
uses
the symmetric difference of $G$ and $G'$.  (The assumption that $d\leq n/2$, which was
sometimes made in~\cite{CDG}, is not used in the proof of~\cite[Lemma 1, Lemma 2]{CDG}.)
\end{proof}

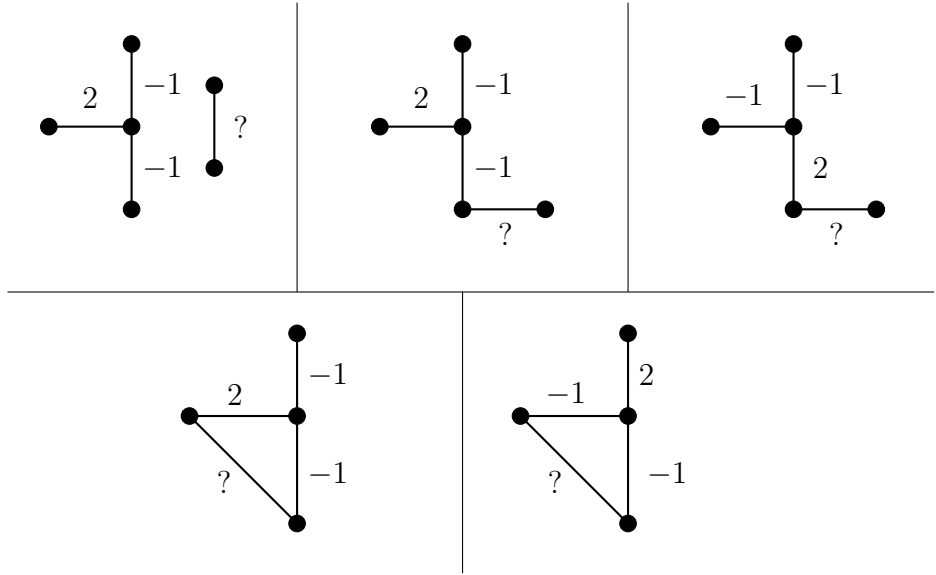
\begin{figure}[ht!]
\begin{center}
\begin{tikzpicture}[scale=1.1]
\draw [-,thick] (1.0,3) -- (1.0,5);
\draw [-,thick] (-0.0,4) -- (1.0,4);
\draw [-,thick]  (2.0,3.5) -- (2.0,4.5);
\draw [fill] (1.0,3) circle (0.1); \draw [fill] (1.0,5) circle (0.1);
\draw [fill] (1.0,4) circle (0.1); \draw [fill] (0.0,4) circle (0.1);
\draw [fill] (2.0,3.5) circle (0.1); \draw [fill] (2.0,4.5) circle (0.1);
\node [above] at (0.5,4.1) {$2$};
\node [right] at (1.0,4.5) {$-1$};
\node [right] at (1.0,3.5) {$-1$};
\node [right] at (2.1,4.0) {$?$};
\draw [-,thick] (4,4) -- (5,4);
\draw [-,thick] (5,5) -- (5,3) -- (6,3);
\draw [fill] (4,4) circle (0.1); \draw [fill] (5,4) circle (0.1);
\draw [fill] (5,5) circle (0.1); \draw [fill] (5,3) circle (0.1);
\draw [fill] (6,3) circle (0.1); 
\node [above] at (4.5,4.1) {$2$};
\node [right] at (5.0,4.5) {$-1$};
\node [right] at (5.0,3.5) {$-1$};
\node [right] at (5.3,2.7) {$?$};
\draw [-,thick] (8.0,4) -- (9.0,4);
\draw [-,thick] (9.0,5) -- (9.0,3) -- (10.0,3);
\draw [fill] (8.0,4) circle (0.1); \draw [fill] (9.0,4) circle (0.1);
\draw [fill] (9.0,5) circle (0.1); \draw [fill] (9.0,3) circle (0.1);
\draw [fill] (10.0,3) circle (0.1); 
\node [above] at (8.4,4.1) {$-1$};
\node [right] at (9.0,4.5) {$-1$};
\node [right] at (9.1,3.5) {$2$};
\node [right] at (9.3,2.7) {$?$};
\draw [-,thick] (3,1.5) -- (3,-0.8) -- (1.7,0.5) -- (3,0.5);
\draw [fill] (3,1.5) circle (0.1); \draw [fill] (3,-0.8) circle (0.1);
\draw [fill] (1.7,0.5) circle (0.1); \draw [fill] (3,0.5) circle (0.1);
\node [above] at (2.25,0.5) {$2$};
\node [right] at (3.0,-0.2) {$-1$};
\node [right] at (3.0,1.0) {$-1$};
\node [right] at (1.9,-0.3) {$?$};
\draw [-,thick] (7,1.5) -- (7,-0.8) -- (5.7,0.5) -- (7,0.5);
\draw [fill] (7,1.5) circle (0.1); \draw [fill] (7,-0.8) circle (0.1);
\draw [fill] (5.7,0.5) circle (0.1); \draw [fill] (7,0.5) circle (0.1);
\node [above] at (6.25,0.5) {$-1$};
\node [right] at (7.1,-0.2) {$-1$};
\node [right] at (7.0,1.0) {$2$};
\node [right] at (5.9,-0.3) {$?$};
\draw [-] (-0.5,2) -- (10.7,2);
\draw [-] (5,2) -- (5,-1.4);
\draw [-] (7,2) -- (7,5.5);
\draw [-] (3,2) -- (3,5.5);
\end{tikzpicture}
\caption{The five possible configurations of four defect edges}
\label{f:possible}
\end{center}
\end{figure}

The next result collects together some further useful results about
encodings.  

\begin{lemma}
Suppose that the conditions of Lemma~\ref{oldstuff} hold.
Let $x,y,z\in [n]$ be distinct vertices.  
\begin{itemize}
\item[\emph{(i)}] If $L(x,y)=2$ then $d_x\geq 2$, $d_y\geq 2$.  
\item[\emph{(ii)}] If $L(x,y)=2$ and $L(y,z)=2$ then $d_y\geq 4$.
\item[\emph{(iii)}]  If $L(x,y)=2$ and $L(y,z)=-1$ then $d_y\geq 3$.
\end{itemize}
\label{structure}
\end{lemma}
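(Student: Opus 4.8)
The plan is to read all three bounds off the construction of the multicommodity flow from~\cite{CDG}, using two features of its canonical paths. First, along $\gamma_\psi(G,G')$ only edges incident to a non-isolated vertex of $G\triangle G'$ are ever altered: each segment of the path processes a circuit, all of whose vertices are non-isolated in $G\triangle G'$, and the switches used to process it act only on edges spanned by the vertices of that circuit. Hence at every state $Z$ on the path, $Z$ agrees with $G$ (and with $G'$) on every edge meeting a vertex isolated in $G\triangle G'$, so both endpoints of every defect edge of the encoding $L$ are non-isolated in $G\triangle G'$. Second, at each non-isolated vertex $v$ of $G\triangle G'$ the number of incident blue edges (those in $G\setminus G'$) equals the number of incident red edges (those in $G'\setminus G$), and this common value is at least $1$.

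Part (i) is then immediate: since $L(x,y)=2$, the edge $xy$ lies in $E(G)\cap E(G')$ and is a defect edge, so $x$ is non-isolated in $G\triangle G'$ and thus incident to some edge $e$ of $G\triangle G'$. As $xy\in E(G)\cap E(G')$ it is neither blue nor red, so $e\ne xy$; if $e$ is blue then $e$ and $xy$ are two distinct edges of $G$ at $x$, and if $e$ is red they are two distinct edges of $G'$ at $x$. Either way $d_x\ge 2$, and by symmetry $d_y\ge 2$.

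For part (ii), $y$ is incident to two distinct $2$-defect edges $xy$ and $yz$. By Lemma~\ref{oldstuff}(ii) the defect edges of $L$ form a subgraph of one of the five labelled graphs in Figure~\ref{f:possible}, and inspecting these one checks that any vertex incident to two edges labelled $2$ is incident to no edge labelled $-1$. Hence $y$ is incident to no $(-1)$-defect edge, so $L(y,w)\ge 0$ for every $w$; since $L$ has degree sequence $\dvec$,
\[ d_y=\sum_{w}L(y,w)\ \ge\ L(y,x)+L(y,z)=4 . \]

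For part (iii), by (i) we have $d_y\ge 2$, and the plan is to exclude $d_y=2$. If $d_y=2$, the second fact forces $y$ to be incident to exactly one blue and one red edge, so $y$ has degree $2$ in $G\triangle G'$ and its remaining edge of $G$ is the $2$-defect edge $xy$; in particular $y$ is incident to only one blue edge, hence is not the (degree-$4$) start vertex of any $2$-circuit. To derive a contradiction I would prove a refinement of Lemma~\ref{oldstuff}(ii): a $2$-defect entry of $L$ records the temporary deletion by a switch of an edge of $G\cap G'$, and in the construction of~\cite{CDG} this can happen only while processing a $2$-circuit, at a vertex incident to two blue edges of that circuit, hence at its degree-$4$ start vertex. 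This contradicts the previous sentence, so $d_y\ge 3$. This refinement is the main obstacle: proving it means returning to the circuit-by-circuit bookkeeping of~\cite{CDG}, in particular the processing of $2$-circuits and the degenerate case in which a shortcut edge to be inserted already lies in the current graph, since those are the only places where an edge of $G\cap G'$ is ever removed.
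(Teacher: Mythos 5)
Your parts (i) and (ii) are sound. Part (i) is essentially the paper's argument in different packaging: the paper gets non-isolation of the endpoints of a $2$-defect edge directly from the fact that such an edge is either a shortcut edge of a $2$-circuit or an odd chord of a $1$-circuit, whereas you invoke the (true, but asserted rather than proved) general principle that the canonical path never touches an edge meeting a vertex isolated in $G\triangle G'$. Part (ii) is a genuinely different and cleaner route: instead of locating $y$ on two edge-disjoint circuits and counting four distinct edges of $G$ at $y$ as the paper does, you observe from Figure~\ref{f:possible} that a vertex incident with two edges labelled $2$ carries no $(-1)$-defect edge, so every entry of row $y$ of $L$ is nonnegative and the row sum $d_y$ is at least $2+2=4$. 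This is correct, uses only Lemma~\ref{oldstuff}(ii) and the fact that $L$ has degree sequence $\dvec$, and avoids the circuit bookkeeping entirely.

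Part (iii), however, contains a genuine gap. The ``refinement'' on which your contradiction rests --- that a $2$-defect entry can only be created while processing a $2$-circuit, and only at its degree-$4$ start vertex --- is false. As the paper's own proof of part (i) records, a $2$-defect edge also arises as an \emph{odd chord} of a $1$-circuit: an edge of $G\cap G'$ joining two vertices of an alternating cycle in $G\triangle G'$, and such a vertex typically has exactly one blue and one red edge. So the configuration you are trying to rule out ($d_y=2$, one blue and one red edge at $y$) is perfectly compatible with $y$ being an endpoint of a $2$-defect edge, and the contradiction never materialises. (Even in the shortcut-edge case, at most one endpoint of the shortcut edge is the degree-$4$ start vertex.) Note also that the row-sum trick from your part (ii) cannot be substituted here: Figure~\ref{f:possible} permits a vertex with one $2$-defect edge and two $(-1)$-defect edges, whose defect labels sum to $0$. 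What is actually needed is the paper's adaptation of its argument for (ii): the $2$-defect edge $\{x,y\}$ places $y$ on some circuit $C_1$ and the $(-1)$-defect edge $\{y,z\}$ places $y$ on some circuit $C_2$; since the circuits are pairwise edge-disjoint and no defect edge lies in $G\triangle G'$, the vertex $y$ is incident in $G$ with an edge of $C_1$, an edge of $C_2$, and the edge $\{x,y\}\in G\cap G'$, giving $d_y\geq 3$.
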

\begin{proof}
It follows from the definition of the multicommodity flow given in~\cite{CDG}
that a 2-defect edge $\{ x,y\}$ (with $L(x,y)=2$) can only arise in
two cases: 
\begin{itemize}
\item[(a)]
$\{x,y\}$ is a \emph{shortcut edge} 
which is present in $G, G'$ but which is
absent in $Z$.  (See~\cite[Figure 4]{CDG}.)
In this case, $x$ and $y$ are vertices on some
2-circuit, which is an alternating blue/red walk in the symmetric difference
$G\triangle G'$.  Hence both $x$ and $y$ have degree at least two
in $G$.
\item[(b)] $\{ x,y\}$ is an \emph{odd chord} which is present in $G, G'$
but which is absent in $Z$.    
(See the section ``Processing a 1-circult'' in~\cite{CDG}.) 
In this case, $x$ and $y$ are vertices
on some 1-circuit, which is an alternating blue/red walk in the
symmetric difference $G\triangle G'$.  
Again, this shows that both $x$ and $y$ have degree at least two in $G$.
\end{itemize}
This proves (i).  

Next, if $y$ is incident with two edges
of defect 2 then it must be that 
one is an odd chord for a 1-circuit $C_1$ and
one is a shortcut edge for a 2-circuit $C_2$,
 where $y$ does not play the
role of $x_0$ for $C_1$.  
Then $y$ is incident in $G$ with an edge of $C_1$, an edge of $C_2$
and the two edges $\{ x,y\}$, $\{y,z\}$ which are 2-defect edges in $L$.
Since $C_1$ and $C_2$ are edge-disjoint and no defect edge belongs
to $G\triangle G'$, it follows that $d_y\geq 4$, proving (ii).

We may adapt this argument to prove (iii), 
noting that a $(-1)$-defect edge may only
arise from a shortcut edge or an odd chord which is
absent in $G$ and $G'$ and present in $Z$.  
\end{proof}

We say that an encoding $L$ is \emph{consistent} with $Z$ 
if $L+Z$ only takes entries in $\{0,1,2\}$.
Say that an encoding is \emph{valid} if it satisfies the conclusions
of Lemma~\ref{oldstuff}(ii), and that a valid encoding is \emph{good}
if it also satisfies the conclusion of Lemma~\ref{structure}. 
Let $\mathcal{L}(Z)$ be the set of valid encodings which are consistent with
$Z$, and let $\mathcal{L}^*(Z)$ be the set of good encodings which are
consistent with $Z$.  In~\cite{CDG} the set $\mathcal{L}(Z)$ was
studied, but we can obtain a slightly better upper bound if we
work with the smaller set $\mathcal{L}^*(Z)$.

\begin{lemma}
\emph{(\cite[Lemma 5]{CDG} and~\cite[Lemma 1]{CDG-corrigendum})}\
The load $f(e)$ on the transition $e=(Z,Z')$ satisfies
\[ f(e)\leq  \dmax^{14}\,{\frac{|\mathcal{L}^*(Z)|}{|\Omegau|^2}}.\]
\label{fbound}
\end{lemma}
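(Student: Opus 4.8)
The plan is to bound the load $f(e) = \sum_{p \ni e} f(p)$ on a fixed transition $e=(Z,Z')$ by counting the paths that can pass through it. By Lemma~\ref{oldstuff}(i), a pair $(G,G')$ whose canonical path $\gamma_\psi(G,G')$ uses $(Z,Z')$ is uniquely determined by the triple $(Z, Z', L, \psi)$, where $L$ is the encoding of $Z$ with respect to $(G,G')$ and $\psi$ is the pairing. Since $e=(Z,Z')$ is fixed, the number of relevant $(G,G',\psi)$ triples is at most $\sum_{L} |\Psi(G,G')|$, where the sum is over encodings $L$ that could actually occur. By Lemma~\ref{structure}, every such $L$ lies in $\mathcal{L}^*(Z)$ (the good encodings consistent with $Z$), which is where the improvement over~\cite{CDG} comes from: we replace the cruder set $\mathcal{L}(Z)$ by $\mathcal{L}^*(Z)$. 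Plugging $f(\gamma_\psi(G,G')) = 1/(|\Omegau|^2\,|\Psi(G,G')|)$ from~\eqref{f-def} into $f(e) = \sum_{(G,G',\psi)} f(\gamma_\psi(G,G'))$, the $|\Psi(G,G')|$ factors cancel, leaving $f(e) \leq |\mathcal{L}^*(Z)|/|\Omegau|^2$ times the number of ways to ``fill in'' the data not recovered directly.

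More precisely, the argument of~\cite[Lemma 5]{CDG} (with the correction in~\cite[Lemma 1]{CDG-corrigendum}) shows that, beyond the encoding $L$ itself, one needs a bounded amount of extra combinatorial information to pin down $(G,G')$ and the position along the path — for instance, which of the (at most four) defect edges play which structural role, and a few bounded choices arising in the canonical decomposition of circuits into 1-circuits and 2-circuits. Each such choice contributes at most a constant or a small power of $\dmax$; the factor $\dmax^{14}$ is the product of all these contributions. First I would recall verbatim the bookkeeping from~\cite{CDG,CDG-corrigendum}: the five defect configurations of Figure~\ref{f:possible}, the identification of the start vertex of each circuit, and the reconstruction of the ``?''-labels. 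The point to stress is that none of this bookkeeping uses regularity or the bound $d \leq n/2$; it depends only on the symmetric difference $G \triangle G'$ and the flow-construction process, exactly as in Lemma~\ref{oldstuff}. Hence the counting goes through verbatim after replacing every occurrence of the degree $d$ by $\dmax$.

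The one genuine point requiring care — and the main obstacle — is verifying that the power of $\dmax$ really is $14$ and not larger once the regular degree $d$ is replaced throughout by $\dmax$. In~\cite{CDG} several of the factors arise as bounds on the number of choices for a vertex adjacent (in $Z$, $G$, or $G'$) to an already-identified vertex; in the regular case each such choice is bounded by $d$, and in the irregular case by $\dmax$, so the exponent is preserved. But one must check each factor individually: the definition of the encoding guarantees the degree sequence of $L$ equals $\dvec$, so a vertex identified as having a defect edge of a given type has a bounded residual degree, and the number of its neighbours is at most $\dmax$. Having walked through the list and confirmed that the total exponent is $14$, the bound $f(e) \leq \dmax^{14}\,|\mathcal{L}^*(Z)|/|\Omegau|^2$ follows. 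I would therefore present the proof as: (1) invoke Lemma~\ref{oldstuff}(i) to reduce to counting $(G,G',\psi,L)$; (2) invoke Lemma~\ref{structure} to restrict $L$ to $\mathcal{L}^*(Z)$; (3) cite the choice-counting of~\cite[Lemma 5]{CDG},~\cite[Lemma 1]{CDG-corrigendum}, noting it is insensitive to irregularity and yields the factor $\dmax^{14}$; (4) cancel the $|\Psi(G,G')|$ factors using~\eqref{f-def} to conclude.
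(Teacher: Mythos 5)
Your overall skeleton matches the paper's proof: both reduce to counting pairs $(L,\psi)$ with $L\in\mathcal{L}^*(Z)$, defer the quantitative work to \cite[Lemma 5]{CDG} and \cite[Lemma 1]{CDG-corrigendum}, and observe that those arguments never use regularity or the assumption $d\le n/2$. The restriction from $\mathcal{L}(Z)$ to the smaller set $\mathcal{L}^*(Z)$ via Lemma~\ref{structure} is also exactly what the paper does.

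However, you misattribute the source of the factor $\dmax^{14}$, and a literal expansion of your sketch would fail at that point. By Lemma~\ref{oldstuff}(i) the pair $(G,G')$ is recovered \emph{uniquely} from $(Z,Z')$, $L$ and $\psi$; no further ``fill-in'' information about the structural roles of defect edges or about choices in the circuit decomposition is needed, so that is not where $\dmax^{14}$ enters. Moreover, your intermediate claim that the number of relevant triples $(G,G',\psi)$ is at most $\sum_{L}|\Psi(G,G')|$ is precisely the step that requires the factor $\dmax^{14}$: the pairings $\psi$ for which $e\in\gamma_\psi(G,G')$ do not form a set of size at most $|\Psi(G,G')|$ in any obvious way. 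The actual mechanism (from the corrigendum, reproduced in the paper) is to colour the symmetric difference $L\triangle Z=G\triangle G'$ ``yellow-green'' (yellow: label $1$ in $L$ and absent from $Z$; green: present in $Z$ and label $0$ in $L$) and to show, using Lemma~\ref{oldstuff}(ii), that any relevant $\psi$ pairs yellow edges to green edges except for at most $14$ bad pairs. Writing $\Psi'(L)$ for the set of such pairings, one bounds $\sum_{\psi\in\Psi'(L)}1/|\Psi(G,G')| = |\Psi'(L)|/|\Psi(G,G')|\le\dmax^{14}$, and summing over $L\in\mathcal{L}^*(Z)$ gives the stated inequality. In short, $\dmax^{14}$ is the ratio of the number of ``almost yellow-green'' pairings to $|\Psi(G,G')|$, not a reconstruction multiplicity; your step (4), ``cancel the $|\Psi(G,G')|$ factors,'' conceals exactly this point.
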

\begin{proof}
We adapt the proof given for the regular case in~\cite[Lemma 5]{CDG} 
and~\cite[Lemma 1]{CDG-corrigendum}, noting that this proof did not
require the assumption $d\leq n/2$.
An outline of the argument is given below.

For a given transition $e=(Z,Z')$, recall that $f(e)$ is the sum of
$f(p)$ over all paths $p\in\mathcal{P}$ which contain $e$.
That is,
\[
 |\Omegau|^2 \, f(e) = \sum_{(G,G')}\, \sum_{\psi\in\Psi(G,G')}\,
  \mathbf{1}(e\in\gamma_\psi(G,G'))\, \frac{1}{|\Psi(G,G')|},
\]
using (\ref{f-def}) and the definition of $f(e)$. Now, by Lemma~\ref{oldstuff}(i), each pair
$(G,G')$ such that $e\in\gamma_{\psi(G,G')}$ can be uniquely
reconstructed from the encoding $L\in\mathcal{L}^*$ defined by $L+Z = G+G'$.   
Furthermore,
given $Z$ and $L$ we can construct a ``yellow-green'' colouring of the
symmetric difference $L\triangle Z = G\triangle G'$,
where yellow edges have label 1 under $L$ and do not occur in $Z$, and green
edges occur in $Z$ and have label 0 under $L$.  Suppose that $e\in\gamma_\psi(G,G')$ for
some pair of states $(G,G')$ and some pairing $\psi$.  
In~\cite{CDG-corrigendum},
using Lemma~\ref{oldstuff}(ii), it was proved that $\psi$ will pair yellow
edges to green edges almost everywhere, with at most 14 ``bad pairs'' where
$\psi$ maps yellow to yellow, or green to green.  
This proof also holds in the irregular setting.

Therefore, if $\Psi'(L)$ is the set of pairings of $L\triangle Z$ with
at most 14 bad pairs then
\[  |\Omegau|^2\, f(e)    \leq \sum_{L\in\mathcal{L}^\ast(Z)}\, \sum_{\psi\in\Psi'(L)} \,
  \,  \frac{1}{|\Psi(G,G')|},
\]
since each $(L,\psi)\in \mathcal{L}^*(Z)\times \Psi'(L)$ contributes at most one pair $(G,G')$
with $e\in \gamma_{\psi}(G,G')$, and each such pair $(G,G')$ is included in the
sum over $(L,\psi)$.
Finally, in~\cite{CDG-corrigendum} it was shown that, in the regular setting
\[ |\Psi'(L)|\leq \dmax^{14}\, |\Psi(G,G')|
\]
for any $L\in\mathcal{L}^*(Z)$.  The same argument works for irregular degree
sequences, noting that the original argument did not use the condition $d\leq n/2$.
Combining these last two displayed equations proves that
\[ |\Omegau|^2\, f(e) \leq \dmax^{14}\, |\mathcal{L}^*(Z)|,
\]
as required.
\end{proof}

The switch operation can be extended to encodings 
in the natural way: each switch reduces two
edge labels by one and increases two edge labels by one, without changing the degrees.
It was shown in~\cite[Lemma 3]{CDG} that from any valid encoding,
one could obtain a graph (with no defect edges)
by applying a sequence of at most three switches.  
In~\cite[Lemma 4]{CDG} we used this fact to bound the ratio
$|\mathcal{L}(Z)|/|\Omegau|$ for regular degree sequences.  
This provided an upper bound for the flow $f(e)$ through the transition $e=(Z,Z')$
(as in Lemma~\ref{fbound}, above).  Recall that we now seek an
upper bound on the slightly smaller ratio $|\mathcal{L}^*(Z)|/|\Omegau|$.

The proof of~\cite[Lemma 3]{CDG} uses regularity to prove 
the existence of certain edges exist which are
needed in order to find switches to remove the defect edges.
This argument fails for irregular degree sequences.  
However, any argument which gives an
upper bound on $|\mathcal{L}^*(Z)|/|\Omegau|$ will do. 
So we will instead consider
a slightly more complicated operation than a switch, which we call a 
\emph{3-switch} (this operation is called a
``circular $C_6$-swap'' in~\cite{EKMS}).  
(This approach of considering
more complicated operations in order to obtain more freedom has been used to improve
asymptotic enumeration results, for example in~\cite{McKW91}.)
We remark that this new operation is only used to give an upper bound 
on the ratio $|\mathcal{L}^*(Z)|/|\Omegau|$, and is not related to the switches
performed by the Markov chain $\Mu$.


A 3-switch is described by a 6-tuple $(a_1,b_1,a_2,b_2,a_3,b_3)$ of distinct vertices
such that $\{a_1,b_1\}$, $\{a_2,b_2\}$, $\{a_3,b_3\}$ are all edges and $\{a_2,b_1\}$, $\{a_3,b_2\}$, $\{a_1,b_3\}$ are
all non-edges. The 3-switch deletes the three edges $\{a_1,b_1\}$, $\{a_2,b_2\}$, $\{a_3,b_3\}$ from the
edge set and replaces them with $\{a_2,b_1\}$, $\{a_3,b_2\}$, $\{a_1,b_3\}$, as shown in Figure~\ref{3-switch}.

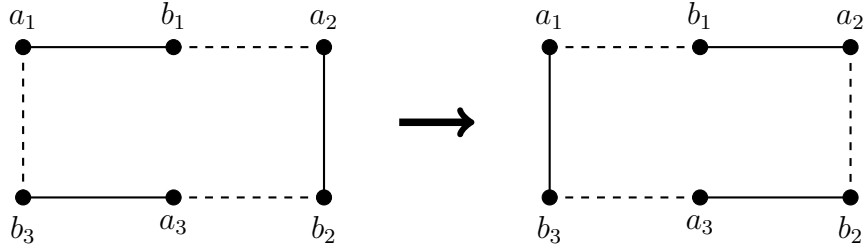
\begin{figure}[ht!]
\begin{center}
\begin{tikzpicture}
\draw [-,thick,dashed] (0,0) -- (0,2);
\draw [-,thick,dashed] (2,2) -- (4,2);
\draw [-,thick,dashed] (4,0) -- (2,0);
\draw [-,thick] (0,2) -- (2,2);
\draw [-,thick] (4,2) -- (4,0);
\draw [-,thick] (2,0) -- (0,0);
\draw [fill] (0,0) circle (0.1);
\draw [fill] (2,0) circle (0.1);
\draw [fill] (4,0) circle (0.1);
\draw [fill] (0,2) circle (0.1);
\draw [fill] (2,2) circle (0.1);
\draw [fill] (4,2) circle (0.1);
\node [below] at (0,-0.1) {$b_3$};
\node [above] at (0.0,2.1) {$a_1$};
\node [above] at (2,2.1) {$b_1$};
\node [above] at (4,2.1) {$a_2$};
\node [below] at (4,-0.1) {$b_2$};
\node [below] at (2.0,-0.1) {$a_3$};
\draw [->,line width = 1mm] (5,1) -- (6,1);
\draw [-,thick] (7,0) -- (7,2);
\draw [-,thick] (9,2) -- (11,2);
\draw [-,thick] (11,0) -- (9,0);
\draw [-,thick,dashed] (7,2) -- (9,2);
\draw [-,thick,dashed] (11,2) -- (11,0);
\draw [-,thick,dashed] (9,0) -- (7,0);
\node [below] at (7,-0.1) {$b_3$};
\node [above] at (7,2.1) {$a_1$};
\node [above] at (9,2.1) {$b_1$};
\node [above] at (11,2.1) {$a_2$};
\node [below] at (11,-0.1) {$b_2$};
\node [below] at (9,-0.1) {$a_3$};
\draw [fill] (7,0) circle (0.1);
\draw [fill] (7,2) circle (0.1);
\draw [fill] (9,2) circle (0.1);
\draw [fill] (11,2) circle (0.1);
\draw [fill] (11,0) circle (0.1);
\draw [fill] (9,0) circle (0.1);
\end{tikzpicture}
\caption{A 3-switch}
\label{3-switch}
\end{center}
\end{figure}
The 3-switch can also be extended to encodings.

Let $C(p,q)$ be the set of encodings in $\mathcal{L}^\ast(Z)$ with 
precisely $p$ defect edges labelled 2 and precisely $q$ defect edges
labelled $-1$, 
for $p\in\{0,1,2\}$ and $q\in\{0,1,2,3\}$. 
Then $\Omegau= C(0,0)$
and 
\[ \mathcal{L}^\ast(Z) = \bigcup_{p=0}^2\,\bigcup_{q=0}^3\, C(p,q),\] 
where this union is disjoint.
(Note that $C(2,3)=\emptyset$, by Lemma~\ref{oldstuff}(ii).)

If $L\in C(p,q)$ then there
are precisely $M/2 - 2p + q$ non-defect edges in $L$.  (To see this,
note that the sum of all entries in the matrix $L$ must equal $M$,
and $L$ has zero diagonal.)  

For $v\in [n]$, given an encoding $L$,  write $N_L(v)$ to denote
the set of $w\in [n]\setminus \{v\}$ such that $L(v,w)=1$,
and let $\hatN_L(v)$ be the set of all $w\in [n]\setminus \{v\}$ such that
$L(v,w)\neq 0$.  Then $N_L(v)$ is the set of neighbours of $v$ in $L$,
where neighbours along defect edges are not included, while $\hatN_L(v)$ is
the set of all neighbours of $v$ in $L$ (counting both defect and non-defect
edges).  Let $\zeta_v$ (respectively $\eta_v$) denote
the number of 2-defect edges (respectively, ($-1$)-defect edges) incident
with $v$ in $L$.  Then 
\begin{equation}
\label{good-edges}
 |N_L(v)| = d_v - 2\zeta_v + \eta_v,
\end{equation}
and hence the total number of edges incident with $v$ in $L$
is 
\begin{equation}
\label{all-edges} |\hatN_L(v)| = |N_L(v)| + \zeta_v + \eta_v = 
   d_v - \zeta_v + 2\eta_v.
\end{equation}

With these formulae we can prove the following bounds, which will be
very useful in our arguments.

\begin{lemma}
\label{useful-bounds}
Suppose that $L\in C(p,q)$ and let $a_1,b_1$ be distinct vertices with $L(a_1,b_1)\neq 0$.
\begin{itemize}
\item[\emph{(i)}] The number of ways to choose an ordered pair of vertices
$(a_2,b_2)$ such that $L(a_2,b_2)=1$ and $L(a_2,b_1)=0$, with $a_1,b_1,a_2,b_2$ all distinct,
is at least 
\begin{align*} M - 4p + 2q -
 & \Bigg( \dmax\Big(\dmax -\zeta_{b_1} + 2\eta_{b_1} + 2\Big) + \eta_{a_1} + \eta_{b_1} 
-  2(\zeta_{a_1} + \zeta_{b_1})
 \\ & \hspace*{6cm} {} 
+ \sum_{y\in \hatN_L(b_1)} (\eta_y - 2\zeta_y)\Bigg).
\end{align*}
\item[\emph{(ii)}]
Now suppose that $a_1$, $b_1$, $a_2$, $b_2$ are distinct vertices with $L(a_2,b_2)=1$.
Define
\[ \eta^\ast = \eta_{a_1} + \eta_{b_1} + \eta_{a_2} + \eta_{b_2},\qquad
   \zeta^\ast = \zeta_{a_1} + \zeta_{b_1} + \zeta_{a_2} + \zeta_{b_2}.
\]
The number of ways to choose an ordered pair of vertices $(a_3,b_3)$ such that
$L(a_3,b_3)=1$ and $L(a_1,b_3)=L(a_3,b_2)=0$, with $a_1,b_1,a_2,b_2,a_3,b_3$ all distinct, 
is at least
\begin{align*}
M - 4p+2q - & \Bigg(\dmax\Big(2\dmax - (\zeta_{a_1} +\zeta_{b_2}) + 2(\eta_{a_1} + \eta_{b_2}) + 4\Big)
  + \eta^\ast - 2\zeta^\ast 
 \\ & \hspace*{2cm} {} 
   + \sum_{x\in\hatN_L(a_1)} (\eta_x - 2\zeta_x) + \sum_{y\in\hatN_L(b_2)} (\eta_y - 2\zeta_y) \Bigg).
\end{align*}
\end{itemize}
\end{lemma}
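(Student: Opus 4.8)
The plan is to prove both parts by a direct counting argument: count the ordered pairs that \emph{violate} one of the required conditions, bound each violation class from above using the degree formulas \eqref{good-edges} and \eqref{all-edges}, and subtract. For part (i), the quantity we want to count is the number of ordered pairs $(a_2,b_2)$ with $L(a_2,b_2)=1$, subject to $L(a_2,b_1)=0$ and $\{a_1,b_1,a_2,b_2\}$ all distinct. The starting point is the total number of ordered pairs $(a_2,b_2)$ with $L(a_2,b_2)=1$, namely $2\cdot(\text{number of non-defect edges})=2(M/2-2p+q)=M-4p+2q$ (using the fact recorded before the lemma that an encoding in $C(p,q)$ has exactly $M/2-2p+q$ non-defect edges). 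From this total I would subtract, one forbidden condition at a time: (a) pairs with $a_2\in\{a_1,b_1\}$ or $b_2\in\{a_1,b_1\}$; (b) pairs with $a_2=b_2$ — impossible since the diagonal is zero; and (c) pairs with $L(a_2,b_1)\neq 0$, which I split further according to whether $\{a_2,b_1\}$ is a non-defect edge (so $L(a_2,b_1)=1$) or a defect edge. The key observation making the bound clean is that ``$L(a_2,b_1)=0$'' really means $a_2\notin\hatN_L(b_1)$, so the pairs to be removed are exactly those with $a_2\in\hatN_L(b_1)$, and for each such $a_2$ the number of valid $b_2$ with $L(a_2,b_2)=1$ is $|N_L(a_2)|=d_{a_2}-2\zeta_{a_2}+\eta_{a_2}\le \dmax-2\zeta_{a_2}+\eta_{a_2}$. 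Summing over $a_2\in\hatN_L(b_1)$, of which there are $|\hatN_L(b_1)|=d_{b_1}-\zeta_{b_1}+2\eta_{b_1}\le\dmax-\zeta_{b_1}+2\eta_{b_1}$, and adding a correction of $+2$ to $|\hatN_L(b_1)|$ to also absorb the forbidden cases $a_2\in\{a_1,b_1\}$ (if $a_1\in\hatN_L(b_1)$ it is already counted, but bounding crudely by $+2$ is safe), one obtains the product $\dmax(\dmax-\zeta_{b_1}+2\eta_{b_1}+2)$. The remaining lower-order terms $\eta_{a_1}+\eta_{b_1}-2(\zeta_{a_1}+\zeta_{b_1})+\sum_{y\in\hatN_L(b_1)}(\eta_y-2\zeta_y)$ come from accounting precisely for the pairs already removed (so as not to double-subtract): removing $a_2\in\{a_1,b_1\}$ separately over-counts by the amount already removed in case (c), and the overlap is governed by how many defect edges touch $a_1,b_1$ and their neighbours — this is where the $\eta,\zeta$ bookkeeping enters. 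I would lay this out as an inclusion–exclusion with explicit terms.

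For part (ii) the structure is identical but with two forbidden non-edge conditions instead of one. The total is again $M-4p+2q$ ordered pairs $(a_3,b_3)$ with $L(a_3,b_3)=1$. I would remove: pairs meeting $\{a_1,b_1,a_2,b_2\}$; pairs with $L(a_1,b_3)\neq 0$, i.e.\ $b_3\in\hatN_L(a_1)$; and pairs with $L(a_3,b_2)\neq 0$, i.e.\ $a_3\in\hatN_L(b_2)$. The first violation class contributes $a_3$-choices from $\hatN_L(a_1)$ paired with any $b_3$ having $L(a_3,b_3)=1$ (at most $\dmax$ choices each), giving $\dmax\,|\hatN_L(a_1)|\le\dmax(\dmax-\zeta_{a_1}+2\eta_{a_1})$, and symmetrically $\dmax\,|\hatN_L(b_2)|\le\dmax(\dmax-\zeta_{b_2}+2\eta_{b_2})$; together with a $+4$ correction (four forbidden distinctness constraints, each absorbable into one of the $\hatN_L$ sets) this yields $\dmax(2\dmax-(\zeta_{a_1}+\zeta_{b_2})+2(\eta_{a_1}+\eta_{b_2})+4)$. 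The trailing sums $\sum_{x\in\hatN_L(a_1)}(\eta_x-2\zeta_x)+\sum_{y\in\hatN_L(b_2)}(\eta_y-2\zeta_y)$ and the $\eta^\ast-2\zeta^\ast$ term are again the precise corrections from translating the crude degree bound $d_{a_3}\le\dmax$ back into $|N_L(a_3)|=d_{a_3}-2\zeta_{a_3}+\eta_{a_3}$ and from avoiding double-subtraction between the two overlapping violation classes; I would write these out carefully with inclusion–exclusion.

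The main obstacle is the bookkeeping in the inclusion–exclusion, not any conceptual difficulty: one must be scrupulous about which pairs have been removed when subtracting each forbidden condition, so that the overlaps — a pair that simultaneously meets $\{a_1,b_1\}$ \emph{and} has $L(a_2,b_1)\ne 0$, say — are added back exactly once, and one must everywhere use the \emph{exact} identities \eqref{good-edges} and \eqref{all-edges} rather than the bound $d_v\le\dmax$ until the very last step, since otherwise the lower-order $\eta_v$, $\zeta_v$ terms (which can be negative and are essential for the bound to be usable later) are lost. A secondary point to handle with care is the possibility that the sets $\hatN_L(a_1)$, $\hatN_L(b_2)$ (resp.\ $\hatN_L(b_1)$ in part (i)) overlap each other or contain $a_1,b_1,a_2,b_2$; bounding these overlaps by the crude additive constants ($+2$ in (i), $+4$ in (ii)) is what keeps the statement clean, and I would verify that these constants are indeed large enough in every case permitted by Lemma~\ref{oldstuff}(ii) and Figure~\ref{f:possible}.
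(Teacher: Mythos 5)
Your counting strategy is essentially the paper's: start from the $M-4p+2q$ ordered non-defect pairs, bound the bad choices by $\sum_{y\in\hatN_L(b_1)}|N_L(y)|$ plus a correction for the distinctness constraints, and only at the very end replace $d_v$ by $\dmax$ in the exact identities (\ref{good-edges}) and (\ref{all-edges}); part (ii) is then two applications of the part (i) bound, one for each of the constraints $L(a_1,b_3)=0$ and $L(a_3,b_2)=0$, exactly as in the paper. One piece of your bookkeeping, as described, would not land on the stated bound. You propose to absorb the distinctness violations ($a_2\in\{a_1,b_1\}$ or $b_2\in\{a_1,b_1\}$) by the crude additive ``$+2$'' inside $\dmax(\cdots+2)$, i.e.\ by $2\dmax$, and then to recover the terms $\eta_{a_1}+\eta_{b_1}-2(\zeta_{a_1}+\zeta_{b_1})$ (and $\eta^\ast-2\zeta^\ast$ in (ii)) from an inclusion--exclusion correction for over-counting. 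In the paper those terms have a simpler origin: the distinctness violations not already covered by the sum over $\hatN_L(b_1)$ are bounded by $|N_L(a_1)|+|N_L(b_1)|\le 2\dmax-2(\zeta_{a_1}+\zeta_{b_1})+\eta_{a_1}+\eta_{b_1}$, which supplies the $2\dmax$ \emph{and} the lower-order terms in one step; no inclusion--exclusion is needed, because over-counting bad pairs only weakens the lower bound and is harmless. If you really bound the distinctness violations by $2\dmax$ alone, you lose the term $\eta_{a_1}+\eta_{b_1}-2(\zeta_{a_1}+\zeta_{b_1})$, which can be of either sign: when it is positive (e.g.\ $\eta_{a_1}=2$, no $2$-defects at $a_1,b_1$) your bound is stronger than the stated one and a subtraction-only inclusion--exclusion correction cannot manufacture the missing positive term, while when it is negative (e.g.\ $L(a_1,b_1)=2$ and no $(-1)$-defects at $a_1,b_1$) your bound is weaker than the stated one by up to $4$. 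So keep the exact neighbourhood sizes for $a_1$ and $b_1$ as well, and drop the inclusion--exclusion entirely; with that change your argument is the paper's.
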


\begin{proof}
For (i), there are $M-4p+2q$ possibilities for $(a_2,b_2)$ with $L(a_2,b_2)=1$,
but we must reject the following choices of $(a_2,b_2)$:
\begin{itemize}
\item those for which $a_1,b_1,a_2,b_2$ are not distinct (that is, any
choice of $(a_2,b_2)$ which is incident with one of $a_1$ or $b_1$), and
\item those with $L(a_2,b_1)\neq 0$. 
\end{itemize}
(In~\cite{SODA}, we neglected to rule out the possibility that there
was a defect edge present between $b_1$ and $a_2$.  We plug this gap here.)
We claim that the number of bad choices for $(a_2,b_2)$ is at most
\begin{equation}
\label{claim}
 \left( \sum_{y\in \hatN_L(b_1)} |N_L(y)|\right) +
    |N_L(a_1)|+ |N_L(b_1)|. 
\end{equation}
To see this, observe that the sum over $y$ counts all ordered pairs $(y,z)$ with $L(y,z)=1$ and
$L(y,b_1)\neq 0$.  This includes each non-defect edge incident with $b_1$
(when $z=b_1)$ and each non-defect edge incident with $a_1$
(when $y=a_1$).   If $L(a_1,b_1)=1$ then this choice is counted twice, which covers
both $(a_1,b_1)$ and $(a_2,b_2)$.  Indeed, all edges incident with $a_1$ or $b_1$ must
be counted twice,
to account for the two choices of orientation of $\{ a_2,b_2\}$.
We achieve this by adding $|N_L(a_1)| + |N_L(b_1)|$ to the upper bound.
All other edges counted by the sum over $y$ are of the form $(y,z)$
with 
\[ y\in\hatN_L(b_1)\setminus \{ a_1\},\qquad z\in N_L(y)\setminus \{b_1\} \,\,\, \text{ and } \,\,\, L(y,z)=1,\]
corresponding a bad choice of $(a_2,b_2)=(y,z)$.
This covers all bad choices of $(a_2,b_2)$, completing the proof of (\ref{claim}).

Applying (\ref{good-edges}) to (\ref{claim}), it follows that the number of bad choices
for $(a_2,b_2)$ is at most
\begin{align*}
 \Bigg( \sum_{y\in \hatN_L(b_1)} (\dmax  - 2\zeta_y + \eta_y) \Bigg) +
    2\dmax - 2(\zeta_{a_1} + \zeta_{b_1}) + \eta_{a_1} + \eta_{b_1}. 
\end{align*}
Applying (\ref{all-edges}) to $b_1$ shows that the number of bad choices for
$(a_2,b_2)$ is at most
\begin{align}
 \dmax\Big(\dmax - \zeta_{b_1} + 2\eta_{b_1}\Big) + 
    2\dmax & - 2(\zeta_{a_1} + \zeta_{b_1}) + \eta_{a_1} + \eta_{b_1}\nonumber \\
 & \hspace*{4cm} {} + \Bigg( \sum_{y\in \hatN_L(b_1)} (\eta_y - 2\zeta_y ) \Bigg), \label{bad-for}
\end{align} 
and subtracting this expression from $M-4p+2q$ completes the proof of (i).

The bound for (ii) is established in a similar fashion.  There are
$M-4p+2q$ choices for $(a_3,b_3)$ such that $L(a_3,b_3)=1$.  An upper bound
on the number of bad choices for $(a_3,b_3)$ is obtained by summing the
upper bound for the number of choices which are bad with respect to
the pair $(a_1,b_1)$, as given by (\ref{bad-for}), and the number of
choices which are bad with respect to the pair $(a_2,b_2)$. The latter is also
given by (\ref{bad-for}) after replacing $a_1$ by $a_2$ and $b_1$ by $b_2$,
and (for clarity) using the dummy variable $x$ in the sum, rather than $y$.
The proof of (ii) is then completed by subtracting the sum of these
two upper bounds from $M-4p+2q$, using the definition of $\eta^*$, $\zeta^*$.
\end{proof}

The following lemma is the ``critical lemma''
which relied on regularity in~\cite{CDG}; its proof is the main task of this
section.

\begin{lemma}
Suppose that $\dmin\geq 1$ and $3\leq \dmax\leq \nfrac{1}{3}\sqrt{M}$.
Let $Z\in\Omegau$.  Then 
\[ |\mathcal{L}^\ast(Z)| \leq  2\, M^6\,  |\Omegau|.\]
\label{3switches}
\end{lemma}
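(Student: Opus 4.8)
The plan is to bound $|\mathcal{L}^*(Z)|$ by building an injection (or a bounded-to-one map) from $\mathcal{L}^*(Z)$ into $\Omegau \times S$, where $S$ is a small ``index set'' recording the defect data and the 3-switch operations used to clean up the defects; one then reads off $|S| \leq 2M^6$. Concretely, for each $L \in C(p,q)$ with $(p,q)\neq(0,0)$, I would apply a sequence of at most three 3-switches to remove the (at most four) defect edges, obtaining a genuine graph $Z' \in \Omegau = C(0,0)$. Each 3-switch is specified by a 6-tuple of vertices, so three of them cost at most $n^{18}$ choices; but the point of Lemma~\ref{useful-bounds} is that, when we run the map backwards, we only need to know where the defect edges were, not the full 6-tuples, and the number of ways to ``undo'' each 3-switch step (recover the deleted configuration from the current graph) will be controlled by the lower bounds in Lemma~\ref{useful-bounds}(i),(ii), which say that there are $\Omega(M)$ valid choices at each step when $\dmax \leq \frac13\sqrt{M}$.

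The key steps, in order, are: (1) handle the defect-free case $C(0,0)=\Omegau$ trivially; (2) stratify $\mathcal{L}^*(Z)$ by the five defect configurations of Figure~\ref{f:possible} and, within each, further by the labels; (3) for each stratum, describe explicitly which 3-switches remove the defects — e.g. a single $2$-defect edge $\{x,y\}$ together with a chosen non-defect edge and a third edge forms a $6$-cycle configuration whose 3-switch deletes $\{x,y\}$ and creates no new defect — verifying at each stage that the hypotheses of Lemma~\ref{useful-bounds} are met (the needed non-edges $L(a_2,b_1)=0$ etc.\ are exactly what that lemma counts, and the degree conditions $d_x\geq 2$, $d_y\geq 3$, $d_y \geq 4$ from Lemma~\ref{structure} ensure the relevant vertices have enough incident non-defect edges for the construction to proceed); (4) show the resulting map $L \mapsto (Z', \text{defect locations}, \text{3-switch witnesses})$ is injective given the recorded data, using that a 3-switch is reversible and that $L$ is recoverable from $Z'$ plus the record; (5) count the record: the defect edges involve $O(1)$ vertices chosen in $\leq n^{8} = O(M^4)$ ways (four defect edges, but constrained), and each of the $\leq 2$ additional 3-switches needed to clean up after the forced moves contributes a factor $M$ from the Lemma~\ref{useful-bounds} denominators — I would need to track the arithmetic carefully so the product of correction factors stays below $2M^6$.

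The main obstacle I anticipate is step (3)–(5): ensuring that when defects are removed one at a time, the intermediate encodings stay valid (no new defect is accidentally created beyond those already present, and the $\dmax$ condition genuinely suppresses the ``bad'' count below $M$ at \emph{every} intermediate step, not just the first). The bounds in Lemma~\ref{useful-bounds} carry error terms like $\dmax(2\dmax + 2\eta + 4) + \eta^* - 2\zeta^* + \sum_{x \in \hatN_L(a_1)}(\eta_x - 2\zeta_x)$, and the sums over neighbourhoods could in principle be large; I would bound $\sum_y \eta_y$ and $\sum_y \zeta_y$ globally by a constant (there are at most four defect edges total, so $\sum_v \eta_v + \sum_v \zeta_v \leq 8$), which kills those terms, leaving essentially $M - O(\dmax^2) \geq M - \tfrac19 M > 0$ and in fact $\geq \tfrac{8}{9}M$ clean choices at each step. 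The bookkeeping to combine the five configurations, the at-most-three steps per configuration, and the factor-of-$2$ slack into the clean bound $2M^6|\Omegau|$ is where the real care is needed, but conceptually the argument is a standard ``switch away the defects and count the reverse operations'' estimate, now run with 3-switches instead of switches to avoid needing regularity.
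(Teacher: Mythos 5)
Your overall strategy --- stratify $\mathcal{L}^*(Z)$ by the defect counts $(p,q)$, remove defects with 3-switches, and control the ratios $|C(p,q)|/|C(p',q')|$ by comparing forward and reverse counts via Lemma~\ref{useful-bounds} --- is the right one, but there is a gap that changes the final exponent. You propose to remove defects one at a time (``a single $2$-defect edge $\{x,y\}$ \dots whose 3-switch deletes $\{x,y\}$''), yet you also assert that at most three 3-switches suffice; with up to four defect edges these two claims are incompatible, and the discrepancy is not cosmetic. Each removal of a $(-1)$-defect edge costs a ratio of order $M^2$ (reverse count of order $M^3$ divided by a forward count of order $M$, since the endpoints of a $(-1)$-defect edge may have degree as low as $1$), and each removal of a $2$-defect edge costs a ratio of order $\dmax^2$. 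For the extremal class $C(1,3)$, one-at-a-time removal therefore yields only $|C(1,3)|/|\Omegau| = O(\dmax^2 M^6) = O(M^7)$ when $\dmax = \Theta(\sqrt{M})$, which overshoots the claimed bound. The actual proof avoids this with a dedicated first phase: when $p+q=4$, Figure~\ref{f:possible} guarantees a $2$-defect edge and a $(-1)$-defect edge sharing a vertex, and a single 3-switch through that vertex removes both simultaneously. Crucially, the reverse of this combined operation forces two of the six vertices to be chosen as neighbours of already-chosen vertices, so its reverse count is $O(\dmax^2 M^2)$ rather than $O(M^3)$, giving a ratio of $2\dmax^2 M$ for that step and keeping the total at $2M^6$. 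Your proposal contains no such combined step and makes no use of the adjacency structure of the four-defect configurations, so as written it proves only $O(M^7)$.

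A secondary point: the ``injection into $\Omegau\times S$ with $|S|\le 2M^6$'' framing does not quite deliver the bound if $S$ literally records the 3-switch witnesses, since specifying the reverse of even one $(-1)$-defect removal already costs up to $M^3$ choices. What saves the argument is dividing by the number of forward switches available from each $L$ (your ``denominators''), i.e.\ the double-counting inequality $|C(p,q)|\cdot(\text{min forward count})\le |C(p-1,q)|\cdot(\text{max reverse count})$; the argument should be stated in that form rather than as an injection. Your observation that $\sum_v(\eta_v+\zeta_v)\le 8$ tames the neighbourhood sums in Lemma~\ref{useful-bounds} is sound and is essentially how the worst-case configurations are handled, although the resulting forward counts are closer to $\nfrac{2}{9}M$ than to your claimed $\nfrac{8}{9}M$; this affects only constants, not the exponent.
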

\begin{proof}
We prove that any $L\in\mathcal{L}^\ast(Z)$ can be transformed into
an element of $\Omegau$ (with no defect edges) using a sequence
of at most three 3-switches.  The strategy is as follows:  in Phase 1 we
aim to remove two defects per 3-switch (one 2-defect edge and one $(-1)$-defect edge),
then in Phase 2 we remove one 2-defect edge per 3-switch, and finally in Phase 3
we remove one $(-1)$-defect edge per 3-switch.
There is at most one step in Phase 1, though the other phases may have more than
one step: any phase may be empty.
Each 3-switch we perform gives rise to an upper bound on certain ratios
of the sizes of the sets $C(p,q)$, by double counting.  
The proof is completed by combining these bounds.
(Such an argument is often called a ``switching argument'' in the
asymptotic enumeration literature: see~\cite{McKW91} for example.)

\bigskip

\noindent {\bf Phase 1.}\ 
If $p+q\leq 3$ then Phase 1 is empty: proceed to Phase 2. Otherwise, suppose that
$L\in C(p,q)$ where $p+q=4$, so $(p,q)\in \{ (2,2),\, (1,3)\}$. 
(Recall that there are at most 4 defect edges, by Lemma~\ref{oldstuff}(ii).)
We count the number of 3-switches $(a_1,b_1,a_2,b_2,a_3,b_3)$ 
which
may be applied to $L$ to produce an encoding $L'\in C_{p-1,q-1}$.
This operation is shown in Figure~\ref{double-switch}, where defect
edges are labelled by $2$ or $-1$ and are shown using thicker lines: 
a thick solid line is a 
2-defect edge while a thick dashed line is a $(-1)$-defect edge.
\begin{figure}[ht!]
\begin{center}
\begin{tikzpicture}
\draw [-,thick,dashed] (0,0) -- (0,2);
\draw [-,line width=2.5pt,dashed] (2,2) -- (4,2);
\draw [-,thick,dashed] (4,0) -- (2,0);
\draw [-,line width=2.5pt] (0,2) -- (2,2);
\node [above] at (1,2.1) {$2$};
\node [above] at (3,2.1) {$-1$};
\draw [-,thick] (4,2) -- (4,0);
\draw [-,thick] (2,0) -- (0,0);
\draw [fill] (0,0) circle (0.1);
\draw [fill] (2,0) circle (0.1);
\draw [fill] (4,0) circle (0.1);
\draw [fill] (0,2) circle (0.1);
\draw [fill] (2,2) circle (0.1);
\draw [fill] (4,2) circle (0.1);
\node [below] at (0,-0.1) {$b_3$};
\node [above] at (0.0,2.1) {$a_1$};
\node [above] at (2,2.1) {$b_1$};
\node [left] at (-0.5,1) {$L$};
\node [above] at (4,2.1) {$a_2$};
\node [below] at (4,-0.1) {$b_2$};
\node [below] at (2.0,-0.1) {$a_3$};
\draw [->,line width = 1mm] (5,1) -- (6,1);
\draw [-,thick] (7,0) -- (7,2);
\draw [-,thick,dashed] (9,2) -- (11,2);
\draw [-,thick] (11,0) -- (9,0);
\draw [-,thick] (7,2) -- (9,2);
\draw [-,thick,dashed] (11,2) -- (11,0);
\draw [-,thick,dashed] (9,0) -- (7,0);
\node [below] at (7,-0.1) {$b_3$};
\node [above] at (7,2.1) {$a_1$};
\node [above] at (9,2.1) {$b_1$};
\node [above] at (11,2.1) {$a_2$};
\node [right] at (11.5,1) {$L'$};
\node [below] at (11,-0.1) {$b_2$};
\node [below] at (9,-0.1) {$a_3$};
\draw [fill] (7,0) circle (0.1);
\draw [fill] (7,2) circle (0.1);
\draw [fill] (9,2) circle (0.1);
\draw [fill] (11,2) circle (0.1);
\draw [fill] (11,0) circle (0.1);
\draw [fill] (9,0) circle (0.1);
\end{tikzpicture}
\caption{A 3-switch in Phase 1}
\label{double-switch}
\end{center}
\end{figure}
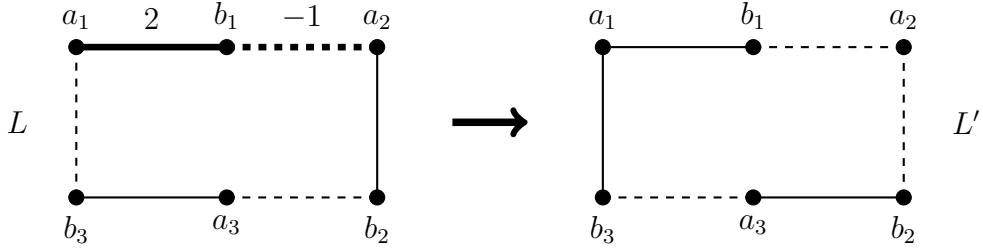

It follows from Figure~\ref{f:possible}
that there are at least two choices for a triple of distinct vertices
$(a_1,b_1,a_2)$ such that 
$L(a_1,b_1) = 2$ and $L(a_2,b_1) = -1$.  

Given $a_1, b_1, a_2$, there
is at least one vertex $b_2\in N_L(a_2)\setminus \{ a_1\}$.
To see this, first suppose that $a_2$ is not incident with a 2-defect edge.
Then $N_L(a_2)$ has at least $d_{a_2}+1\geq 2$ elements, leaving at least
one which is distinct from $a_1$.  Otherwise, if $a_2$ is incident
with a 2-defect edge then it can be incident with at most one 2-defect edge,
since $p\leq 2$. Then there are at least $d_{a_2}-2$ choices for $b_2$
in $N_L(a_2)\setminus \{ a_1\}$,
and this number is positive by Lemma~\ref{structure}(iii).

Next, we choose $(a_3,b_3)$ such that all six vertices are distinct,
$L(a_3,b_3)=1$ and $L(a_1,b_3) = L(a_3,b_2)=0$.
A lower bound for the number of ways to choose $(a_3,b_3)$ is given
in Lemma~\ref{useful-bounds}(ii).  
In this expression, the worst case is obtained by making
$\eta_{a_1}+\eta_{b_2}$ as large as possible, then making
$\eta^\ast$ as large as possible, while making $\zeta^*$ as small as possible.
Additionally, adding an edge (with label 1) between some of the known vertices
may increase the sum over $x\in \hatN_L(a_1)$ or the sum over $y\in \hatN_L(b_2)$.  
A worst-case example is shown in Figure~\ref{worstcase}.
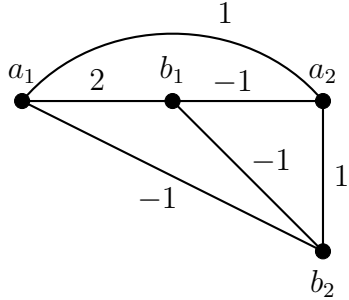
\begin{figure}[ht!]
\begin{center}
\begin{tikzpicture}
\draw[fill] (0,2) circle (0.1);
\draw[fill] (2,2) circle (0.1);
\draw[fill] (4,0) circle (0.1);
\draw[fill] (4,2) circle (0.1);
\draw [thick,-] (0,2) to [out=50,in=130] (4,2);
\node [below] at (1.8,1.0) {$-1$};
\node [above] at (2.8,1.95) {$-1$};
\node [above] at (2.7,2.9) {$1$};
\node [right] at (2.9,1.2) {$-1$};
\node [right] at (4.0,1.0) {$1$};
\node [above] at (1,2) {$2$};
\draw[thick,-] (4,2) -- (4,0);
\draw[thick,-] (4,2.00) -- (2,2.00);
\draw[thick,-] (0,2) -- (4,0);
\draw[thick,-] (0,2) -- (2,2);
\draw[thick,-] (2,2) -- (4,0);
\node [above] at (0,2.1) {$a_1$};
\node [above] at (2,2.1) {$b_1$};
\node [above] at (4,2.1) {$a_2$};
\node [below] at (4,-0.1) {$b_2$};
\end{tikzpicture}
\caption{A worst case configuration for the choice of $(a_3,b_3)$ in Phase 1.}
\label{worstcase}
\end{center}
\end{figure}

\noindent Here $(p,q)=(1,3)$ and
\begin{align*}
 \eta^\ast = 6, \quad \zeta^\ast = 2,\quad &\eta_{a_1} + \eta_{b_2} = 3,\quad \zeta_{a_1} + \zeta_{b_2} = 1,\\
 \sum_{x\in\hatN_L(a_1)} (\eta_x - 2\zeta_x) = 3, \qquad & \sum_{y\in \hat{N}_L(b_2)} (\eta_y - 2\zeta_y) = 0.
\end{align*}
Substituting these values into Lemma~\ref{useful-bounds}(ii) shows that the 
number of good choices for $(a_3,b_3)$ is at least
\begin{align*}
M + 2 - (2\dmax^2 + 9\dmax + 5) \geq M - 6\dmax^2
\end{align*}
since $\dmax\geq 3$.

Combining these estimates shows that the number of possible 3-switches 
$(a_1,b_1,a_2,b_2,a_3,b_3)$ such that $L(a_1,b_1)=2$ and
$L(a_1,b_3)=-1$ is at least 
\begin{equation}
 2\left(M- 6 \dmax^2\right) 
  \geq \dfrac{2}{3}M,
\label{21forward}
\end{equation}
using the fact that $\dmax\leq \nfrac{1}{3}\, \sqrt{M}$.

Now we consider the reverse of this operation, which is
given by reversing the arrow in Figure~\ref{double-switch}.  Given 
$L'\in C(p-1,q-1)$, we need an upper bound on
the number of 6-tuples
$(a_1,b_1,a_2,b_2,a_3,b_3)$ such that $L'(a_1,b_1)=L'(a_1,b_3)=L'(a_3,b_2)=1$
and $L'(a_2,b_1)=L'(a_2,b_2)=L'(a_3,b_3)=0$.   Since the encoding
$L\in C(p,q)$ produced by this reverse operation must be
consistent with $Z$, it follows that $\{ a_2,b_1\}$ must be an
edge of $Z$.  Hence there are at most $M$ choices for $(a_2,b_1)$.
Let $\eta'_v$ denote the number of $(-1)$-defect
edges incident with $v$ in $L'$, for any vertex $v$. 
There are at most $d_{b_1} + \eta'_{b_1}$ ways to choose $a_1\in N_L(b_1)$
and at most $d_{a_1}-1+\eta'_{a_1}$ ways to choose $b_3\in N_L(a_1)\setminus
\{ b_1\}$.  
From Figure~\ref{f:possible}, if $\eta'_{a_1}=2$ then $\eta'_{b_1}=0$,
and if $\eta'_{b_1}=1$ then $\eta'_{a_1}\leq 1$.  Furthermore, $\eta'_{b_1}\leq 1$.
(Otherwise, the reverse switching would produce an encoding which is not valid.)
Therefore, the number of ways to choose $(a_1,b_3)$ with the given conditions is
at most
\[
   (d_{b_1} + \eta'_{b_1})(d_{a_1} - 1 + \eta'_{a_1})
    \leq d_{\max}\, (d_{\max} + 1)
   \leq \dfrac{4}{3}\, d_{\max}^2.
\]
Finally we
must choose $(a_3,b_2)$ such that $L'(a_3,b_2) = 1$, the vertices $a_3,b_2$
are distinct from the four vertices chosen so far and $L'(a_2,b_2) = L'(a_3,b_3)=0$.
For an upper bound, we simply ensure that $(a_3,b_2)$ is not equal to either
orientation of the two edges we have chosen so far (namely $(a_1,b_1)$ or
$(a_1,b_3)$ or their reversals).  Hence there are at most
\[
    M -4(p-1) + 2(q-1) - 4 \leq M\]
good choices for $(a_3,b_2)$. 
Therefore, the number of ways to apply the reverse operation to 
$L'\in C(p-1,q-1)$ to produce a consistent encoding $L\in C(p,q)$ is at
most $\nfrac{4}{3} \dmax^2 M^2$.

Combining this with (\ref{21forward}) shows that whenever $p+q=4$,
by double counting,
\begin{equation}
\label{B21}
 \frac{|C(p,q)|}{|C(p-1,q-1)|} \leq 2 \dmax^2 M.
\end{equation}

\bigskip

\noindent {\bf Phase 2.}\  Once Phase 1 is complete, we have reached an
encoding $L\in C(p,q)$  with $p+q\leq 3$.
If $p=0$ then Phase 2 is empty: proceed to Phase 3. Otherwise, we have
\[ (p,q) \in \{  (2,1),\, (2,0),\, (1,2),\, (1,1),\, (1,0)\}.\]
We count the number of ways to perform a 3-switch to reduce the
number of 2-defect edges by one, as shown in Figure~\ref{2-3-switch}.

\begin{figure}[ht!]
\begin{center}
\begin{tikzpicture}
\draw [-,thick,dashed] (0,0) -- (0,2);
\draw [-,thick,dashed] (2,2) -- (4,2);
\draw [-,thick,dashed] (4,0) -- (2,0);
\draw [-,line width=2.5pt] (0,2) -- (2,2);
\node [above] at (1,2.1) {$2$};
\draw [-,thick] (4,2) -- (4,0);
\draw [-,thick] (2,0) -- (0,0);
\draw [fill] (0,0) circle (0.1);
\draw [fill] (2,0) circle (0.1);
\draw [fill] (4,0) circle (0.1);
\draw [fill] (0,2) circle (0.1);
\draw [fill] (2,2) circle (0.1);
\draw [fill] (4,2) circle (0.1);
\node [below] at (0,-0.1) {$b_3$};
\node [left] at (-0.5,1) {$L$};
\node [above] at (0.0,2.1) {$a_1$};
\node [above] at (2,2.1) {$b_1$};
\node [above] at (4,2.1) {$a_2$};
\node [below] at (4,-0.1) {$b_2$};
\node [below] at (2.0,-0.1) {$a_3$};
\draw [->,line width = 1mm] (5,1) -- (6,1);
\draw [-,thick] (7,0) -- (7,2);
\draw [-,thick] (9,2) -- (11,2);
\draw [-,thick] (11,0) -- (9,0);
\draw [-,thick] (7,2) -- (9,2);
\draw [-,thick,dashed] (11,2) -- (11,0);
\draw [-,thick,dashed] (9,0) -- (7,0);
\node [below] at (7,-0.1) {$b_3$};
\node [above] at (7,2.1) {$a_1$};
\node [above] at (9,2.1) {$b_1$};
\node [right] at (11.5,1) {$L'$};
\node [above] at (11,2.1) {$a_2$};
\node [below] at (11,-0.1) {$b_2$};
\node [below] at (9,-0.1) {$a_3$};
\draw [fill] (7,0) circle (0.1);
\draw [fill] (7,2) circle (0.1);
\draw [fill] (9,2) circle (0.1);
\draw [fill] (11,2) circle (0.1);
\draw [fill] (11,0) circle (0.1);
\draw [fill] (9,0) circle (0.1);
\end{tikzpicture}
\caption{A 3-switch in Phase 2.} 
\label{2-3-switch}
\end{center}
\end{figure}
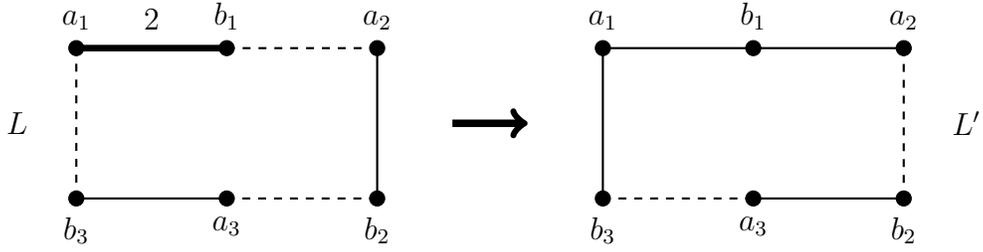
Choose an ordered pair $(a_1,b_1)$ such that $L(a_1,b_1)=2$, in 
$2p$ ways.  Next, we must choose an ordered pair $(a_2,b_2)$ such that 
$L(a_2,b_2)=1$ and $L(a_2,b_1)=0$ with $a_1,b_1,a_2,b_2$ all distinct. 
We will apply Lemma~\ref{useful-bounds}(i).  For a worst case, we make
$\eta_{b_1}$ as large as possible, and then make $\eta_{a_1}$ as large
as possible, while making all relevant values of $\zeta_v$ as small
as possible.  The worst case occurs when $(p,q)=(1,2)$ and $\eta_{b_1}=2$, which implies
that $\eta_{a_1}=0$.  This shows that there are at least
\[
 M - \Big(d_{\max}^2 + 5 \dmax - 1\Big) \geq M - 3\dmax^2
\]
good choices for $(a_2,b_2)$, using the fact that $\dmax\geq 3$.

Next, choose an ordered pair $(a_3,b_3)$ such that all six vertices
are distinct, $L(a_3,b_3)=1$ and $L(a_1,b_3)=L(a_3,b_2)=0$.  
We will apply Lemma~\ref{useful-bounds}(ii).
A worst case configuration for $(a_3,b_3)$ is shown below.
\begin{center}
\begin{tikzpicture}
\draw[fill] (0,2) circle (0.1);
\draw[fill] (2,2) circle (0.1);
\draw[fill] (4,0) circle (0.1);
\draw[fill] (4,2) circle (0.1);
\node [below] at (1.8,1.0) {$-1$};
\node [right] at (2.6,1.6) {$-1$};
\node [right] at (4.0,1.0) {$1$};
\draw[thick,-] (4,2) -- (4,0);
\draw[thick,-] (0,2) -- (4,0) -- (2,2);
\draw[thick,-] (0,2) -- (2,2);
\node [above] at (1,2) {$2$};
\node [above] at (0,2.1) {$a_1$};
\node [above] at (2,2.1) {$b_1$};
\node [above] at (4,2.1) {$a_2$};
\node [below] at (4,-0.1) {$b_2$};
\end{tikzpicture}
\end{center}
\noindent Here
\begin{align*} \eta^* = 4,\quad \zeta^* = 2, \quad & \eta_{a_1} + \eta_{b_2} = 3,\quad \zeta_{a_1} + \zeta_{b_2} = 1,\\
\sum_{x\in\hatN_L(a_1)} (\eta_x - 2\zeta_x) = 1, \qquad & \sum_{y\in\hatN_L(b_2)} (\eta_y - 2\zeta_y) = -2.
\end{align*}
Plugging these values into the bound from Lemma~\ref{useful-bounds}(ii), the number of good
choices for $(a_3,b_3)$ is at least
\begin{align*}
 M - \biggl(2d_{\max}^2  + 9\dmax - 1\biggr)
   \geq M - 5 d_{\max}^2.
\end{align*}
Combining these expressions, we conclude that there are at least
\begin{equation}
 2\left(M - 3\dmax^2\right)\left(M - 5\dmax^2\right) \geq \dfrac{16}{27} M^2
\label{forward2}
\end{equation}
valid choices for $(a_1,b_1,a_2,b_2,a_3,b_3)$, using the stated
upper bound on $\dmax$.

For the reverse operation, let $L'\in C(p-1,q)$ where
$p\geq 1$ and $p+q\leq 3$.  We need an upper bound
on the number of 6-tuples $(a_1,b_1,a_2,b_2,a_3,b_3)$ with
$L(a_1,b_1)=L(a_1,b_3) = L(a_2,b_1) = L(a_3,b_2)=1$ and
$L(a_2,b_2)=L(a_3,b_3)=0$.  There are at most 
\[ M-4(p-1)+2q\leq M+4\]
choices for $(a_1,b_1)$ with $L(a_1,b_1)=1$, and then there are
at most 
\[ (d_{a_1} - 1 + \eta'_{a_1})(d_{b_1} - 1 + \eta'_{b_1}) \leq \dmax^2\]
 choices for $(a_2,b_3)$, where (as in Phase 1), $\eta'_v$ is the number of $(-1)$-defect
edges incident with $v$ in $L'$.  This uses the fact that there 
are at most two defect edges in $L'$, and hence 
$\eta'_{a_1} + \eta'_{b_1}\leq 2$, by choice of $(a_1,b_1)$.
Finally there are at most 
\[ M - 4(p-1) + 2q - 6 \leq M-2\] 
valid choices for $(a_3,b_2)$, where for an upper bound we just
avoid choosing any orientation of the three edges (namely $(a_1,b_1)$,  
$(a_2,b_1)$, $(a_1,b_3)$) which have already been chosen.
Hence the number of
6-tuples where the reverse operation can be performed is at most
\[ \dmax^2 (M+4)(M-2) \leq \dfrac{83}{81}\, M^2,\]
since $M\geq 9\dmax^2\geq 81$.

Combining this with (\ref{forward2}), it follows that 
for $(p,q)\in \{ (2,1),\, (2,0),\, (1,2),\, (1,1),\, (1,0)\}$, 
we have
\begin{equation} 
\frac{|C(p,q)|}{|C(p-1,q)|} \leq \dfrac{83}{48}\, \dmax^2 < 2\dmax^2.
\label{B2}
\end{equation}
 
\bigskip

\noindent {\bf Phase 3.}\ After Phase 2, we may
suppose that $p=0$.  Let $L\in C(0,q)$ where $q\in \{ 1,2,3\}$.  
We count the number of 6-tuples $(a_1,b_1,a_2,b_2,a_3,b_3)$
where a 3-switch can be performed with $L(a_2,b_1)=-1$.
Performing this 3-switch
will produce $L'\in C(0,q-1)$, as illustrated in Figure~\ref{1switch}.

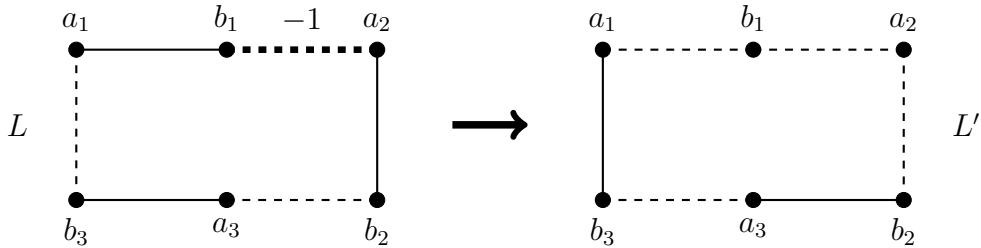
\begin{figure}[ht!]
\begin{center}
\begin{tikzpicture}
\draw [-,thick,dashed] (0,0) -- (0,2);
\draw [-,line width=2.5pt,dashed] (2,2) -- (4,2);
\node [above] at (3,2.1) {$-1$};
\draw [-,thick,dashed] (4,0) -- (2,0);
\draw [-,thick] (0,2) -- (2,2);
\draw [-,thick] (4,2) -- (4,0);
\draw [-,thick] (2,0) -- (0,0);
\draw [fill] (0,0) circle (0.1);
\draw [fill] (2,0) circle (0.1);
\draw [fill] (4,0) circle (0.1);
\draw [fill] (0,2) circle (0.1);
\draw [fill] (2,2) circle (0.1);
\draw [fill] (4,2) circle (0.1);
\node [below] at (0,-0.1) {$b_3$};
\node [left] at (-0.5,1) {$L$};
\node [above] at (0.0,2.1) {$a_1$};
\node [above] at (2,2.1) {$b_1$};
\node [above] at (4,2.1) {$a_2$};
\node [below] at (4,-0.1) {$b_2$};
\node [below] at (2.0,-0.1) {$a_3$};
\draw [->,line width = 1mm] (5,1) -- (6,1);
\draw [-,thick] (7,0) -- (7,2);
\draw [-,thick,dashed] (9,2) -- (11,2);
\draw [-,thick] (11,0) -- (9,0);
\draw [-,thick,dashed] (7,2) -- (9,2);
\draw [-,thick,dashed] (11,2) -- (11,0);
\draw [-,thick,dashed] (9,0) -- (7,0);
\node [below] at (7,-0.1) {$b_3$};
\node [above] at (7,2.1) {$a_1$};
\node [above] at (9,2.1) {$b_1$};
\node [above] at (11,2.1) {$a_2$};
\node [right] at (11.5,1) {$L'$};
\node [below] at (11,-0.1) {$b_2$};
\node [below] at (9,-0.1) {$a_3$};
\draw [fill] (7,0) circle (0.1);
\draw [fill] (7,2) circle (0.1);
\draw [fill] (9,2) circle (0.1);
\draw [fill] (11,2) circle (0.1);
\draw [fill] (11,0) circle (0.1);
\draw [fill] (9,0) circle (0.1);
\end{tikzpicture}
\caption{A 3-switch in Phase 3.}
\label{1switch}
\end{center}
\end{figure}
There are $2q$ ways to choose $(b_1,a_2)$, and
at least $d_{b_1}+1$ ways to choose $a_1\in N_L(b_1)$.
Then there are at least $d_{a_2}$ ways to choose $b_2\in N_L(a_2)\setminus \{ a_1\}$.
(Note that the presence of other $(-1)$-defect edges incident with
$b_1$ or $a_2$ only helps here.)  Finally, we must choose $(a_3,b_3)$
with $L(a_3,b_3)=1$
such that all vertices are distinct, $L(a_3,b_2)=0$ and $L(a_1,b_3)=0$. 
Again, an upper bound on the number of bad choices for $(a_3,b_3)$ is given by
Lemma~\ref{useful-bounds}(ii), noting that now $\zeta_v = 0$ for all vertices $v$.
The worst case is attained with $q=3$, for example when the
defect edges are as shown in Figure~\ref{worstcase}, but with the
edge label on $\{ a_1,b_1\}$ changed from 2 to 1.
Here
\[ \eta^* = 6,\quad \eta_{a_1} + \eta_{b_2} = 3,  \quad
\sum_{x\in\hatN_L(a_1)}\eta_x = 5,\quad
 \sum_{y\in \hatN_L(b_2)}\eta_y = 4.
\]
Substituting these values into Lemma~\ref{useful-bounds}(ii), the number
of valid choices for $(a_3,b_3)$ is at least
\begin{align*}
  M - \left(2d_{\max}^2 + 10 d_{\max} + 9\right)
  \geq M - 7 d_{\max}^2.
\end{align*}
Hence the number of 3-switches which can be performed in $L$
to reduce the number of ($-1$)-defect edges by exactly one
is at least
\begin{align}
\label{1forward} 
2q(d_{b_1}+1)\, d_{a_2}\, (M - 7 d_{\max}^2) \geq 4q(M - 7 d_{\max}^2)  \geq \dfrac{8}{9} M,
\end{align}
using the given bounds on $\dmin$ and $\dmax$.

For the reverse operation, let $L'\in C(0,q-1)$, where
$q\in \{ 1,2,3\}$.  We need an upper bound on the number of
6-tuples such that $L(a_1,b_3)=L(a_3,b_2)=1$, $L(a_1,b_1)=L(b_1,a_2)=
L(a_2,b_2)=L(a_3,b_3) = 0$ and $\{ a_2,b_1\}$ is an edge of $Z$.
There are at most $M$ choices for $(a_2,b_1)$ satisfying
the latter condition, then at most
$M + 2(q-1) - 2(d_{a_2} + d_{b_1})\leq M$ ways to choose 
$(a_3,b_2)$ with $L(a_3,b_2)=1$ 
and $a_1,a_3,b_2,b_3$ all distinct. 
(Observe that the presence of any additional $(-1)$-defect
edges incident with $a_2$ or $b_1$ can only help here.)
Similarly, there at most
$M$ ways to choose $(a_1,b_3)$.
Hence the number of reverse operations is at most $M^3$. 

Combining this with (\ref{1forward}) shows that
\begin{equation}
 \frac{|C(0,q)|}{|C(0,q-1)|} \leq \dfrac{9}{8} M^2
\label{B1}
\end{equation}
for $q\in \{ 1,2,3\}$, by double counting.

\medskip

\noindent {\bf Consolidation.}\ 
Define
\[ a = 2 \dmax^2 M, \qquad 
   b = 2 \dmax^2, \qquad
      c = \dfrac{9}{8} M^2. \]
It follows from (\ref{B21}), (\ref{B2}) and (\ref{B1}) that
\begin{align*}
 \frac{|\mathcal{L}^\ast(Z)|}{|\Omegau|} &=
  \sum_{p=0}^2\,\sum_{q=0}^3\, \frac{|C(p,q)|}{|C(0,0)|} \\
   &\leq
  1 + b + b^2 + c + bc + b^2 c + abc
    + c^2  + bc^2  + a c^2  + c^3  \\
 &\leq 2 M^6,
\end{align*}
using the upper bound on $\dmax$ and the fact that $M\geq 9\dmax^2 \geq 81$.
This completes the proof of Lemma~\ref{3switches}.
\end{proof}

Since $M\leq \dmax n$, the bound $2 M^6$ is at most
a factor $n$ bigger than the analogous bound $2d^6 n^5$ given
in~\cite[Lemma 4]{CDG} in the regular case.

We can now quickly complete the proof of Theorem~\ref{main}.

\begin{proof}[Proof of Theorem~\ref{main}]
Recall the definitions from Section~\ref{ss:flow}.
We wish to apply (\ref{flowbound}).
It follows from the configuration model (see~\cite[Equation (1)]{McKW91}) that
the set $\Omegau$ has size
\begin{equation} |\Omegau| \leq \frac{M!}{2^{M/2}\, (M/2)!\, \prod_{j=1}^n d_j!} \leq
          \exp\left( \nfrac{1}{2} \, M\log(M) \right).
\label{size}
\end{equation}
Hence the smallest stationary probability $\pi^\ast$ satisfies
$\log(1/\pi^\ast) = \log(|\Omegau|) \leq \nfrac{1}{2} M\log(M)$. 
Next, $\ell(f)\leq M/2$ since each transition along a canonical path
replaces an edge of $G$ by an edge of $G'$.

Finally, if $e=(Z,Z')$ is a transition of the switch chain then
$1/Q(e) = 6\, a(\dvec) \leq M^2$.
Combining this with Lemma~\ref{fbound} gives
$\rho(f) \leq 2 \dmax^{14}\, M^8$.
Substituting these expressions into (\ref{flowbound}) gives
the claimed bound on the mixing time.
\end{proof}

\section{The directed switch chain}\label{s:directed}

A directed graph (digraph) $G=(V,A)$ consists of a finite set of vertices
and a set $A$ (or $A(G)$) of arcs,
where each arc is an ordered pair of distinct vertices.
We take $V=[n]$ for some positive integer $n$.

Recall that $\Omegad$ is the set of all directed graphs 
with directed degree sequence $\vecdvec$, as defined in Section~\ref{s:intro}.
The directed switch Markov chain, denoted $\Md$, has state space
$\Omegad$ and transitions described by the following procedure:
from the current digraph $G\in\Omegad$, choose an unordered pair
$\{ (i,j),(k,\ell)\}$ of distinct arcs of $G$ uniformly at random. 
If $i,j,k,\ell$ are distinct and $\{ (i,j),(k,\ell)\} \cap A(G) = \emptyset$
then delete the arcs $(i,j)$, $(k,\ell)$ from $G$ and add the arcs $(i,\ell)$, $(k,j)$ 
to obtain the new state; otherwise, remain at $G$.
If distinct digraphs $G, G'\in\Omegad$ are related by a directed switch then 
$P(G,G') = 1/\binom{m}{2} = P(G',G)$.
Hence the directed switch chain $\Md$ is symmetric, so the stationary distribution is 
uniform over $\Omegad$. The chain is also aperiodic, since for any $G\in\Omegad$ 
there are at least $m$ pairs of incident edges (for example, pairs of the form 
$\{( i,j),\, (j,\ell)\}$). This
implies that $P(G,G) \geq m/\binom{m}{2} > 0$.
As discussed in Section~\ref{s:intro}, unlike for undirected graphs, the
directed switch chain is not irreducible on $\Omegad$ for all directed
degree sequences $\vecdvec$. Instead, we will assume throughout this section
that $\vecdvec$ is a switch-irreducible degree sequence; 
that is, we assume that $\Md$ is irreducible.

In~\cite{directed}, a multicommodity flow analysis was given
for $\Md$ for the case of regular directed degree sequences.
We now show how to adapt this proof to handle irregular directed
degree sequences which satisfy the conditions of Theorem~\ref{main-directed}.
One result used to define the multicommodity flow, namely~\cite[Lemma 2.3]{directed},
must be reproved here without using the regularity assumption.
Having done that, we may use exactly the same multicommodity flow
as defined in~\cite{directed}. This is discussed in Section~\ref{s:defining-directed}
below.
Then, the flow is analysed in Section~\ref{s:analysing-directed}.
As in the undirected case (Section~\ref{s:undirected}),
we must reprove a critical counting lemma bounding the number of encodings,
without using regularity.

\subsection{Defining the flow}\label{s:defining-directed}

The overall structure of the multicommodity flow argument defined in~\cite{directed}
is very similar to the undirected case (on which it was based).  
Again, given two digraphs $G,G'\in\Omegad$ we consider the symmetric
difference $G\triangle G'$ as a 2-arc-coloured digraph.
A pairing now consists of a bijection from the blue in-arcs at $v$ to the red in-arcs
at $v$, and a bijection from the blue out-arcs at $v$ to the red out-arcs at $v$,
for each vertex $v$.  With respect to a fixed vertex, the symmetric difference
can be decomposed into a sequence of \emph{circuits}, and circuits are then
decomposed into 1-circuits or 2-circuits.  Both the colour and the direction of the
arcs alternate around each 1-circuit or 2-circuit.  Then each 1-circuit 
and 2-circuit must be processed, in order, using a sequence of switches which
form part of the canonical path from $G$ to $G'$.

Some extra cases arise which do not
occur in the undirected case, including the special case that the 2-circuit has
precisely 6 edges and 3 vertices, which we call 
a \emph{triangle}: see Figure~\ref{fig:triangle}.
\begin{figure}[ht!]
\begin{center}
\begin{tikzpicture}[scale=1.5]
\draw[fill] (1,0) circle (0.08);
\draw[fill] (2,1.7) circle (0.08);
\draw[fill] (3,0) circle (0.08);
\draw[thick, ->] (2.85, -0.05) -- (1.15, -0.05); 
\draw[thick, <-] (1.85, 1.6) -- (1.05, 0.15);
\draw[thick, ->] (2.05, 1.55) -- (2.82, 0.15);
\draw[thick, dashed, <-] (2.15, 1.6) -- (2.95, 0.15); 
\draw[thick, dashed, <-] (1.15, 0.1) -- (1.95, 1.55);
\draw[thick, dashed, ->] (1.15, 0.05) -- (2.80, 0.05);
\node [left] at (0.9, 0) {$v_0$};
\node [left] at (1.9, 1.8) {$v_1$};
\node [right] at (3.1,0) {$v_2$};
\end{tikzpicture}
\end{center}
\caption{A triangle in the symmetric difference of $G$ and $G'$}
\label{fig:triangle}
\end{figure}
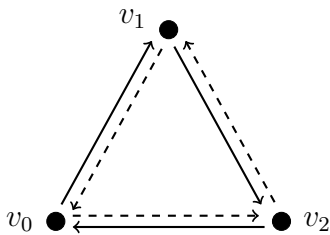
In order to process a triangle in the regular case, 
we use some results of LaMar~\cite{lamar,lamar2} about the
structure of directed graphs for which the directed switch chain 
is irreducible.  
Let $(x,U)$ denote
the set of all arcs of the form $(x,u)$ with $u\in U$,
and similarly for $(U,x)$.
Following LaMar, define the four vertex sets
\begin{equation} \label{classes}
\begin{split}
& U^0 = \{x \in [n] \setminus U: (x,U) \cup (U,x) \subset A(G)^c \}, \\
& U^- = \{x \in [n] \setminus U: (x,U) \subset A(G) \text{ and } (U,x) \subset A(G)^c \},\\
& U^+ = \{x \in [n] \setminus U: (x,U) \subset A(G)^c \text{ and } (U,x) \subset A(G) \},\\
& U^{\pm} = \{x \in [n] \setminus U: (x,U) \cup (U,x) \subset A(G) \}
\end{split}
\end{equation}
where $A(G)^c$ denotes the set of all non-arcs of $G$. 

Combining LaMar's results~\cite[Theorems 3.3 and 3.4]{lamar} gives
useful characterisation of degree sequences for which
$\Md$ is not irreducible which we restate here, for convenience.

\begin{lemma} \emph{\cite{lamar}}\
The directed switch chain $\Md$ fails to be irreducible if and only
if for every $G\in\Omegad$ there exists a vertex subset $U=\{ v_0,v_1,v_2\}$
such that $G[U]$ is a directed 3-cycle and the four sets $U^0$, $U^-$, $U^+$,
$U^{\pm}$ form a partition of $[n]\setminus U$, such that in addition,
\begin{itemize}
\item[\emph{(U1)}] no arcs from $U^0\cup U^+$ to $U^0\cup U^-$ are present, and
\item[\emph{(U1)}] all (non-loop) arcs from $U^-\cup U^{\pm}$ to $U^+\cup U^{\pm}$ are present.
\end{itemize}
\label{characterisation}
\end{lemma}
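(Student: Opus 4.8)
The plan is to obtain the statement as a faithful translation of LaMar's characterisation of switch-irreducibility, \cite[Theorems~3.3 and 3.4]{lamar}, into the notation fixed above; no new argument is needed, so the proof is essentially a careful combination and restatement of those two results.

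First I would recall LaMar's framework. He works with the set of all realisations of a directed degree sequence, and identifies the obstruction to irreducibility of the $2$-switch chain with the presence, in \emph{every} realisation, of a directed $3$-cycle on a vertex set $U=\{v_0,v_1,v_2\}$ around which the digraph is completely ``rigid'', meaning no available switch involves an arc of the cycle. LaMar's Theorem~3.3 provides the equivalence between reducibility and the existence of such a rigid triangle in every realisation, while his Theorem~3.4 describes the rigidity condition structurally.

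Second I would make the translation explicit. Fix $G\in\Omegad$ and a directed $3$-cycle on $U$. For $x\in[n]\setminus U$, the adjacency pattern between $x$ and $U$ determines which of the four classes $U^0$, $U^-$, $U^+$, $U^{\pm}$ of~(\ref{classes}) contains $x$, \emph{provided} that pattern is uniform (all arcs present, or all absent, in each direction). Uniformity is forced by rigidity: a mixed adjacency between some $x$ and $U$ would allow a switch pairing an arc incident with $x$ and $U$ with an arc of the triangle, contradicting reducibility. This gives the partition of $[n]\setminus U$. The two remaining conditions --- that no arc runs from $U^0\cup U^+$ to $U^0\cup U^-$, and that every non-loop arc from $U^-\cup U^{\pm}$ to $U^+\cup U^{\pm}$ is present --- are precisely the further instances of ``no available switch touches the triangle'': a cross arc of the wrong type, present or absent, would again pair with a triangle arc to yield a switch. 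Arcs internal to $[n]\setminus U$ are unconstrained, and loops are excluded, both of which must be tracked with care.

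Third I would verify the converse: if every $G\in\Omegad$ carries such a $U$ satisfying the partition and arc conditions, then, following LaMar, the realisation set splits into switch-invariant classes distinguished by the cyclic orientation of the forced triangle, so the chain is reducible; I would cite this part of LaMar's proof directly, checking only that the hypotheses match. The main obstacle I anticipate is bookkeeping rather than conceptual: LaMar phrases his results in somewhat different language, and one must confirm that his move set coincides with the $2$-switch chain used here, so the real work is a line-by-line check that the two bulleted conditions above are logically equivalent to the conditions in \cite[Theorems~3.3 and 3.4]{lamar}, with the quantifier over $G\in\Omegad$ correctly placed and with loops and internal arcs handled correctly.
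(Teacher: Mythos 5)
Your proposal matches the paper's treatment exactly: the paper offers no independent proof of this lemma, presenting it only as a restatement obtained by combining LaMar's Theorems~3.3 and~3.4, which is precisely the citation-and-translation route you describe. The additional sketch you give of why rigidity forces the partition and the two arc conditions is a reasonable gloss on LaMar's argument, but it is not required by (and does not diverge from) what the paper does.
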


As in~\cite{directed}, we say that a vertex $x\not\in U$ is a \emph{useful neighbour} 
for the directed 3-cycle on $U$ if $u\not\in U^0 \cup U^-\cup U^+ \cup U^{\pm}$.
Similarly, we say that the arc $(x,y)$ is a \emph{useful arc} for the directed 3-cycle
on $U$ if one of the following conditions holds:
\begin{itemize}
\item[(i)] $(x,y)\in A(G)$, with $x\in U^0\cup U^+$ and $y\in U^0\cup U^-$, or
\item[(ii)] $(x,y)\not\in A(G)$, with $x\in U^-\cup U^{\pm}$ and $y\in U^+\cup U^{\pm}$.
\end{itemize}
The following lemma 
is needed in the definition of the multicommodity flow, in order
to handle 2-circuits which are triangles.
The proof given in the regular case (\cite[Lemma 2.3]{directed})
used the regularity assumption together with LaMar's characterisation 
(Lemma~\ref{characterisation}). 
Below we provide a more direct proof which does not rely on regularity.

\begin{lemma}
\label{triangle}
Suppose that $\vecdvec$ is a switch-irreducible directed degree sequence,
and that $G \in \Omegad$ contains a set of three vertices $U=\{v_0,v_1,v_2\}$ 
such that the induced digraph $G[U]$ is a directed 3-cycle. Then there exists a 
useful neighbour or a useful arc for this 3-cycle.
\end{lemma}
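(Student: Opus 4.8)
The plan is to prove this by contradiction: suppose that $G$ contains a directed 3-cycle on $U=\{v_0,v_1,v_2\}$ but there is no useful neighbour or useful arc for this cycle. The absence of useful neighbours means precisely that the four sets $U^0$, $U^-$, $U^+$, $U^{\pm}$ partition $[n]\setminus U$; and the absence of useful arcs means precisely that conditions (U1) on arcs from $U^0\cup U^+$ to $U^0\cup U^-$, and on (non-loop) arcs from $U^-\cup U^{\pm}$ to $U^+\cup U^{\pm}$, are all satisfied. So under this assumption, the particular digraph $G$ and this particular $U$ witness exactly the structure appearing in Lemma~\ref{characterisation}. The subtlety is that Lemma~\ref{characterisation} requires such a witness for \emph{every} $H\in\Omegad$ in order to conclude non-irreducibility, whereas we only have it for this one $G$. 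So the heart of the argument is a promotion step: to show that if one digraph $G\in\Omegad$ has such a ``frozen'' 3-cycle, then so does every digraph in $\Omegad$, contradicting switch-irreducibility.

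First I would set up notation and observe that the non-existence of a useful neighbour forces the partition, and the non-existence of a useful arc forces (U1a) and (U1b) exactly as in Lemma~\ref{characterisation}. Next comes the main work: I want to argue that this structural constraint is in fact a property of the degree sequence $\vecdvec$ rather than of the individual digraph $G$. The cleanest route is to show directly that the ``frozen'' configuration is invariant under directed switches. Concretely, suppose $G$ has the structure above with respect to $U$, and let $H$ be obtained from $G$ by a single directed switch. I would check that $H$ still contains a directed 3-cycle on some triple $U'$ with the same frozen structure — most likely $U'=U$ itself, because the arcs inside $U$ and the arcs between $U$ and the rest are so tightly constrained (all arcs from $U^-\cup U^{\pm}$ into $U^+\cup U^{\pm}$ are forced present, all arcs from $U^0\cup U^+$ into $U^0\cup U^-$ forced absent, the internal 3-cycle forced) that no switch can touch them in a way that destroys the pattern. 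One checks case by case which pairs of arcs a switch could involve; the degree constraints $r_{\min}\geq 1$ and $r_{\max}\leq \tfrac14\sqrt m$ and ``not all semi-degrees equal 1'' from Theorem~\ref{main-directed} should not even be needed here — irreducibility alone plus the frozen structure does the job, since by induction along any switch-path every digraph in the (irreducible, hence unique) component $\Omegad$ inherits the structure. Then Lemma~\ref{characterisation} applies and yields that $\Md$ is not irreducible, contradicting the hypothesis that $\vecdvec$ is switch-irreducible.

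An alternative, and perhaps more robust, route avoids invoking Lemma~\ref{characterisation} as a black box and instead exhibits an explicit obstruction to irreducibility: from the frozen structure on $G$, identify a second digraph $G'\in\Omegad$ (for instance, reverse the 3-cycle on $U$, which is legal since all the cross-arcs are symmetric in the required way within $U^{\pm}$ and absent within $U^0$) and then argue no sequence of switches connects $G$ to $G'$, because every switch preserves the invariant ``the triple $U$ carries a directed 3-cycle in one of the two orientations, with the surrounding arcs in the frozen pattern,'' while the two orientations cannot be interconverted by a single switch (a switch changes at most two arcs, but flipping a 3-cycle changes three). This again contradicts switch-irreducibility. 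Either way, the logical skeleton is: no useful neighbour/arc $\Rightarrow$ frozen structure on $G$ $\Rightarrow$ frozen structure on all of $\Omegad$ $\Rightarrow$ $\Md$ reducible, contradiction.

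The main obstacle I anticipate is the case analysis in the invariance step — verifying that \emph{no} directed switch applied to a frozen digraph can escape the frozen family. One has to be careful about switches that involve one arc inside the $U$-to-rest interface and one arc entirely outside, and about switches among the ``rest'' vertices that might move a vertex between the classes $U^0,U^-,U^+,U^{\pm}$; one must confirm the defining membership conditions (all of $(x,U)$ present/absent, all of $(U,x)$ present/absent) are themselves switch-stable given (U1a)--(U1b). I expect this to go through cleanly precisely because (U1a) and (U1b) are ``all-or-nothing'' conditions on large arc sets, so a single switch (changing only two arcs) cannot violate them without creating a useful arc, which we assumed does not exist; but writing this out carefully, enumerating the handful of switch types, is where the real content lies.
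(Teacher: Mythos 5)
Your proposal is correct, but it takes a genuinely different and much more self-contained route than the paper, whose entire proof is a direct appeal to Lemma~\ref{characterisation}: the absence of a useful neighbour and of a useful arc for the 3-cycle on $U$ is, by definition, exactly the frozen configuration (the partition into $U^0,U^-,U^+,U^{\pm}$ plus (U1)--(U2)), and the paper simply declares this incompatible with switch-irreducibility. The quantifier mismatch you flag is real as a matter of how Lemma~\ref{characterisation} is worded (it asks for a frozen triple in \emph{every} realization, whereas you hold one only in $G$); the paper glosses over this, implicitly relying on LaMar's original theorems, in which the gap is closed by exactly the promotion step you describe. Your invariance claim does hold and the case analysis closes cleanly: every arc forced present by the configuration cannot be deleted and every arc forced absent cannot be created, because the second arc a switch would need is itself forced the wrong way. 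For instance, to delete $(x,v_0)$ with $x\in U^-$ one needs vertices $k,\ell$ with $(k,\ell)\in A(G)$, $(x,\ell)\notin A(G)$ and $(k,v_0)\notin A(G)$; the last two conditions force $\ell\in U^0\cup U^-$ and $k\in U^0\cup U^+\cup\{v_1\}$, and then (U1) (together with the definition of $U^0$ and $U^-$) forces $(k,\ell)\notin A(G)$, a contradiction. Hence the frozen structure, and in particular the orientation of the 3-cycle on $U$, is a switch invariant, and reversing that 3-cycle produces a second element of $\Omegad$ unreachable from $G$. What your approach buys is a proof that does not depend on reading LaMar's characterisation with the correct quantifier; what it costs is carrying out this case analysis in full, which amounts to re-proving the relevant direction of the theorem that the paper cites as a black box.
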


\begin{proof}
We apply LaMar's characterisation, stated above as Lemma~\ref{characterisation}.
Since $\vecdvec$ is switch-irreducible, Lemma~\ref{characterisation} guarantees
that either there exists a vertex 
$x\in [n]\setminus U$ which does not belong to $U^0\cup U^-\cup U^+\cup U^{\pm}$,
or there exists an arc $(x,y)$ which contradicts either (U1) or (U2).
In the first case $x$ is a useful neighbour of the 3-cycle on $U$, while
in the second case $(x,y)$ is a useful arc for the 3-cycle on $U$.
\end{proof}

With this lemma in hand, the same definition of multicommodity flow
from~\cite{directed} may be used, for any directed degree sequence
$\vecdvec$ which satisfies the
conditions of Theorem~\ref{main-directed}. 

\subsection{Analysing the flow}\label{s:analysing-directed}

Given $G,G',Z \in \Omegad$, we define the $n \times n$ matrix $L$ to be 
an \textsl{encoding} of $Z$ (with respect to $G,G'$) by setting $L + Z = G+G'$, 
as for the undirected case. Treating $L$ as an arc-labelled digraph, we label 
the arcs with $-1,1$ or $2$ (zero entries correspond to arcs which are
absent). A $(-1)$-\textsl{defect arc} is an arc labelled $-1$.
This is an arc which is absent in both $G$ and $G'$ but present in $Z$.
Similarly, a $2$-\emph{defect arc} is an arc labelled $2$. 
This is an arc which is present in both $G$ and $G'$ but absent in $Z$.
We write $L(a,b)$ for the label of the directed arc $(a,b)$ in the 
encoding $L$. 

The next lemma is the directed analogue of Lemma~\ref{oldstuff}, collecting
together some important results from~\cite{directed}: the proofs
given there did not rely on regularity, and so they extend without change
to irregular directed degree sequences.

\begin{lemma} \label{configuration}
Given $G,G' \in \Omegad$ with symmetric difference $G \triangle G'$, 
let $(Z,Z')$ be a transition on the canonical path from $G$ to $G'$ with 
respect to the pairing $\psi \in \Psi(G,G')$. Let $L$ be the encoding of $Z$ with respect to $(G,G')$. Then the following statements hold:
\begin{itemize}
\item[\emph{(i)}] 
\emph{(\cite[Lemma 5.2]{directed})}\
Given $(Z,Z'), L$ and $\psi$, there are at most four possibilities for $(G,G')$ such that $(Z,Z')$ is a transition along the canonical path from $G$ to $G'$ corresponding to $\psi$ and $L$ is an encoding for $Z$ with respect to $(G,G')$.
\item[\emph{(ii)}]
\emph{(\cite[Lemma 5.1]{directed})}\
There are at most five defect arcs in $L$.
The digraph consisting of the defect arcs in $L$ must form a subdigraph of 
one of the possible labelled digraphs shown in Figure~\ref{f:configuration},
up to the symmetries described below.
\end{itemize}
\end{lemma}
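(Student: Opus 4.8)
The plan is to import the two cited results, \cite[Lemma~5.1]{directed} and \cite[Lemma~5.2]{directed}, essentially verbatim, after verifying the one thing that needs verifying: that neither argument uses regularity. Both statements are purely combinatorial facts about the symmetric difference $G\triangle G'$ and about the deterministic procedure that turns a pairing $\psi$ into the canonical path; the individual in-degrees and out-degrees never enter. So I would structure the proof as a short recollection of the relevant steps, flagging at each point that only local circuit structure is being used.

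For part~(i): the encoding relation $L+Z=G+G'$ lets us recover the matrix $G+G'$ from $L$ and $Z$, and hence both the symmetric difference $G\triangle G'$ (entries of $G+G'$ equal to $1$) and the common part $G\cap G'$ (entries equal to $2$). What remains undetermined is the $2$-arc-colouring of $G\triangle G'$, i.e.\ which arcs lie in $G\setminus G'$ and which in $G'\setminus G$. Knowing $\psi$ fixes the decomposition of $G\triangle G'$ into circuits, then into $1$-circuits and $2$-circuits, together with their processing order and start vertices; knowing the transition $(Z,Z')$ pins down our position along the canonical path, hence which $1$- or $2$-circuit is currently being processed and how far through it we are. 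Running the (deterministic) switch procedure backwards from $(Z,Z')$ reconstructs the arc colours, up to a bounded number of local choices that the data $(Z,Z'),L,\psi$ leaves open; this is exactly what produces the ``at most four''. I would reproduce the case analysis of \cite[Lemma~5.2]{directed} here.

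For part~(ii): defect arcs (labels $-1$ and $2$) arise only as shortcut arcs and odd chords introduced or removed while processing the current $1$-circuit or $2$-circuit, plus the extra arcs that can appear when the $2$-circuit is a triangle (here one invokes Lemma~\ref{triangle} in place of the regularity-dependent \cite[Lemma~2.3]{directed}). Since $1$-circuits and $2$-circuits are pairwise arc-disjoint, and circuits processed earlier along the path are already fully resolved, at any intermediate state $Z$ only constantly many defect arcs are present, all confined near the current circuit. A finite check over the possible stages of processing a $1$-circuit and of processing a $2$-circuit (including the triangle subcase) then yields the bound of five defect arcs and the list of admissible labelled subdigraphs in Figure~\ref{f:configuration}, up to the stated symmetries (reversing all arc directions, and swapping $G$ with $G'$, which interchanges the labels $-1$ and $2$).

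I do not expect a genuine obstacle: the one place the original flow construction in \cite{directed} used regularity was precisely the handling of triangles, and that has been replaced by the degree-sequence-free Lemma~\ref{triangle} above. With that substitution in hand, the remaining content is the bookkeeping of re-examining each case of the two cited proofs and confirming that none of them appeals to all semi-degrees being equal; the mild point to be careful about is simply to cross-check every branch of those case analyses rather than assume uniformity of the arguments.
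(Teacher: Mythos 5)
Your proposal matches the paper's treatment: Lemma~\ref{configuration} is justified there precisely by citing \cite[Lemmas 5.1 and 5.2]{directed} and observing that those proofs never invoke regularity (with the triangle case now covered by the regularity-free Lemma~\ref{triangle}), which is exactly your plan. One small slip worth noting: the label-swap symmetries in Figure~\ref{f:configuration} are the independent exchanges of $\{\mu,\nu\}$ and $\{\xi,\omega\}$ within $\{-1,2\}$ together with the arc-reversal $\zeta$, not a consequence of swapping $G$ with $G'$ --- since $L+Z=G+G'$ is symmetric in $G$ and $G'$, that swap fixes $L$ and does not interchange the labels $-1$ and $2$.
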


Define the \emph{arc-reversal} operator $\zeta$,
which acts on a digraph
$G$ by reversing every arc in $G$; that is, replacing $(u,v)$ by $(v,u)$
for every arc $(u,v)\in A(G)$.  
In Figure~\ref{f:configuration}, $\{ \mu,\nu\} = \{ -1,2\}$ and $\{ \xi,\omega\} = \{-1,2\}$ independently, giving four symmetries obtained by exchanging these pairs. 
We can also apply the operation $\zeta$ to reverse the orientation of all arcs. 
Hence each digraph shown in Figure~\ref{f:configuration} represents up to eight 
possible digraphs. 

\begin{figure}[ht!] 
\begin{center}
\begin{tikzpicture}[scale = 1.2]
\draw[fill] (1,2) circle (0.1);
\draw[fill] (1,1) circle (0.1);
\draw[fill] (1,3) circle (0.1);
\draw[fill] (0,2) circle (0.1);
\draw[fill] (2,2) circle (0.1);
\draw[fill] (2,1) circle (0.1);
\draw[thick,->] (0.85,2) -- (0.15,2);
\draw[thick,->] (1,2.15) -- (1,2.85);
\draw[thick,->] (1,1.85) -- (1,1.15);
\draw[thick,->] (1.85,2) -- (1.15,2);
\draw[thick,->] (2,1.85) -- (2,1.15);
\node [right] at (1,2.5) {$\mu$};
\node [above] at (0.5,2) {$\mu$};
\node [left] at (1,1.5) {$\nu$};
\node [above] at (1.5,2) {$\omega$};
\node [right] at (2,1.5) {$\xi$};

\draw[fill] (3,2) circle (0.1);
\draw[fill] (4,2) circle (0.1);
\draw[fill] (5,2) circle (0.1);
\draw[fill] (4,1) circle (0.1);
\draw[fill] (4,3) circle (0.1);
\draw[thick,->] (3.85,2) -- (3.15,2);
\draw[thick,->] (4,2.15) -- (4,2.85);
\draw[thick,->] (4,1.85) -- (4,1.15);
\draw[thick,->] (4.85,2) -- (4.15,2);
\draw[thick,->] (4.9,1.9) -- (4.1,1.1);
\node [right] at (4,2.5) {$\mu$};
\node [above] at (3.5,2) {$\mu$};
\node [left] at (4,1.5) {$\nu$};
\node [above] at (4.5,2) {$\omega$};
\node [right] at (4.5,1.5) {$\xi$};

\draw[fill] (6,2) circle (0.1);
\draw[fill] (7,2) circle (0.1);
\draw[fill] (8,2) circle (0.1);
\draw[fill] (7,1) circle (0.1);
\draw[fill] (7,3) circle (0.1);
\draw[thick,->] (6.85,2) -- (6.15,2);
\draw[thick,->] (7,2.15) -- (7,2.85);
\draw[thick,->] (7,1.85) -- (7,1.15);
\draw[thick,->] (7.85,2) -- (7.15,2);
\draw[thick,->] (7.9,1.9) -- (7.1,1.1);
\node [right] at (7,2.5) {$\mu$};
\node [above] at (6.5,2) {$\nu$};
\node [left] at (7,1.5) {$\mu$};
\node [above] at (7.5,2) {$\omega$};
\node [right] at (7.5,1.5) {$\xi$};

\draw[fill] (9,2) circle (0.1);
\draw[fill] (10,2) circle (0.1);
\draw[fill] (9,1) circle (0.1);
\draw[fill] (10,1) circle (0.1);
\draw[fill] (10,3) circle (0.1);
\draw[thick,->] (9.85,2) -- (9.15,2);
\draw[thick,->] (10,2.15) -- (10,2.85);
\draw[thick,->] (9.95,1.85) -- (9.95,1.15);
\draw[thick,->] (10.05,1.15) -- (10.05,1.85);
\draw[thick,->] (9.85,1) -- (9.15,1);
\node [right] at (10,2.5) {$\mu$};
\node [above] at (9.5,2) {$\mu$};
\node [left] at (10,1.5) {$\nu$};
\node [right] at (10,1.5) {$\omega$};
\node [below] at (9.5,1) {$\xi$};

\draw[fill] (1.5,-1) circle (0.1);
\draw[fill] (1.5,-2) circle (0.1);
\draw[fill] (1.5,-3) circle (0.1);
\draw[fill] (0.5,-2) circle (0.1);
\draw[fill] (0.5,-3) circle (0.1);
\draw[thick,->] (1.35,-2) -- (0.65,-2);
\draw[thick,->] (1.5,-1.85) -- (1.5,-1.15);
\draw[thick,->] (1.45,-2.15) -- (1.45,-2.85);
\draw[thick,->] (1.55,-2.85) -- (1.55,-2.15);
\draw[thick,->] (1.35,-3) -- (0.65,-3);
\node [right] at (1.5,-1.5) {$\mu$};
\node [above] at (1,-2) {$\nu$};
\node [left] at (1.5,-2.5) {$\mu$};
\node [right] at (1.5,-2.5) {$\omega$};
\node [below] at (1,-3) {$\xi$};

\draw[fill] (3.5,-2) circle (0.1);
\draw[fill] (4.5,-1) circle (0.1);
\draw[fill] (4.5,-2) circle (0.1);
\draw[fill] (4.5,-3) circle (0.1);
\draw[thick,->] (4.35,-2) -- (3.65,-2);
\draw[thick,->] (4.5,-1.85) -- (4.5,-1.15);
\draw[thick,->] (4.45,-2.15) -- (4.45,-2.85);
\draw[thick,->] (4.55,-2.85) -- (4.55,-2.15);
\draw[thick,->] (4.4,-2.9) -- (3.6,-2.1);
\node [right] at (4.5,-1.5) {$\mu$};
\node [above] at (4,-2) {$\mu$};
\node [left] at (4.5,-2.5) {$\nu$};
\node [right] at (4.5,-2.5) {$\omega$};
\node [left] at (4,-2.6) {$\xi$};

\draw[fill] (6.5,-2) circle (0.1);
\draw[fill] (7.5,-1) circle (0.1);
\draw[fill] (7.5,-2) circle (0.1);
\draw[fill] (7.5,-3) circle (0.1);
\draw[thick,->] (7.35,-2) -- (6.65,-2);
\draw[thick,->] (7.5,-1.85) -- (7.5,-1.15);
\draw[thick,->] (7.45,-2.15) -- (7.45,-2.85);
\draw[thick,->] (7.55,-2.85) -- (7.55,-2.15);
\draw[thick,->] (7.4,-2.9) -- (6.6,-2.1);
\node [right] at (7.5,-1.5) {$\mu$};
\node [above] at (7,-2) {$\nu$};
\node [left] at (7.5,-2.5) {$\mu$};
\node [right] at (7.5,-2.5) {$\omega$};
\node [left] at (7,-2.6) {$\xi$};

\draw[fill] (9,-2) circle (0.1);
\draw[fill] (10,-1) circle (0.1);
\draw[fill] (10,-2) circle (0.1);
\draw[fill] (10,-3) circle (0.1);
\draw[thick,->] (9.85,-2) -- (9.15,-2);
\draw[thick,->] (10,-1.85) -- (10,-1.15);
\draw[thick,->] (9.95,-2.15) -- (9.95,-2.85);
\draw[thick,->] (10.05,-2.85) -- (10.05,-2.15);
\draw[thick,->] (9.9,-2.9) -- (9.1,-2.1);
\node [right] at (10,-1.5) {$\nu$};
\node [above] at (9.5,-2) {$\mu$};
\node [left] at (10,-2.5) {$\mu$};
\node [right] at (10,-2.5) {$\omega$};
\node [left] at (9.5,-2.6) {$\xi$};

\draw[-] (2.5,-3.1) -- (2.5,3.1);
\draw[-] (5.5,-3.1) -- (5.5,3.1);
\draw[-] (8.5,-3.1) -- (8.5,3.1);
\draw[-] (-0.1,0) -- (10.6,0);

\end{tikzpicture}
\caption{Possible configurations of defect arcs, up to symmetries.}
\label{f:configuration}
\end{center}
\end{figure}
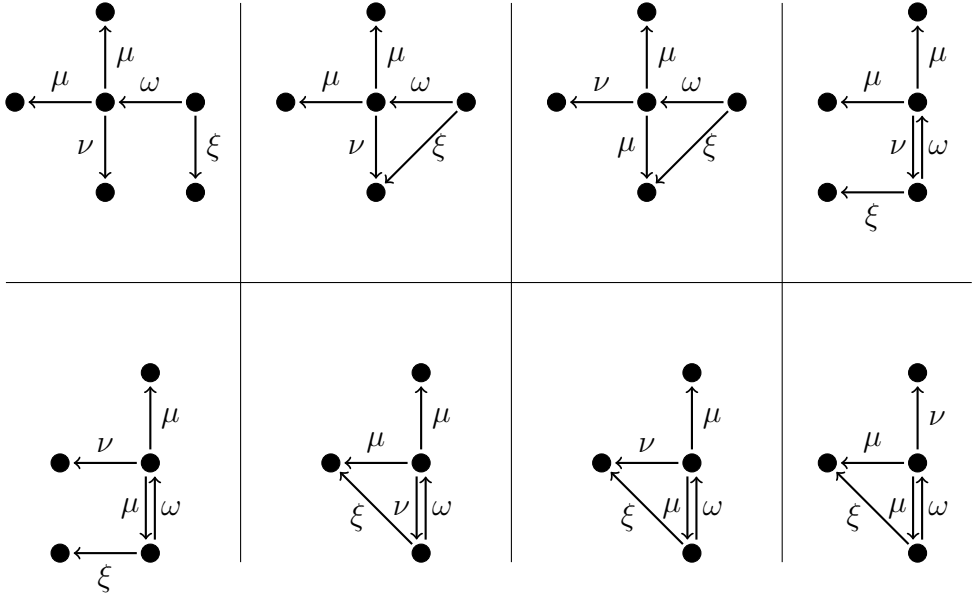

We now generalize the notion of an encoding: given a directed
degree sequence $\vecdvec$, an encoding is
any $n \times n$ matrix $L$ with 
entries in $\{ -1,0,1,2 \}$ such that the $j$th row sum is the out-degree 
$d_j^+$ and 
the $j$th column sum is the in-degree $d_j^-$, for all $j\in [n]$.
We say that an encoding $L$ is \textsl{consistent} with $Z$ if every entry of 
$L+Z$ belongs to $\{0,1,2\}$, and we say that an encoding $L$ is 
\textsl{valid} if $L$ satisfies the conclusion of Lemma~\ref{configuration}(ii). 
Let $\mathcal{L}(Z)$ be the set of valid encodings that are consistent with $Z$.

The next result is proved just as in the regular case, 
see~\cite[Lemma 5.7]{directed}, since the regularity assumption
was not used in the proof given there.

\begin{lemma}  \label{effee}
\emph{(\cite[Lemma 5.7]{directed})}\
The load $f(e)$ on the transition $e=(Z,Z')$ satisfies  
$$ f(e) \leq 4 \, \sdmax^{16} \,\frac{|\mathcal{L}(Z)|}{|\Omegad|^2}.$$
\end{lemma}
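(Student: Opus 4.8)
The plan is to adapt the proof of the regular case, \cite[Lemma 5.7]{directed}, verifying that regularity is never used; this mirrors the proof of Lemma~\ref{fbound} in the undirected setting. First I would unpack the definition of $f(e)$ via~(\ref{f-def}): for a transition $e=(Z,Z')$,
\[
 |\Omegad|^2\, f(e) = \sum_{(G,G')}\, \sum_{\psi\in\Psi(G,G')}\,
  \mathbf{1}(e\in\gamma_\psi(G,G'))\, \frac{1}{|\Psi(G,G')|}.
\]
Given a term of this sum, set $L$ to be the encoding of $Z$ with respect to $(G,G')$, so that $L+Z=G+G'$; then $L$ is a valid encoding consistent with $Z$, i.e.\ $L\in\mathcal{L}(Z)$. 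By Lemma~\ref{configuration}(i), the data $((Z,Z'),L,\psi)$ determines $(G,G')$ up to at most four choices, so the sum over pairs $(G,G')$ can be replaced by a sum over $L\in\mathcal{L}(Z)$ at the cost of a factor of $4$.

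Next, given $Z$ and $L\in\mathcal{L}(Z)$ one recovers the symmetric difference $G\triangle G'=L\triangle Z$ together with its ``yellow--green'' colouring: yellow arcs carry label $1$ in $L$ and are absent in $Z$, green arcs carry label $0$ in $L$ and are present in $Z$. The key structural input, carried over from~\cite{directed}, is that whenever $e\in\gamma_\psi(G,G')$ the pairing $\psi$ matches yellow in-arcs to green in-arcs and yellow out-arcs to green out-arcs at every vertex, with at most $16$ exceptional ``bad pairs'' where yellow is matched to yellow or green to green; this is forced by the fact (Lemma~\ref{configuration}(ii)) that there are at most five defect arcs, lying on a bounded configuration as in Figure~\ref{f:configuration}. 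Writing $\Psi'(L)$ for the set of pairings of $L\triangle Z$ with at most $16$ bad pairs, this gives
\[
 |\Omegad|^2\, f(e) \leq 4\sum_{L\in\mathcal{L}(Z)}\, \sum_{\psi\in\Psi'(L)}\, \frac{1}{|\Psi(G,G')|},
\]
since each pair $(L,\psi)$ contributes at most one $(G,G')$ with $e\in\gamma_\psi(G,G')$ and every such $(G,G')$ is counted.

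The remaining step is the counting bound $|\Psi'(L)|\leq \sdmax^{16}\,|\Psi(G,G')|$ for every $L\in\mathcal{L}(Z)$. The argument of~\cite{directed} is that, starting from a pairing with bad pairs, one repairs each bad pair by redirecting it to a suitable arc of the opposite colour at the same vertex; there are at most $d_v^-\leq\sdmax$ blue (or red) in-arcs and at most $d_v^+\leq\sdmax$ out-arcs at any vertex $v$, so with at most $16$ bad pairs the number of pairings mapping to a given good pairing is at most $\sdmax^{16}$. This uses only the semi-degree bound, not regularity. Substituting into the previous display and cancelling $|\Psi(G,G')|$ yields $|\Omegad|^2\, f(e)\leq 4\,\sdmax^{16}\,|\mathcal{L}(Z)|$, which is the claim. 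The main thing to be careful about is confirming, case by case against the configurations in Figure~\ref{f:configuration}, that both the bound of $16$ bad pairs and the repair argument genuinely go through without the regularity hypothesis used in~\cite{directed}; since those proofs only manipulate the symmetric difference and the flow construction, I expect this to cause no real difficulty.
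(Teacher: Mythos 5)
Your proposal is correct and follows essentially the same route as the paper: the paper simply observes that the proof of \cite[Lemma 5.7]{directed} never uses regularity and carries over verbatim, and your reconstruction of that argument (the factor $4$ from Lemma~\ref{configuration}(i), the yellow--green colouring, the bound of $16$ bad pairs from the at most five defect arcs, and the ratio bound $|\Psi'(L)|\leq \sdmax^{16}\,|\Psi(G,G')|$) mirrors exactly the structure of the undirected Lemma~\ref{fbound} as adapted to the directed setting.
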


As in Section~\ref{s:undirected}, we can extend the directed
switch operation to encodings,
ensuring that we never create a label outside the set $\{ -1,0,1,2\}$.
We wish to use switchings on encodings to prove an upper bound on the ratio
$|\mathcal{L}(Z)|/|\Omegad|$, so we can substitute this bound back into 
Lemma~\ref{effee}.
In the regular case, this was achieved using 
Lemma~\cite[Lemma 5.5]{directed},
which proved that from any encoding in $\mathcal{L}(Z)$, one could obtain
a digraph in $\Omegad$ using at most three switches. 
But the proof of this ``critical lemma'' relied heavily on regularity, 
so we need a new approach here.

As in the undirected case (Section~\ref{s:undirected}), we introduce
a less tightly constrained operation
for removing defect edges.
A \textsl{directed $3$-switch} is described by a 6-tuple 
$(a_1, b_1, a_2, b_2, a_3, b_3)$ of distinct vertices of $G$ such that the 
arcs $(a_1,b_1)$, $(a_2, b_2) $, $(a_3, b_3)$ are all present in $G$ 
and the arcs $(a_2, b_1)$, $(a_3, b_2)$, $(a_1, b_3 )$ are not. 
The directed 3-switch deletes the three arcs $(a_1,b_1)$, $(a_2, b_2 )$, 
$(a_3, b_3) $ from the arc set, and replaces them with $(a_2, b_1)$, 
$(a_3, b_2) $, $(a_1, b_3)$ as shown in Figure~\ref{f:directed3switch}.

\begin{figure}[ht!] 
\begin{center}
\begin{tikzpicture}[scale = 1]
\draw[fill] (0,0) circle (0.1);
\draw[fill] (0,2) circle (0.1);
\draw[fill] (2,0) circle (0.1);
\draw[fill] (2,2) circle (0.1);
\draw[fill] (4,0) circle (0.1);
\draw[fill] (4,2) circle (0.1);
\draw[fill] (8,0) circle (0.1);
\draw[fill] (8,2) circle (0.1);
\draw[fill] (10,0) circle (0.1);
\draw[fill] (10,2) circle (0.1);
\draw[fill] (12,0) circle (0.1);
\draw[fill] (12,2) circle (0.1);
\draw[thick,->] (0.15,2) -- (1.85,2);
\draw[thick,->] (4,1.85) -- (4,0.15);
\draw[thick,->] (1.85,0) -- (0.15,0);
\draw[dashed,thick,->] (3.85,2) -- (2.15,2);
\draw[dashed,thick,->] (2.15,0) -- (3.85,0);
\draw[dashed,thick,->] (0,1.85) -- (0,0.15);
\node [below] at (0,-0.1) {$b_3$};
\node [below] at (2,-0.1) {$a_3$};
\node [below] at (4,-0.1) {$b_2$};
\node [above] at (0,2.1) {$a_1$};
\node [above] at (2,2.1) {$b_1$};
\node [above] at (4,2.1) {$a_2$};
\draw[dashed,thick,->] (8.15,2) -- (9.85,2);
\draw[dashed,thick,->] (12,1.85) -- (12,0.15);
\draw[dashed,thick,->] (9.85,0) -- (8.15,0);
\draw[thick,->] (11.85,2) -- (10.15,2);
\draw[thick,->] (10.15,0) -- (11.85,0);
\draw[thick,->] (8,1.85) -- (8,0.15);
\node [below] at (8,-0.1) {$b_3$};
\node [below] at (10,-0.1) {$a_3$};
\node [below] at (12,-0.1) {$b_2$};
\node [above] at (8,2.1) {$a_1$};
\node [above] at (10,2.1) {$b_1$};
\node [above] at (12,2.1) {$a_2$};
\draw [->,line width = 1mm] (5.5,1) -- (6.5,1);
\end{tikzpicture}
\caption{A directed 3-switch.}
\label{f:directed3switch}
\end{center}
\end{figure}
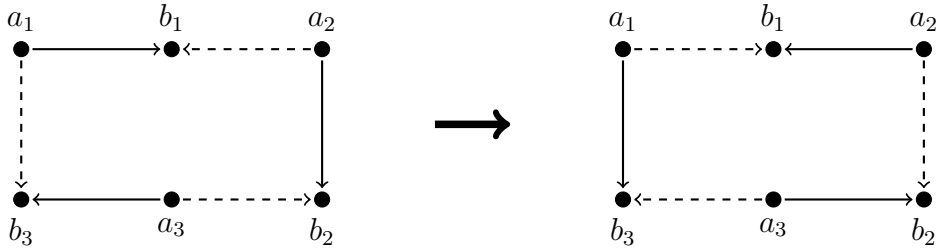
The directed 3-switch can also be extended to encodings, making sure
that after the 3-switch, all labels belong to $\{ -1,0,1,2\}$.

Let $C(p,q)$ be the set of encodings in $\mathcal{L}(Z)$ with precisely $p$ defect 
arcs  labelled 2
and precisely $q$ defect arcs labelled $-1$, for $p,q \in \{0,1,2,3\}$. 
Then $\Omegad = C(0,0)$ and 
$$ \mathcal{L}(Z) =  \bigcup_{p=0}^3\,\bigcup_{q=0}^3\,C(p,q),$$
where this union is disjoint. 
(Note that $C(3,3) = \emptyset$ as there are at most five defect arcs,
by Lemma~\ref{configuration}(ii).)

In Section~\ref{s:undirected} we used a special switch when the maximum number
of defect edges were present (that is, in Phase 1): 
this switch reduced both the number of 2-defect
edges and $(-1)$-defect edges by one. 
We were able to do this by proving extra structural information about the
defect edges (see Lemma~\ref{structure}).  Unfortunately, we were unable to
prove the analogous result in the directed case. (The main difficulty
arises from 2-circuits which are triangles.) 
Hence we will proceed by removing one defect per directed 3-switch,
requiring at most 5 directed 3-switches to transform an encoding
in $\mathcal{L}(Z)$ into an element of $\Omegad$. 
Since Phase 1 is missing from our analysis in the directed setting, 
we rename Phase 2 as Phase A and Phase 3 as Phase B.  

If $L \in C(p,q)$ then there are precisely $m - 2p + q $ non-defect 
arcs in $L$. 
Given an encoding $L$ and vertex $v\in [n]$, let 
$N_L^+(v)$ denote the set of out-neighbours of $v$ (only along
non-defect arcs), that is,
\[ N_L^+(v) = \{ w \in [n] \setminus \{v\}\mid L(v,w) = 1\}.\]
Similarly, 
\[ N_L^-(v) = \{ w \in [n] \setminus \{v\} \mid L(w,v) = 1\}.\]
We also define the in- and out-neighbourhood when neighbours
along defect arcs are included:
\begin{align*}
 \hatN_L^+(v) = \{ w\in [n]\setminus \{v\} \mid L(v,w)\neq 0\}\\
 \hatN_L^-(v) = \{ w\in [n]\setminus \{v\} \mid L(w,v)\neq 0\}.
\end{align*}

Recall that the arc $(v,w)$ has \emph{tail} $v$ and \emph{head} $w$.
Let $\zeta_v^-$ (respectively, $\eta_v^-$) be the number of 2-defect arcs
(respectively, ($-1)$-defect arcs) with $v$ as head,
and define $\zeta_v^+$, $\eta_v^+$ similarly, for defect arcs with
$v$ as tail.  The directed analogues of (\ref{good-edges}) and
(\ref{all-edges}) are
\begin{align}
   |N_L^-(v)| &= d_v^- - 2\zeta_v^- + \eta_v^-,\qquad
   |N_L^+(v)| = d_v^+ - 2\zeta_v^+ + \eta_v^+, \label{good-edges-directed}\\
   |\hatN_L^-(v)| &= d_v^- - \zeta_v^- + 2\eta_v^-,\qquad
   |\hatN_L^+(v)| = d_v^+ - \zeta_v^+ + 2\eta_v^+\ . \label{all-edges-directed}
\end{align}

We can now give the directed analogue of Lemma~\ref{useful-bounds}.

\begin{lemma}
\label{useful-bounds-directed}
Suppose that $L\in C(p,q)$ and let $a_1,b_1$ be distinct vertices with $L(a_1,b_1)\neq 0$.
\begin{itemize}
\item[\emph{(i)}] The number of ways to choose an ordered pair of vertices
$(a_2,b_2)$ such that $L(a_2,b_2)=1$ and $L(a_2,b_1)=0$, with $a_1,b_1,a_2,b_2$ all distinct,
is at least 
\begin{align*} m - 2p + q -
 & \Bigg( \sdmax\Big(\sdmax -\zeta^-_{b_1} + 2\eta^-_{b_1} + 2\Big) + \eta^-_{a_1} + \eta^+_{b_1} 
-  2(\zeta^-_{a_1} + \zeta^+_{b_1})
 \\ & \hspace*{6cm} {} 
+ \sum_{y\in \hatN^-_L(b_1)} (\eta^+_y - 2\zeta^+_y)\Bigg).
\end{align*}
\item[\emph{(ii)}]
Now suppose that $a_1$, $b_1$, $a_2$, $b_2$ are distinct vertices with $L(a_2,b_2)=1$.
Define
\[ \eta^\ast = \eta^-_{a_1} + \eta^+_{b_1} + \eta^-_{a_2} + \eta^+_{b_2},\qquad
   \zeta^\ast = \zeta^-_{a_1} + \zeta^+_{b_1} + \zeta^-_{a_2} + \zeta^+_{b_2}.
\]
The number of ways to choose an ordered pair of vertices $(a_3,b_3)$ such that
$L(a_3,b_3)=1$ and $L(a_1,b_3)=L(a_3,b_2)=0$, with $a_1,b_1,a_2,b_2,a_3,b_3$ all distinct, 
is at least
\begin{align*}
m - 2p+q - & \Bigg(\sdmax\Big(2\sdmax - (\zeta^+_{a_1} +\zeta^-_{b_2}) + 2(\eta^+_{a_1} + \eta^-_{b_2}) + 4\Big)
  + \eta^\ast - 2\zeta^\ast 
 \\ & \hspace*{2cm} {} 
   + \sum_{x\in\hatN_L^+(a_1)} (\eta^-_x - 2\zeta^-_x) + \sum_{y\in\hatN^-_L(b_2)} (\eta^+_y - 2\zeta^+_y) \Bigg).
\end{align*}
\end{itemize}
\end{lemma}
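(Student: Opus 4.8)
The plan is to follow the proof of Lemma~\ref{useful-bounds} from the undirected case almost verbatim, replacing the degree identities (\ref{good-edges}), (\ref{all-edges}) by their directed analogues (\ref{good-edges-directed}), (\ref{all-edges-directed}), and taking care to distinguish heads from tails of arcs (this is the directed counterpart both of the ``two orientations'' bookkeeping in the undirected proof and of the small gap in~\cite{SODA} coming from a possible defect arc between a newly chosen vertex and an already chosen one). For (i), I would start from the observation that the ordered pairs $(a_2,b_2)$ with $L(a_2,b_2)=1$ are in bijection with the non-defect arcs of $L$, of which there are exactly $m-2p+q$. From this I subtract the bad pairs: those for which $a_1,b_1,a_2,b_2$ are not all distinct, and those for which $L(a_2,b_1)\neq 0$. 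The claim is that the number of bad pairs is at most
\[ \sum_{y\in\hatN_L^-(b_1)}|N_L^+(y)| \;+\; |N_L^+(b_1)| \;+\; |N_L^-(a_1)|. \]
Indeed, the sum counts exactly the ordered pairs $(y,z)$ with $L(y,z)=1$ and $L(y,b_1)\neq 0$; since $L(a_1,b_1)\neq 0$ by hypothesis we have $a_1\in\hatN_L^-(b_1)$, so this already accounts for every bad pair with $L(a_2,b_1)\neq 0$, every bad pair with $a_2=a_1$, and every bad pair with $b_2=b_1$ (as then $L(a_2,b_1)=1$); the only remaining bad pairs have $a_2=b_1$ (at most $|N_L^+(b_1)|$ of them) or $b_2=a_1$ (at most $|N_L^-(a_1)|$ of them). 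Applying (\ref{good-edges-directed}) to write $|N_L^+(y)|=d_y^+-2\zeta_y^++\eta_y^+$, bounding $d_y^+\le\sdmax$, using (\ref{all-edges-directed}) to bound $|\hatN_L^-(b_1)|\le\sdmax-\zeta^-_{b_1}+2\eta^-_{b_1}$, and treating $|N_L^+(b_1)|$, $|N_L^-(a_1)|$ the same way, then collecting terms, yields the stated inequality.

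For (ii) the number of candidates $(a_3,b_3)$ with $L(a_3,b_3)=1$ is again $m-2p+q$, and I would split the bad pairs into two overlapping groups which together cover all of them: those that meet $\{a_2,b_2\}$ or have $L(a_3,b_2)\neq 0$, and those that meet $\{a_1,b_1\}$ or have $L(a_1,b_3)\neq 0$. The first group is bounded exactly by the estimate of part (i) applied with the pair $(a_2,b_2)$ in the role of $(a_1,b_1)$ (the head $b_2$ playing the role of $b_1$), which is legitimate since $L(a_2,b_2)=1\neq0$. The second group is bounded by the arc-reversed version of part (i), in which $a_1$ plays the role of a tail: one counts such pairs via $b_3\in\hatN_L^+(a_1)$ and $a_3\in N_L^-(b_3)$, uses $L(a_1,b_1)\neq 0$ to absorb the cases $a_3=a_1$ and $b_3=b_1$, and is left with the correction terms $|N_L^-(a_1)|+|N_L^+(b_1)|$, giving a bound with the sum $\sum_{x\in\hatN_L^+(a_1)}(\eta^-_x-2\zeta^-_x)$. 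Adding the two bounds, the two $\sdmax(\,\cdot\,)$ terms combine to $\sdmax\bigl(2\sdmax-(\zeta^+_{a_1}+\zeta^-_{b_2})+2(\eta^+_{a_1}+\eta^-_{b_2})+4\bigr)$, the linear correction terms sum to $\eta^\ast-2\zeta^\ast$, and the two remaining sums are exactly those appearing in the statement; subtracting from $m-2p+q$ finishes (ii).

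The main obstacle I anticipate is the combinatorial bookkeeping rather than any substantive new idea: one must check that the two bad-pair groups in (ii) really do cover every bad choice of $(a_3,b_3)$, and that in each invocation the few correction terms $|N_L^\pm(\cdot)|$ genuinely capture all bad pairs not already counted by the relevant sum over $\hatN_L^\pm$. This is exactly where the undirected argument slipped in~\cite{SODA} (a defect arc between a new vertex and $b_1$ was overlooked), so one has to be scrupulous about which bad pairs satisfy $L(\cdot,\cdot)\neq 0$ as opposed to $L(\cdot,\cdot)=1$, and about keeping the $+$ and $-$ superscripts attached to the correct vertices (heads versus tails). Once that is nailed down, what remains is routine: substitute (\ref{good-edges-directed}) and (\ref{all-edges-directed}), bound every semi-degree by $\sdmax$, and collect terms.
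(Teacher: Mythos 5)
Your proposal is correct and follows essentially the same route as the paper: part (i) is the undirected argument of Lemma~\ref{useful-bounds} transplanted with the directed degree identities (\ref{good-edges-directed}) and (\ref{all-edges-directed}), and part (ii) is obtained exactly as in the paper by covering the bad pairs with two groups, one bounded by (i) applied to $(a_2,b_2)$ and the other by the arc-reversed form of (i). Your explicit accounting of the correction terms $|N_L^+(b_1)|$ and $|N_L^-(a_1)|$ (and their reversed counterparts) correctly replaces the ``two orientations'' bookkeeping of the undirected case, and the assembled bounds match the stated formulae.
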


\begin{proof}
The proof of (i) follows exactly as in the undirected case. For (ii), an upper bound on the number
of bad choices of $(a_3,b_3)$ can be obtained by summing two terms. The first term is the number of choices of
$(a_3,b_3)$ such that $\{ a_3,b_3\}\cap \{a_2,b_2\}\neq \emptyset$ or $L(a_3,b_2)\neq 0$.
An upper bound on this number is given by (i) after replacing $a_1$ by $a_2$ and $b_1$ by $b_2$.
The second term is the number of choices of $(a_3,b_3)$ such that $\{ a_3,b_3\}\cap \{a_1,b_1\}\neq \emptyset$
or $L(a_1,b_3)\neq 0$.  An upper bound on this number is given by (i) after reversing all arcs: that is,
by exchanging the roles of $a$ and $b$, by exchanging the superscripts ``$+$'' and ``$-$'', and (for clarity)
replacing the dummy variable $y$ by $x$.  The proof of (ii) is completed by adding these two terms together
and subtracting them from $m-2p+q$, using the definition of $\eta^\ast$ and $\zeta^\ast$.
\end{proof}

The following lemma is the ``critical lemma'' for the directed case,
which we prove in the irregular setting by adapting the proof of 
Lemma~\ref{3switches}.

\begin{lemma} \label{boundonellezeta}
Suppose that the directed degree sequence $\vecdvec$ satisfies
$\sdmin \geq 1$ and $2\leq \sdmax \leq \frac{1}{4} \sqrt m$. 
Let $Z \in \Omegad$. Then 
$$|\mathcal{L}(Z)| \leq \dfrac{1}{8}  m^8 \, | \Omegad|.$$
\end{lemma}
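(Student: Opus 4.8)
\emph{Proof plan.}\ The strategy is to mimic the proof of Lemma~\ref{3switches}, but with Phase~1 omitted. We will show that every $L\in\mathcal{L}(Z)$ can be turned into an element of $\Omegad$ (an encoding with no defect arcs) by a sequence of at most five directed 3-switches. Since no directed analogue of Lemma~\ref{structure} is available (chiefly because of 2-circuits that are triangles), each directed 3-switch will remove exactly one defect arc, so we proceed in two phases: in \emph{Phase~A} we repeatedly remove a 2-defect arc, passing from $C(p,q)$ to $C(p-1,q)$ until we reach some $C(0,q)$; in \emph{Phase~B} we repeatedly remove a $(-1)$-defect arc, passing from $C(0,q)$ to $C(0,q-1)$ until we reach $C(0,0)=\Omegad$. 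Just as in Lemma~\ref{3switches}, each directed 3-switch yields, by double counting, an upper bound on one of the ratios $|C(p,q)|/|C(p-1,q)|$ (giving a uniform bound we call $A$) or $|C(0,q)|/|C(0,q-1)|$ (a uniform bound we call $B$), and the lemma follows from the telescoping identity $|\mathcal{L}(Z)|/|\Omegad|=\sum_{p,q}|C(p,q)|/|C(0,0)|\le\sum_{p,q}A^pB^q$.

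For \emph{Phase~A}, given $L\in C(p,q)$ with $p\ge 1$, the forward count is over directed 3-switches $(a_1,b_1,a_2,b_2,a_3,b_3)$ with $L(a_1,b_1)=2$, with $L(a_2,b_2)=L(a_3,b_3)=1$ and with $L(a_2,b_1)=L(a_3,b_2)=L(a_1,b_3)=0$; such a 3-switch turns the 2-defect arc $(a_1,b_1)$ into a non-defect arc, creates no new defect arc, and so produces an encoding in $C(p-1,q)$ that is automatically valid and consistent with $Z$. There are $p$ choices for the (oriented) arc $(a_1,b_1)$, at least $m-c_1\sdmax^2$ choices for $(a_2,b_2)$ by Lemma~\ref{useful-bounds-directed}(i), and at least $m-c_2\sdmax^2$ choices for $(a_3,b_3)$ by Lemma~\ref{useful-bounds-directed}(ii); the constants $c_1,c_2$ come from a worst-case analysis over the defect configurations of Figure~\ref{f:configuration}, exactly as in Phase~2 of Lemma~\ref{3switches}. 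Using $\sdmax\le\tfrac14\sqrt m$ this gives at least $c_3m^2$ forward 3-switches. For the reverse count, from $L'\in C(p-1,q)$ the preimage $L$ must be consistent with $Z$, so the arc $(a_1,b_1)$ (carrying label $1$ in $L'$ but label $2$ in $L$) must be a non-arc of $Z$: at most $m+3$ choices. Then $a_2\in N_{L'}^-(b_1)\setminus\{a_1\}$ and $b_3\in N_{L'}^+(a_1)\setminus\{b_1\}$ may be chosen in at most $c_4\sdmax^2$ ways (by~(\ref{good-edges-directed}), each of these two sets having size at most $\sdmax+O(1)$), and $(a_3,b_2)$ in at most $m+3$ ways, for a total of at most $c_5\sdmax^2m^2$. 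Double counting then gives $|C(p,q)|/|C(p-1,q)|\le(c_5/c_3)\sdmax^2=:A$.

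For \emph{Phase~B}, given $L\in C(0,q)$ with $q\ge 1$ (so that $\zeta_v^\pm=0$ for every $v$, which simplifies Lemma~\ref{useful-bounds-directed}), the forward count is over directed 3-switches with $L(a_2,b_1)=-1$, with $L(a_1,b_1)=L(a_2,b_2)=L(a_3,b_3)=1$ and with $L(a_3,b_2)=L(a_1,b_3)=0$, which remove the $(-1)$-defect arc $(a_2,b_1)$ and land in $C(0,q-1)$. There are $q$ choices for $(a_2,b_1)$; then $a_1\in N_L^-(b_1)$ in at least $d_{b_1}^-+1\ge 2$ ways, since $(a_2,b_1)$ is a $(-1)$-defect arc with head $b_1$; then $b_2\in N_L^+(a_2)\setminus\{a_1\}$ in at least $d_{a_2}^+\ge 1$ ways; and finally $(a_3,b_3)$ in at least $m-c_6\sdmax^2$ ways by Lemma~\ref{useful-bounds-directed}(ii), the worst case being the directed analogue of Figure~\ref{worstcase} with its label-$2$ arc relabelled $1$. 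Using $\sdmax\le\tfrac14\sqrt m$ this gives at least $c_7m$ forward 3-switches. For the reverse count, $L(a_2,b_1)=-1$ forces $(a_2,b_1)$ to be an arc of $Z$ (at most $m$ choices), and then $(a_3,b_2)$ and $(a_1,b_3)$ are non-defect arcs of $L'$ (at most $m$ choices each), so at most $m^3$ in total; hence $|C(0,q)|/|C(0,q-1)|\le(1/c_7)m^2=:B$. Finally, since $C(3,3)=\emptyset$ and $B$ is much larger than $A$, the sum $\sum_{p,q}A^pB^q$ is dominated by the single term $A^2B^3$, which is $O(m^8)$ because $A=O(\sdmax^2)=O(m)$ and $B=O(m^2)$; tracking the (deliberately unoptimised) constants carefully yields $|\mathcal{L}(Z)|\le\tfrac18 m^8\,|\Omegad|$.

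I expect the main obstacle to be the worst-case analysis of the two forward counts: for every configuration of defect arcs permitted by Lemma~\ref{configuration}(ii) — including the directed triangles, which have no undirected counterpart — one must check that placing defect arcs adversarially around the vertices chosen so far still leaves at least $m-O(\sdmax^2)$ legal continuations for $(a_2,b_2)$ and for $(a_3,b_3)$. Because $\sdmax$ is only assumed to be at least $2$ rather than $3$, the error terms are somewhat larger than in Lemma~\ref{3switches}, so the constants $c_1,\dots,c_7$ must be kept small enough that $A^2B^3$ stays below $\tfrac18 m^8$; this bookkeeping, rather than any single conceptual step, is the delicate part.
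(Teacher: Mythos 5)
Your plan is correct and follows essentially the same route as the paper's proof: two phases (Phase A removing one $2$-defect arc per directed 3-switch via the ratio $|C(p,q)|/|C(p-1,q)| = O(\sdmax^2)$, Phase B removing one $(-1)$-defect arc via $|C(0,q)|/|C(0,q-1)| = O(m^2)$), each bounded by forward/reverse double counting with Lemma~\ref{useful-bounds-directed}, and a consolidation dominated by the $A^2B^3$ term since $C(3,3)=\emptyset$. The paper carries out the worst-case configuration analysis you defer, obtaining $m-5\sdmax^2$ and $m-9\sdmax^2$ for the two forward choices in Phase A and $m-8\sdmax^2$ in Phase B, yielding $A=\frac{16}{3}\sdmax^2\le\frac{1}{3}m$ and $B=m^2$, which indeed gives the stated $\frac{1}{8}m^8$.
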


\begin{proof}
We prove that any $L \in \mathcal{L}(Z)$ can be transformed into 
an element of $\Omegad$ (with no defect arcs) using a sequence of at 
most five directed 3-switches.  
An $A$-switch reduces the number of 2-defect
arcs by one, without changing the number of $(-1)$-defect arcs.
A $B$-switch takes an encoding with no 2-defect arcs, and reduces
the number of $(-1)$-defect arcs by one.
Given an encoding $L\in C(p,q)$, in Phase A we perform an $A$-switch $p$ times,
giving an encoding with no 2-defect arcs.  Then in Phase B we perform a
$B$-switch $q$ times, reaching an element of $\Omegad$ (that is, an encoding
with no defect arcs).

\bigskip

\noindent {\bf Phase A.}
If $p=0$ then Phase A is empty and we proceed to Phase B. 
Now assume that $p\in \{1,2,3\}$ and let $L\in C(p,q)$, where $q\in \{0,1,2,3\}$ and $p+q\leq 5$. 
We want a lower bound on the number of 6-tuples $(a_1, b_1, a_2, b_2, a_3, b_3)$ where a 3-switch can be performed with $L(a_1,b_1) = 2$: performing this switch we will produce $L' \in C(p-1,q)$, as shown in Figure~\ref{Aswitch}.
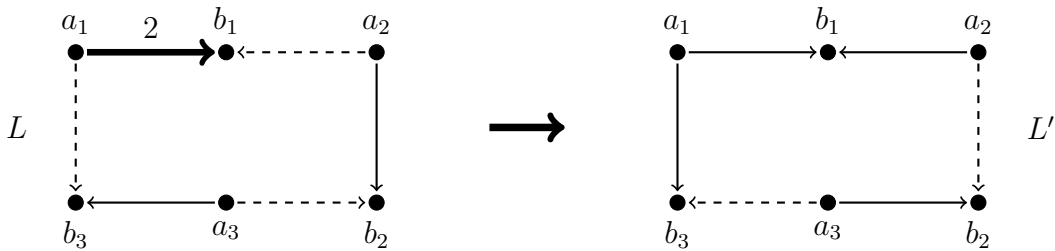
\begin{figure}[ht!] 
\begin{center}
\begin{tikzpicture}[scale = 1]
\draw[fill] (0,0) circle (0.1);
\draw[fill] (0,2) circle (0.1);
\draw[fill] (2,0) circle (0.1);
\draw[fill] (2,2) circle (0.1);
\draw[fill] (4,0) circle (0.1);
\draw[fill] (4,2) circle (0.1);
\draw[fill] (8,0) circle (0.1);
\draw[fill] (8,2) circle (0.1);
\draw[fill] (10,0) circle (0.1);
\draw[fill] (10,2) circle (0.1);
\draw[fill] (12,0) circle (0.1);
\draw[fill] (12,2) circle (0.1);
\draw[line width=2.5pt,->] (0.15,2) -- (1.85,2);
\node [above] at (1,2.05) {$2$};
\draw[thick,->] (4,1.85) -- (4,0.15);
\draw[thick,->] (1.85,0) -- (0.15,0);
\draw[dashed,thick,->] (3.85,2) -- (2.15,2);
\draw[dashed,thick,->] (2.15,0) -- (3.85,0);
\draw[dashed,thick,->] (0,1.85) -- (0,0.15);
\node [below] at (0,-0.1) {$b_3$};
\node [below] at (2,-0.1) {$a_3$};
\node [below] at (4,-0.1) {$b_2$};
\node [above] at (0,2.1) {$a_1$};
\node [left] at (-0.5,1) {$L$};
\node [above] at (2,2.1) {$b_1$};
\node [above] at (4,2.1) {$a_2$};
\draw[thick,->] (8.15,2) -- (9.85,2);
\draw[dashed,thick,->] (12,1.85) -- (12,0.15);
\draw[dashed,thick,->] (9.85,0) -- (8.15,0);
\draw[thick,->] (11.85,2) -- (10.15,2);
\draw[thick,->] (10.15,0) -- (11.85,0);
\draw[thick,->] (8,1.85) -- (8,0.15);
\node [below] at (8,-0.1) {$b_3$};
\node [below] at (10,-0.1) {$a_3$};
\node [below] at (12,-0.1) {$b_2$};
\node [right] at (12.5,1) {$L'$};
\node [above] at (12,2.1) {$a_2$};
\node [above] at (8,2.1) {$a_1$};
\node [above] at (10,2.1) {$b_1$};
\draw [->,line width = 1mm] (5.5,1) -- (6.5,1);
\end{tikzpicture}
\caption{An $A$-switch.}
\label{Aswitch}
\end{center}
\end{figure}
There are $p$ ways to choose the 2-defect arc $(a_1,b_1)$. 
Next, the number of choices of $(a_2,b_2)$ such that $a_1,b_1,a_2,b_2$
are all distinct, $L(a_2,b_2)=1$ and $L(a_2,b_1)=0$ given by
Lemma~\ref{useful-bounds-directed}(ii).
A worst case configuration for the choice of $(a_2,b_2)$ is shown in 
Figure~\ref{fig:PhaseA-worst}.
Here $\eta_{b_1}^-=2$, $\eta_{a_1}^- + \eta_{b_1}^+ = 2$ and
$\sum_{y\in \hatN_L^-(b_1)}\eta_y^+ = 2$.  (Note that we need only
consider configurations of defect arcs which are subdigraphs
of those shown in Figure~\ref{f:configuration}: in particular,
$\eta^-_{v}\leq 2$ for all $v$.)
Clearly $q=3$, and it is possible that $p=2$, though for an upper bound
the additional 2-defect arc will not be incident with any of  $a_1,b_1,a_2,b_2$.
Substituting these values into Lemma~\ref{useful-bounds-directed}(i) shows that 
the number of valid choices for $(a_2,b_2)$ is at least
\begin{align*}
 m - 1 - (\sdmax^2 + 5\sdmax + 2) \geq m - 5\sdmax^2,
\end{align*}
using the fact that $\sdmax\geq 2$.
\begin{figure}[ht!]
\begin{center}
\begin{tikzpicture}
\draw[fill] (4,2) circle (0.1);
\draw[fill] (0,2) circle (0.1);
\draw[fill] (2,2) circle (0.1);
\draw[fill] (2,0) circle (0.1);
\node [above] at (1,2.2) {$2$};
\node [above] at (3.0,2.1) {$-1$};
\node [right] at (2.0,0.7) {$-1$};
\node [below] at (0.9,1.75) {$-1$};
\draw[thick,->] (2,0) -- (2,1.85);
\draw[thick,->] (4,2) -- (2.15,2);
\draw (0,2) [->,thick] arc (120:62:1.9);
\draw (2,2) [->,thick] arc (-60:-118:1.9);
\node [above] at (0,2.1) {$a_1$};
\node [above] at (2,2.1) {$b_1$};
\end{tikzpicture}
\caption{A worst-case configuration for the choice of $(a_2,b_2)$ in Phase A.}
\label{fig:PhaseA-worst}
\end{center}
\end{figure}
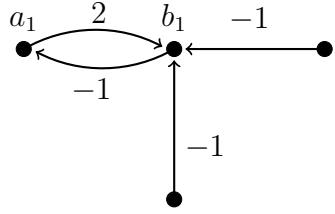

Finally we must choose $(a_3,b_3)$ such that the six chosen vertices
are distinct, $L(a_3,b_3)=1$ and $L(a_3,b_2)=L(a_1,b_3)=0$.
We obtain an upper bound on the number of valid choices for $(a_3,b_3)$ by
applying Lemma~\ref{useful-bounds-directed}(ii). 
A worst case configuration for the choice of $(a_3,b_3)$ is shown below.
\begin{center}
\begin{tikzpicture}
\draw[fill] (0,2) circle (0.1);
\draw[fill] (2,2) circle (0.1);
\draw[fill] (4,0) circle (0.1);
\draw[fill] (4,2) circle (0.1);
\node [above] at (1,2.0) {$2$};
\node [above] at (3,2.7) {$-1$};
\node [right] at (2.9,1.2) {$-1$};
\node [left] at (2.0,0.7) {$-1$};
\node [right] at (4.0,1.0) {$1$};
\draw[thick,->] (4,2) -- (4,0.15);
\draw[thick,->] (0,2.00) -- (1.85,2.00);
\draw[thick,->] (0,2) -- (3.7,0.10);
\draw[thick,->] (2,2) -- (3.85,0.15);
\draw (0,2) [->,thick] arc (135:48:2.8);
\node [above] at (0,2.1) {$a_1$};
\node [above] at (2,2.1) {$b_1$};
\node [above] at (4,2.1) {$a_2$};
\node [below] at (4,-0.1) {$b_2$};
\end{tikzpicture}
\end{center}
%
\noindent Here 
\begin{align*} 
\eta^* = 2,\quad \zeta^* = 0, \quad & \eta^+_{a_1} + \eta^-_{b_2} = 4,\quad \zeta^+_{a_1} + \zeta^-_{b_2} = 1,\\
\sum_{x\in\hatN_L^+(a_1)} (\eta^-_x - 2\zeta^-_x) = 1, \qquad & \sum_{y\in\hatN^-_L(b_2)} (\eta^+_y - 2\zeta^+_y) = 2.
\end{align*}
Clearly $q=3$, and it is possible that $p=2$, though for a worst case
the remaining 2-defect arc will not be incident with any of the vertices
shown.
Substituting these values into Lemma~\ref{useful-bounds-directed}(ii),
we find that the number of valid choices for $(a_3,b_3)$ is at least
\begin{align*}
  m-1 - (2\sdmax^2 + 11\sdmax + 5) \geq m - 9\sdmax^2.
\end{align*}

Therefore, using the upper bound on $\sdmax$, there are at least
\begin{equation} \label{Adirect}
p\,(m- 5\sdmax^2)(m-9\sdmax^2) \geq \dfrac{77}{256} m^2 > \dfrac{3}{10} m^2
\end{equation}
choices of $A$-switch which can be performed
in $L$ to give an element of $C(p-1,q)$.

Now we consider the reverse operation. Let $L' \in C(p-1,q)$, where
$p \in \{1,2,3 \}$ and  $q \in \{0,1,2,3 \}$. 
We want an upper bound on the number of 6-tuples
$(a_1,b_1,a_2,b_2,a_3,b_3)$ 
such that $L'(a_1,b_1) = L'(a_2,b_1) = L'(a_3,b_2) = L'(a_1,b_3) = 1$ and 
$L'(a_2,b_2)=L'(a_3,b_3)=0$. 
There are at most $m-2(p-1)+q \leq m + 3$ choices for the arc $(a_1,b_1)$ with 
$L'(a_1,b_1)=1$, and then there are at most
\[ (d_{a_1}^+ + (\eta_{a_1}^+)' -1 ) (d_{b_1}^- + (\eta_{b_1}^-)' - 1)
  \leq \sdmax(\sdmax+1) \leq \dfrac{3}{2}\sdmax^2
    \]
choices for the ordered pair $(a_2,b_3)$, since $\sdmax\geq 2$.
Here $(\eta_{b_1}^-)'$, respectively $(\eta_{a_1}^+)'$, denotes the number
of $(-1)$-defect arcs in $L'$ with $b_1$ as head (respectively,
with $a_1$ as tail).
The worst case is when $q=3$ and $\{ (\eta_{a_1}^+)' ,\, (\eta_{b_1}^-)' \} = \{ 1,2\}$.
(Note that $L'(a_1,b_1)=1$, so there can be no ($-1$)-defect
arc from $a_1$ to $b_1$.)

Finally, there are at most $m-2(p-1)+q-3 \leq m$ choices
for the arc $(a_3,b_2)$, since this arc must be distinct from the
3 arcs chosen so far.  Hence, the number of 6-tuples where 
the reverse operation can be performed in $L'$ is at most
\[ \dfrac{3}{2}\sdmax^2\, m(m+3) \leq \dfrac{201}{128}\sdmax^2\, m^2 < \dfrac{8}{5}\, \sdmax^2\, m^2,\]
using the fact that $m\geq 16\sdmax^2 \geq 64$.

Combining this upper bound with $\eqref{Adirect}$ gives, by double counting,
\begin{equation} \label{a}
\frac{|C(p,q)|}{|C(p-1,q)|} \leq \dfrac{16}{3}\sdmax^2.
\end{equation}

\bigskip
\noindent {\bf Phase B.} 
We proceed with Phase B when there are no 2-defect arcs in our encoding. 

Let $L \in C(0,q)$ with $q \in \{1,2,3\}$. We want a lower bound  
to the number of 6-tuples $(a_1,b_1,a_2,b_2,a_3,b_3)$ where a 3-switch 
can be performed in $L$ to produce an encoding $L' \in C(0,q-1)$, 
as shown in Figure~\ref{Bswitch}.

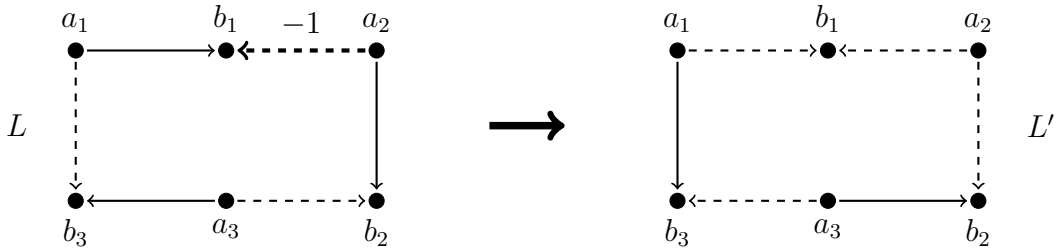
\begin{figure}[ht!] 
\begin{center}
\begin{tikzpicture}[scale = 1]
\draw[fill] (0,0) circle (0.1);
\draw[fill] (0,2) circle (0.1);
\draw[fill] (2,0) circle (0.1);
\draw[fill] (2,2) circle (0.1);
\draw[fill] (4,0) circle (0.1);
\draw[fill] (4,2) circle (0.1);
\draw[fill] (8,0) circle (0.1);
\draw[fill] (8,2) circle (0.1);
\draw[fill] (10,0) circle (0.1);
\draw[fill] (10,2) circle (0.1);
\draw[fill] (12,0) circle (0.1);
\draw[fill] (12,2) circle (0.1);
\draw[thick,->] (0.15,2) -- (1.85,2);
\draw[thick,->] (4,1.85) -- (4,0.15);
\draw[thick,->] (1.85,0) -- (0.15,0);
\draw[ultra thick, dashed,->] (3.85,2) -- (2.15,2);
\draw[dashed,thick,->] (2.15,0) -- (3.85,0);
\draw[dashed,thick,->] (0,1.85) -- (0,0.15);
\node [below] at (0,-0.1) {$b_3$};
\node [below] at (2,-0.1) {$a_3$};
\node [below] at (4,-0.1) {$b_2$};
\node [above] at (0,2.1) {$a_1$};
\node [left] at (-0.5,1) {$L$};
\node [above] at (2,2.1) {$b_1$};
\node [above] at (4,2.1) {$a_2$};
\node [above] at (3,2.05) {$-1$};
\draw[dashed,thick,->] (8.15,2) -- (9.85,2);
\draw[dashed,thick,->] (12,1.85) -- (12,0.15);
\draw[dashed,thick,->] (9.85,0) -- (8.15,0);
\draw[dashed,thick,->] (11.85,2) -- (10.15,2);
\draw[thick,->] (10.15,0) -- (11.85,0);
\draw[thick,->] (8,1.85) -- (8,0.15);
\node [below] at (8,-0.1) {$b_3$};
\node [below] at (10,-0.1) {$a_3$};
\node [below] at (12,-0.1) {$b_2$};
\node [above] at (8,2.1) {$a_1$};
\node [above] at (10,2.1) {$b_1$};
\node [above] at (12,2.1) {$a_2$};
\node [right] at (12.5,1) {$L'$};
\draw [->,line width = 1mm] (5.5,1) -- (6.5,1);
\end{tikzpicture}
\caption{A $B$-switch.}
\label{Bswitch}
\end{center}
\end{figure}

There are exactly $q \geq 1$ choices for the ($-1$)-defect arc $(a_2,b_1)$.
Then we can choose $a_1\in N_L^-(b_1)$ in $d_{b_1}^- + \eta_{b_1}^- \geq 2$
ways, and we can choose $b_2\in N_L^+(a_2)\setminus \{ a_1\}$ in
$d_{a_2}^+ + \eta_{a_2}^+ - 1 \geq 1$ way.  It remains to choose
$(a_3,b_3)$ with all six chosen vertices distinct, $L(a_3,b_3)=1$
and $L(a_3,b_2)=L(a_1,b_3)=0$.  
A lower bound for the number of
valid choices of $(a_3,b_3)$ is given by Lemma~\ref{useful-bounds-directed}(ii).
A worst case configuration for the choice of $(a_3,b_3)$ is shown
in Figure~\ref{worstB}.
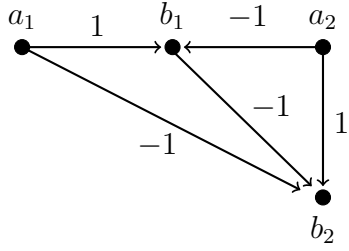
\begin{figure}[ht!]
\begin{center}
\begin{tikzpicture}
\draw[fill] (0,2) circle (0.1);
\draw[fill] (2,2) circle (0.1);
\draw[fill] (4,0) circle (0.1);
\draw[fill] (4,2) circle (0.1);
\node [below] at (1.8,1.0) {$-1$};
\node [above] at (3,2.1) {$-1$};
\node [right] at (2.9,1.2) {$-1$};
\node [right] at (4.0,1.0) {$1$};
\draw[thick,->] (4,2) -- (4,0.15);
\draw[thick,->] (4,2.00) -- (2.15,2.00);
\draw[thick,->] (0,2) -- (3.7,0.10);
\draw[thick,->] (0,2) -- (1.85,2);
\node [above] at (1,2) {$1$};
\draw[thick,->] (2,1.95) -- (3.85,0.15);
\node [above] at (0,2.1) {$a_1$};
\node [above] at (2,2.1) {$b_1$};
\node [above] at (4,2.1) {$a_2$};
\node [below] at (4,-0.1) {$b_2$};
\end{tikzpicture}
\caption{A worst case configuration for the choice of $(a_3,b_3)$ in Phase B.}
\label{worstB}
\end{center}
\end{figure}

\noindent Here $\eta_{a_1}^+ + \eta_{b_2}^- = 3$ and $\eta^\ast = 1$, with
$\sum_{x\in \hatN_L^+(a_1)}\eta_x^- = 
\sum_{y\in\hatN_L^-(b_2)}\eta_y^+ = 3$.
Hence the number of valid choices for $(a_3,b_3)$ is at least
\begin{align*}
m+3 - (2\sdmax^2 + 10\sdmax + 7)\geq m - 8\sdmax^2,
\end{align*}
since $\sdmax\geq 2$.

Putting this together, the number of 6-tuples 
$(a_1,b_1,a_2,b_2,a_3,b_3)$ where a B-switch can be performed in $L$
to give an element of $C(0,q-1)$ is at least
\begin{equation} \label{Bdirect}
2q \, (m-8\sdmax^2) \geq m,
\end{equation}
using the stated upper bound on $\sdmax$.

Next we consider the reverse operation. Let $L' \in C(0,q-1)$, where
$q \in \{1,2,3 \}$. We want an upper bound on the number of 6-tuples with
$L'(a_3,b_2) = L'(a_1, b_3) = 1$ and $L'(a_1,b_1)=L'(a_2,b_1)=L'(a_2,b_2)=
L'(a_3,b_3)=0$.
The encoding $L$ produced by the reverse operation must
be consistent with $Z$ after the reverse operation. 
Since $L(a_2,b_1)=-1$, this implies that $(a_2,b_1)$ must be an arc of $Z$. 
Hence there are at most $m$ choices for $(a_2,b_1)$. 
Next we choose $(a_1,b_3)$ with $L(a_1,b_3)=1$.
For an upper bound, we only check that the arc $(a_1,b_3)$ is not
incident with $b_1$ or $a_2$, ruling out at least 
\[ d_{a_2}^- + d_{a_2}^+ +  d_{b_1}^- + d_{b_1}^+ - 2\geq 2\]
choices.  (The $-2$ term avoids any double-counting.) 
Therefore there are at most
\[ m + (q-1)  - 2 \leq m
\]
choices for $(a_1,b_3)$.
The same argument shows that there are at most $m$ choices for the
arc $(a_3,b_2)$.  It follows that
the number of 6-tuples where the reverse operation can be performed 
in $L'$ to produce an element of $C(0,q)$ is at most $m^3$.

Combining this bound with $\eqref{Bdirect}$ gives, by double counting,
\begin{equation} \label{b}
\frac{|C(0,q)|}{|C(0,q-1)|} \leq m^2.
\end{equation}

\bigskip

\noindent {\bf Consolidation.}
Define
\[ a = \dfrac{16}{3} \sdmax^2,\qquad b = m^2.\]
Then, recalling that $|C(3,3)|=0$,
\begin{align*}
\frac{|\mathcal{L}(Z)|}{|\Omegad|} =  \sum_{p=0}^3\sum_{q=0}^3 \, \frac{|C(p,q)|}{|C(0,0)|} 
  \leq & \,  (1 + a + a^2 + a^3)(1 + b + b^2 + b^3) - a^3b^3\\
\leq & \, \dfrac{1}{8} m^8. 
\end{align*}
The last inequality follows since $\sdmax\geq 2$ amd $m\geq 16\sdmax^2\geq 64$.
This completes the proof of Lemma \ref{boundonellezeta}.
\end{proof}

Now we can prove our upper bound on the mixing time of the directed
switch chain.

\begin{proof}[Proof of Theorem~\ref{main-directed}.]
Recall the definitions from Section~\ref{ss:flow}.
It follows from the bipartite model of directed graphs that
\[ |\Omegad| \leq m! \leq \sqrt{2\pi m}\, \left(\frac{m}{e}\right)^m.\]
Therefore the smallest stationary probability $\pi^*$ satisfies
 $\log( 1/\pi*) = \log|\Omegad| \leq m\log m$. 
Next, observe that $\ell(f)\leq m$ since each transition along a canonical
path replaces an edge of $G$ by an edge of $G'$. Finally, if $e=(Z,Z')$
is a transition of the directed switch chain then
$1/Q(e) = \binom{m}{2}\, |\Omegad|$.  Combining this with
Lemmas~\ref{effee} and~\ref{boundonellezeta} 
gives $\rho(f) \leq \nfrac{1}{4}\, \sdmax^{16} m^{10}$.
Substituting these expressions into (\ref{flowbound}) completes the
proof.
\end{proof}

\end{document}